\documentclass[a4paper]{article}
\usepackage[T1]{fontenc}
\usepackage[utf8]{inputenc}
\usepackage{graphicx}
\usepackage{hyperref}
\usepackage{amssymb}
\usepackage{amsmath,bm}
\usepackage{lmodern} 
\usepackage{textcomp} 
\usepackage{sectsty}
\usepackage{fancybox}
\usepackage{listings}
\usepackage{charter}
\usepackage{epsfig}
\usepackage{tabularx}
\usepackage{latexsym}
\usepackage{alltt}
\usepackage{setspace}
\usepackage{datetime}
\newdateformat{mydate}{\monthname[\THEMONTH], \THEYEAR}
\usepackage[ruled,vlined]{algorithm2e}
\usepackage{caption}
\usepackage{enumitem}
\def\qed{\hfill$\Box$}

\usepackage[showframe=false, left=3cm, top=2cm, right=3cm, bottom=2cm ]{geometry}
\usepackage{appendix}
\usepackage{authblk}

\def\qed{\hfill$\Box$}

\newtheorem{corollary}{Corollary~}

\newtheorem{theorem}{Theorem~}
\newtheorem{proposition}{Proposition~}

\newtheorem{observation}{Observation~}

\newtheorem{lemma}{Lemma~}
\def\abstract{{\begin{center}
\Large {\bf Abstract}
\end{center} }}



\lstset{language=C++}
\setlength{\headheight}{20pt}

\makeatletter

\def\@seccntformat#1{\@ifundefined{#1@cntformat}%
   {\csname the#1\endcsname\quad}  
   {\csname #1@cntformat\endcsname}
}
\makeatother
\onehalfspacing
\begin{document}
\title{Bicoloring covers for graphs and hypergraphs}
\author{Tapas Kumar Mishra
\qquad
Sudebkumar Prasant Pal
         }
\affil{ Department of Computer Science and Engineering\\ IIT Kharagpur 721302 \\ India}
         
\date{}
\maketitle

\begin{abstract}
Let the {\it bicoloring cover number $\chi^c(G)$} for a hypergraph $G(V,E)$ be the 
minimum number of bicolorings of vertices of $G$ such 
that every hyperedge $e\in E$ of $G$
is properly bicolored in at least one of the $\chi^c(G)$ bicolorings. 
We establish a tight bound for $\chi^c(K_n^k)$, 
where  $K_n^k$ is the complete $k$-uniform hypergraph on $n$ vertices.
We investigate the relationship between $\chi^c (G)$,  
matchings,  hitting sets, $\alpha(G)$(independence number) 
and $\chi(G)$  (chromatic number). 
We design a factor $O(\frac{\log n}{\log \log n-\log \log \log n})$ 
approximation algorithm for computing a bicoloring 
cover.
We define a new parameter for 
hypergraphs - "cover independence number $\gamma(G)$" and 
prove that $\log \frac{|V|}{\gamma(G)}$ and $\frac{|V|}{2\gamma(G)}$ are 
lower bounds for $\chi^c(G)$ and $\chi(G)$, respectively. 
We show that $\chi^c(G)$ can be approximated by a 
polynomial time algorithm achieving approximation ratio  
$\frac{1}{1-t}$, 
if $\gamma(G)=n^t$, where $t<1$. 
We also construct a particular class of hypergraphs $G(V,E)$ 
called {\it cover friendly} hypergraphs 
where the ratio of $\alpha(G)$ to $\gamma(G)$ can be
arbitrarily large.
We prove that for any $t\geq 1$, 
there exists a $k$-uniform hypergraph $G$ 
such that the {\it clique number} $\omega(G)=k$ and $\chi^c(G) > t$.
Let $m(k,x)$ denote the minimum number of hyperedges 
such that some $k$-uniform hypergraph $G$ with $m(k,x)$ 
hyperedges does not have a 
bicoloring cover of size $x$. 
We show that $ 2^{(k-1)x-1} < m(k,x) \leq x \cdot k^2 \cdot 2^{(k+1)x+2}$.
Let the {\it dependency $d(G)$} of $G$ be the 
maximum number of hyperedge neighbors of any 
hyperedge in $G$. 
We propose an algorithm for computing a 
bicoloring cover of size $x$ for $G$ if 
$d(G) \leq(\frac{2^{x(k-1)}}{e}-1)$ 
using 
$nx+kx\frac{m}{d}$ 
random bits.
\end{abstract}

{Keywords:} Hypergraph bicoloring, local lemma, probabilistic method, Kolmogorov complexity, approximation
\bigskip

\section{Introduction}
\label{sec:intro}

We define the {\it bicoloring cover number} $\chi^c(G)$ for a hypergraph $G(V,E)$
as the 
minimum number of bicolorings such that every hyperedge $e\in E$ of $G$
is properly bicolored in at least one of the $\chi^c$ bicolorings.
Let ${\cal X}$ be a set of bicolorings $\{X_1,X_2,...,X_t\}$. 
Then ${\cal X}$ is a bicoloring cover for $G$ if for each hyperedge $e$
of $G$, there is an integer $i\in \{1,2,...t\}$, such that $e$ is non-monochromatic 
with respect to bicoloring $X_{i}$.

Consider the scenario where $n$ doctors can each 
be assigned one 
of two kinds of {\it tasks}; either he can see patients or 
perform operations. All doctors are equivalent and can 
perform only one of the two tasks in each {\it group}. 
There are $m$ groups made from this set of $n$ doctors viz., 
$E_1,E_2,...,E_m$, where each group is of size $k$. 
Any doctor can be a member of multiple groups. 
In order to provide proper treatment, all the $k$ members 
of no group should be assigned the same task; each group must 
have at least one doctor seeing patients and at least one doctor
performing operations. 
Given $n$ doctors and $m$ {\it groups} of doctors, 
is there a possible allocation of {\it tasks} to doctors such that 
none of the groups has all doctors allocated the same
task? This problem can viewed as the hypergraph bicoloring 
problem for the $k$-uniform hypergraph $G(V,E)$, where $n=|V|$, $m=|E|$.
Here, the doctors represent vertices, the groups represent 
$k$-uniform hyperedges, and the tasks assigned to doctors represent the 
two colors for bicoloring vertices. However, there exist hypergraphs
that are not bicolorable. For such hypergraphs, it makes sense to use a
bicoloring cover with $\chi^c$ bicolorings. Instead of all $m$ groups 
of doctors being deployed simultaneously, we could have a minimum number $\chi^c$ of
deployments, one for each of the bicolorings
from a bicoloring cover for $G(V,E)$. 
Note that in any of these bicolorings,
the same doctor can serve in multiple groups. 
Observe that if we have to deploy each of the $m$ groups of
doctors effectively, then we need at least $\chi^c$ bicolorings, where each
bicoloring yields one shift of duty assignments.
The minimum number of shifts required for deploying
all the $m$ groups of doctors,
is therefore the 
bicoloring cover number $\chi^c(G)$. 
Throughout the paper, $G$ denotes a $k$-uniform hypergraph with vertex set $V$ and hyperedge set $E$, unless otherwise stated. We use $V(G)$ and $V$, and $E(G)$ and $E$ interchangeably.
All logarithms are to the base two unless specified 
otherwise.

\subsection{Related works}
Graph decomposition is a widely studied problem in graph theory. 
The main idea of the problem is whether a given graph $G(V,E)$
can it be decomposed into some family of smaller graphs i.e., 
is there a family of graphs ${\cal{H}}=\{H_1,...,H_j\}$ such that
(1). $V(H_i) \subseteq V(G)$ for all ${H_i \in {\cal{H}}}$,
(2). $\cap_{H_i \in \cal{H}} E(H_i)=\phi$ and
(3). $\cup_{H_i \in \cal{H}} E(H_i)=E(G)$.
In other words, the family of graphs ${\cal{H}}$ covers $G$,
or partitions the edge set of $G$. 
If such a ${\cal{H}}$ exists, then splitting $G$ 
into $\{H_1,...,H_j\}$ is called a ${\cal{H}}$-decomposition of $G$. 
A kind of decomposition studied requires ${\cal{H}}$ to a 
single graph (say $\{H1\}$) and checks if $G$ 
can be decomposed into multiple copies of $H1$ with 
the disjoint intersection condition omitted. 
Such a decomposition is denoted by $H1|G$. 
The family ${\cal{H}}$ may consist of paths, cycles, bipartite graphs or matchings.
For instance, consider {\it matching decomposition}, where 
in an edge-coloring of $G$, each color class is a matching. 
So, coloring edges of $G$ by $\chi_e(G)$ colors properly gives 
the minimum matching decomposition of the graph. 
Vizing's theorem \cite{West2000} states that for 
all simple graphs $G$, $\chi_e(G) \leq \Delta(G)+1$. 
As a result, there is always a matching decomposition 
of $G$ into ${\cal{H}}$ of 
size $|{\cal{H}}|=\Delta(G)+1$.
A $tK_2|G$ decomposition is splitting $G$ into multiple 
copies of  $t$ $K_2$'s i.e., matchings of size $t$. 
Bialostocki  and  Roditty \cite{BiaRodi1982} proved that $3K_2|G$ 
if and only if $3||E(G)|$ and $\Delta(G)\leq \frac{|E(G)|}{3}$, 
with a finite number of exceptions. 
Alon \cite{Alon1983} shown that for every $t>1$, 
if $|E(G)| \geq \frac{8}{3}t^2-2t$, $tK_2|G$ if and only 
if $t||E(G)|$ and $\Delta(G)\leq \frac{|E(G)|}{t}$.
Along similar lines, a significant amount of study has been done 
and there is vast literature for various kinds of decomposition of graphs 
(see \cite{ChuGra1981}). In this paper, we aim to combine the concepts of
decomposition and coloring graphs and hypergraphs.
%
%
%
%

\subsection{Our contribution}

We define $\chi^c(G)$ for a hypergraph $G(V,E)$
as the 
minimum number of bicolorings that guarantees every hyperedge $e\in E$ of $G$
is properly bicolored in at least one of the $\chi^c(G)$ bicolorings. 
In section \ref{sec:preliminaries}, (i) we derive a tight bound for $\chi^c(G)$ for the 
complete $k$-uniform hypergraph $G$,
(ii) establish upper bounds for $\chi^c (G)$
based on {\it matchings} and {\it hitting sets} of the hypergraph, 
and, (iii) design polynomial time algorithms for computing bicoloring covers. 
We also relate  $\chi^c(G)$ with independent sets and chromatic numbers and 
show that $\chi^c(G) = \lceil\log \chi(G) \rceil$.

In section \ref{sec:approx}, we present an inapproximability result about
the impossibility of approximating the bicoloring cover of  $n$-vertex 
$k$-uniform hypergraphs, to within an additive factor 
of $({1-\epsilon}) \log n $, for any fixed $\epsilon >0$ in time polynomial 
in $n$. 
For a $k$-uniform hypergraph $H(V,E)$, where $|V|=n$, 
we show that the bicoloring cover number 
$\chi^c(H)$ is $O(\frac{\log n}{\log \log n-\log \log \log n}) \allowbreak$ approximable. 

Let $C=\{C_i| C_i\text{ is a bicoloring cover of } \chi^c(G) 
\text{ bicolorings that cover }\allowbreak G(V,E)\}$.
Let $|C|=w$, where $w \leq 2^{n\chi^c(G)}$.
Each vertex receives a color bit vector due to a bicoloring cover $C_i$.
Let $\gamma_i(G)$ be the size of the largest set of vertices that receive 
the same color bit vector due to bicoloring cover $C_i$. 
$\text{Let }\gamma(G)=  \max_{\substack{
            1\leq i \leq w}}
     		\gamma_i(G)$.
We call $\gamma(G)$ the {\it cover independence} number of hypergraph $G$. 
In section \ref{sec:independentset}, we show that for any 
$k$-uniform hypergraph $G$, $\gamma(G) \geq k-1$. 
We relate $\gamma(G)$ to $\chi^c(G)$ and $\chi(G)$ and derive the 
lower bounds of $\log  \frac{|V|}{\gamma(G)}$ and 
$\frac{|V|}{2\gamma(G)}$ 
for $\chi^c(G)$ and $\chi(G)$, respectively.
We also construct a particular class of hypergraphs $G(V,E)$ 
called {\it cover friendly} 
hypergraphs where the ratio of $\alpha(G)$ to $\gamma(G)$ 
can be made arbitrarily large.
More specifically, we construct $k$-uniform hypergraphs 
$G(V,E)$ where $\alpha(G) \geq n^{1-t}$, whereas $\gamma(G)=n^t$, 
for some small fraction $0<t <0.5$. 
We show in 
Corollary \ref{corr:1} 
that 
$\chi^c(G)$ can be approximated for  
such cover friendly hypergraphs, 
with an 
approximation ratio of $\frac{1}{1-t}$, by exploiting the special 
properties of such hypergraphs. 
However, using Proposition \ref{prop:1}, 
we can only achieve an approximation ratio of at least $\frac{1}{t}$
for cover friendly hypergraphs. 
This implies that we achieve an improvement (reduction) in 
approximation ratio for estimating $\chi^c(G)$ for cover friendly hypergraphs
by a 
factor of at least $\frac{1-t}{t}$ by 
using the properties of $\gamma(G)$ in Corollary \ref{corr:1}; 
the approximation ratio achieved using Observation \ref{obs:independentsetrel}
in Proposition \ref{prop:1} is much smaller for cover friendly graphs.
Furthermore, our constant factor approximation ratio 
of $\frac{1}{1-t}$ for 
approximating $\chi^c(G)$ for cover friendly hypergraphs 
is in sharp contrast to our
$\allowbreak O(\frac{\log n}{\log \log n-\log \log \log n})$ factor algorithm
for estimating $\chi^c(G)$ for 
general hypergraphs as summarized in Theorem \ref{thm:sudakovapp} 
of Section \ref{sec:approx}.

Let $H(V',E')$ be the largest $k$-uniform subhypergraph of a 
$k$-uniform hypergraph $G(V,E)$, where $V' \subseteq V$, $E' \subseteq E$, 
and there is a hyperedge for every subset of $k$ vertices in $V'$ 
i.e., $E'=\binom{V'}{k}$. We define $\omega(G)=|V'|$.
In Section \ref{subsec:clique}, we prove that for any $t\geq 1$, 
there exists a $k$-uniform hypergraph $G$ 
where $\omega(G)=k$ and $\chi^c(G) > t$. Observe that, 
for $k=2$ (usual graphs), this result implies that triangle-free 
graphs can have arbitrarily large bicoloring cover numbers.

In sections \ref{sec:probabilistic} and \ref{sec:deprel}, we 
correlate $\chi^c(G)$ to the number $|E|$ of hyperedges, 
and the {\it dependency d(G)}, using probabilistic analysis, the Moser-Tardos 
algorithm~\cite{Moser:2010:CPG:1667053.1667060}, and 
an incremental method based on cuts in hypergraphs. 
We show that if $|E| \leq 2^{(k-1)x-1}$, then a bicoloring 
cover of size 
$x$ can be computed in polynomial time. 
We use $m(k,x)$ to denote the minimum number of hyperedges 
such that some $k$-uniform hypergraph $G$ with $m(k,x)$
hyperedges does not have a 
bicoloring cover of size $x$. 
We show that $2^{(k-1)x-1} < m(k,x) \leq x \cdot k^2 \cdot 2^{(k+1)x+2}$.
Let the {\it dependency $d(G)$} of $G$ be the maximum number of 
neighboring hyperedges of a hyperedge in $G$. 
We use the Moser-Tardos constructive approach 
for Lov\'{a}sz local lemma,
as in 
\cite{Moser:2010:CPG:1667053.1667060},  
for computing bicoloring covers of size $x$, where 
the dependency $d(G)$ of the hypergraph is 
bounded by
$\frac{2^{x(k-1)}}{e}-1$. 


\section{Preliminaries}
\label{sec:preliminaries}

\subsection{Bicoloring cover number and chromatic number for hypergraphs}
\label{subsec:chromatic}

We establish the following result relating $\chi^c(G)$ and $\chi(G)$ for arbitrary 
hypergraphs.

\begin{theorem}\label{thm:chromaticrelation}
Let $G(V,E)$ be a hypergraph. 
Let  $\chi^c(G)$ and $\chi(G)$ be the bicoloring cover number and 
chromatic number of $G$, respectively. Then,
$\chi^c(G)= \lceil \log \chi(G)\rceil$.
\end{theorem}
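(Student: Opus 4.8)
The plan is to establish the identity by proving a single clean equivalence: a bicoloring cover of size $t$ exists for $G$ if and only if $G$ admits a proper vertex coloring with at most $2^t$ colors. The theorem then follows immediately by taking the minimum over $t$ on both sides.

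The key idea is to encode $t$ bicolorings as one coloring by color bit vectors, exactly as in the setup preceding the definition of $\gamma(G)$. Given bicolorings $X_1,\dots,X_t$, I would assign to each vertex $v$ the vector $c(v)=(X_1(v),\dots,X_t(v))\in\{0,1\}^t$, which uses at most $2^t$ distinct values. The crucial observation is that a hyperedge $e$ is covered by the family (non-monochromatic in some $X_i$) exactly when its vertices do not all share the same bit vector: if every vertex of $e$ carries an identical vector, then $e$ is monochromatic in each $X_i$, while conversely a disagreement in coordinate $i$ makes $e$ non-monochromatic in $X_i$. Hence $\{X_1,\dots,X_t\}$ is a bicoloring cover precisely when $c$ is a proper coloring of $G$, giving $\chi(G)\le 2^t$.

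For the reverse direction I would invert the encoding. Starting from a proper coloring that uses $\chi(G)\le 2^t$ colors, I inject the color classes into distinct vectors of $\{0,1\}^t$ and define the $i$-th bicoloring $X_i$ as the $i$-th coordinate of each vertex's assigned vector. Properness forces the vertices of every hyperedge to occupy at least two distinct vectors, which must differ in some coordinate $i$, so $e$ is non-monochromatic in $X_i$ and the family $\{X_1,\dots,X_t\}$ covers $G$.

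Combining the two directions, a bicoloring cover of size $t$ exists if and only if $2^t\ge\chi(G)$, and the least integer $t$ satisfying this is $\lceil\log\chi(G)\rceil$, establishing $\chi^c(G)=\lceil\log\chi(G)\rceil$. I expect the only delicate point to be stating the equivalence ``$e$ is covered $\Leftrightarrow$ the bit vectors on $e$ are not all equal'' carefully in both directions, since the entire argument hinges on it; once that bridge between bicolorings and proper colorings is in place, the passage to the ceiling of the logarithm is routine integer counting.
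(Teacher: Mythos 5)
Your proposal is correct and follows essentially the same route as the paper: the paper also encodes a size-$t$ cover as a proper coloring via $t$-bit color vectors (its Lemma \ref{lemma:ch1}) and conversely splits the binary representation of a proper coloring's color indices into $\lceil \log \chi(G)\rceil$ bicolorings (its Lemma \ref{lemma:ch2}), with the same monochromatic-iff-identical-bit-vector observation at the core. Packaging the two directions as a single equivalence and minimizing over $t$ is only a cosmetic difference.
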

%


To show that $ \lceil \log \chi(G)\rceil\leq \chi^c(G)$, 
choose a bicoloring cover $C$ of size $\chi^c(G)$ for $G$. 
Each vertex $v$ of $G$ is assigned 
a set of 
$\chi^c(G)$ 
colors (bits 0 or 1), 
by the 
$\chi^c(G)$ 
bicolorings in the bicoloring cover $C$. 
Assign the decimal equivalent of the $\chi^c(G)$-bit pattern for $v$
as the color for $v$ to get a vertex-coloring $C'$ for $G$.  
The total number of colors used is at most $2^{\chi^c(G)}$. 
We claim $C'$ is a proper vertex-coloring for $G$, thereby enforcing the
inequality
$\chi(G)\leq 2^{\chi^c(G)}$ or 
$\lceil \log \chi(G) \rceil \leq \chi^c(G)$.  
For the sake of contradiction, assume that some hyperedge $e \in E(G)$ 
is monochromatic under $C'$. 
This means in each of the $\chi^c(G)$ bicolorings, 
every vertex of $e$ gets same color. As a result, $e$  
is not covered by the  $\chi^c(G)$ sized cover, which is a contradiction. 
Consequently, we have the following lemma.

\begin{lemma}
\label{lemma:ch1} 
For a hypergraph $G(V,E)$, 
$\lceil \log \chi(G) \rceil \leq \chi^c(G)$.
\end{lemma}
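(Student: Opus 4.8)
The plan is to build a proper vertex-coloring of $G$ directly from an optimal bicoloring cover and then simply count the colors it uses. First I would fix a bicoloring cover $C = \{X_1, \dots, X_{\chi^c(G)}\}$ of minimum size $\chi^c(G)$. Each bicoloring $X_i$ assigns every vertex $v$ a single bit (its color $0$ or $1$ under $X_i$), so the whole cover associates to $v$ a length-$\chi^c(G)$ bit string $b(v) = (X_1(v), \dots, X_{\chi^c(G)}(v))$. I would then define a vertex-coloring $C'$ by giving $v$ the color equal to the integer whose binary representation is $b(v)$. Since there are only $2^{\chi^c(G)}$ distinct bit strings of that length, $C'$ uses at most $2^{\chi^c(G)}$ colors.

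The substance of the argument is to verify that $C'$ is a proper coloring of the hypergraph, i.e.\ that no hyperedge is monochromatic under $C'$. I would argue by contradiction: suppose some hyperedge $e$ is monochromatic, so every vertex of $e$ receives the same integer color, which forces $b(u) = b(v)$ for all $u, v \in e$. Reading this off coordinatewise, $X_i(u) = X_i(v)$ for every $u, v \in e$ and every index $i$, meaning $e$ is monochromatic in each of the $\chi^c(G)$ bicolorings. But that contradicts the defining property of a bicoloring cover, which guarantees at least one index $i$ for which $e$ is properly bicolored (non-monochromatic) under $X_i$. Hence no such $e$ exists and $C'$ is proper.

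The key step is this contrapositive use of the cover property, and it is really the only place where the hypergraph structure enters; everything else is bookkeeping on bit strings. Once $C'$ is known to be proper, it certifies $\chi(G) \leq 2^{\chi^c(G)}$, and taking base-two logarithms and rounding up yields $\lceil \log \chi(G) \rceil \leq \chi^c(G)$, exactly the claimed inequality. I expect no genuine obstacle here: the argument is short, and the only point requiring care is matching the paper's convention that a proper hypergraph coloring forbids monochromatic hyperedges (rather than demanding rainbow hyperedges), so that "non-monochromatic in one bicoloring" correctly translates into "not monochromatic under $C'$."
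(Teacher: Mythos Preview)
Your proposal is correct and follows essentially the same argument as the paper: build the bit-string coloring from an optimal bicoloring cover, note it uses at most $2^{\chi^c(G)}$ colors, and argue by contradiction that a monochromatic hyperedge under this coloring would be monochromatic in every bicoloring of the cover. The paper's proof is identical in structure and detail, down to the decimal-equivalent coloring and the contradiction step.
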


To prove the second inequality, consider a proper 
coloring $C$ of the vertices of $G$ with $\chi(G)$ colors. 
Construct the bicoloring cover $X$ 
of size $\lceil \log \chi(G) \rceil$ by assigning 
the vertices with two colors determined by the 0/1 bits of the 
color they were assigned under proper coloring $C$; 
a vertex $v$ is assigned the $i$th bit of the color assigned to it 
under coloring $C$ for the 0/1 bicoloring of $v$ in the $i$th 
bicoloring of the bicoloring cover $X$, 
$1\leq i\leq \lceil \log \chi(G) \rceil$. 
Assume for the sake of contradiction that some 
$e \in E(G)$ is not covered under bicoloring cover $X$. 
This means every vertex of $e$ has the same bit vector of length
$\lceil \log \chi(G) \rceil$, and therefore has the same color under 
coloring $C$, a contradiction. Consequently, we have the following lemma.

\begin{lemma}
\label{lemma:ch2}
For a hypergraph $G(V,E)$, 
$\chi^c(G)\leq \lceil \log \chi(G)\rceil$.
\end{lemma}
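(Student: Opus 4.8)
The plan is to construct a bicoloring cover of size $\lceil \log \chi(G)\rceil$ directly from an optimal proper coloring of $G$ via a binary encoding of its color classes. First I would fix a proper vertex-coloring $C$ of $G$ using exactly $\chi(G)$ colors and relabel the colors by the integers $0,1,\dots,\chi(G)-1$. Writing $\ell = \lceil \log \chi(G)\rceil$, there are at most $2^{\ell}$ such labels, so each color can be represented as a binary string of length $\ell$.

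Next I would read off the individual bit positions to define $\ell$ bicolorings $X_1,\dots,X_\ell$: in the $i$-th bicoloring, a vertex $v$ is assigned the $i$-th bit of the binary encoding of $C(v)$. This yields a candidate bicoloring cover $X=\{X_1,\dots,X_\ell\}$ of the desired size, so it remains only to verify that $X$ is indeed a cover.

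The crux is to check that $X$ covers every hyperedge. Fix an arbitrary hyperedge $e\in E$. Since $C$ is a proper coloring of the hypergraph, $e$ is not monochromatic under $C$, so it contains two vertices $u,w$ with $C(u)\neq C(w)$. Their length-$\ell$ encodings then differ in some position $i$, so $u$ and $w$ receive different bits under $X_i$; hence $e$ is non-monochromatic in $X_i$ and is therefore covered. As $e$ was arbitrary, $X$ is a bicoloring cover, which gives $\chi^c(G)\le \ell = \lceil \log \chi(G)\rceil$.

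I do not expect a substantial obstacle in this direction; the one point requiring care is invoking the correct notion of a proper hypergraph coloring — namely that each hyperedge merely avoids being monochromatic rather than requiring all its vertices to be distinctly colored — since this is precisely what supplies the pair $u,w$ of differently colored vertices used in the covering argument. Combined with Lemma~\ref{lemma:ch1}, this would complete the proof of Theorem~\ref{thm:chromaticrelation}.
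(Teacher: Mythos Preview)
Your proposal is correct and is essentially the same argument as the paper's: take an optimal proper coloring, encode each color in $\lceil \log \chi(G)\rceil$ bits, and let the $i$th bit define the $i$th bicoloring; a hyperedge monochromatic in every bicoloring would force all its vertices to share the same bit vector and hence the same original color, contradicting properness. The only cosmetic difference is that the paper phrases the verification by contradiction while you argue directly via two differently colored vertices $u,w$ in the hyperedge.
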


Theorem \ref{thm:chromaticrelation} follows
from Lemmas \ref{lemma:ch1} and \ref{lemma:ch2}. 
The following lemma is a direct consequence  of Theorem \ref{thm:chromaticrelation}.

\begin{lemma}\label{lemma:indpset}
Let $G(V,E)$ be a $k$-uniform hypergraph, 
and $\{I_1$, $I_2$,...,$I_{u}\}$ be a
partition of the vertex set $V$ into independent sets. 
Then there exists a bicoloring cover for $G$ of size $\lceil \log u\rceil$.
\end{lemma}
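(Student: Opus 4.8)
The plan is to deduce the lemma directly from Theorem~\ref{thm:chromaticrelation}, and the one substantive point to pin down is that a partition of $V$ into independent sets is the same object as a proper vertex-coloring of the hypergraph $G$. Recall that in a $k$-uniform hypergraph a set $I\subseteq V$ is independent precisely when it contains no hyperedge, while a proper coloring is one in which no hyperedge is monochromatic. Hence each color class of a proper coloring is independent, and conversely a partition $\{I_1,\dots,I_u\}$ into independent sets, read as the assignment sending each vertex to the index of the block containing it, is a coloring in which no hyperedge lies inside a single block; that is exactly a proper coloring of $G$ with $u$ colors. Therefore $\chi(G)\le u$.

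With that identification in hand the rest is immediate. Since $\chi(G)\le u$ and both $\log$ and $\lceil\,\cdot\,\rceil$ are nondecreasing, we get $\lceil \log \chi(G)\rceil \le \lceil \log u\rceil$. Theorem~\ref{thm:chromaticrelation} gives $\chi^c(G)=\lceil \log \chi(G)\rceil$, so $\chi^c(G)\le \lceil \log u\rceil$. Thus a bicoloring cover of size $\chi^c(G)$ exists, and since appending arbitrary extra bicolorings to an existing cover leaves every hyperedge covered, we can pad it up to a cover of size exactly $\lceil \log u\rceil$.

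Alternatively, I would sidestep the padding step by reusing the explicit construction from Lemma~\ref{lemma:ch2}: label each vertex by the $\lceil \log u\rceil$-bit binary encoding of the index of its independent set, and let the $i$th bicoloring assign each vertex its $i$th bit, $1\le i\le \lceil \log u\rceil$. For any hyperedge $e$, its vertices meet at least two distinct blocks $I_a,I_b$ (otherwise $e\subseteq I_a$, contradicting independence), and the encodings of $a$ and $b$ differ in some bit position $j$; in the $j$th bicoloring the corresponding two vertices of $e$ receive different colors, so $e$ is properly bicolored there. Hence the $\lceil \log u\rceil$ bicolorings form a cover.

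The argument carries essentially no difficulty: the only care needed is the definitional equivalence between independent-set partitions and proper hypergraph colorings, and the monotonicity used to push $\lceil \log\,\cdot\,\rceil$ through $\chi(G)\le u$ (so the non-power-of-two case of $u$ is handled automatically by the ceiling). I expect the one place worth a sentence to be the distinction between an \emph{upper bound} $\chi^c(G)\le\lceil\log u\rceil$ and the existence of a cover of \emph{exactly} that size; the explicit construction above settles it.
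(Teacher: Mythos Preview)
Your proposal is correct and follows exactly the paper's approach: the paper simply states that the lemma is a direct consequence of Theorem~\ref{thm:chromaticrelation}, and you have supplied precisely the missing sentence, namely that a partition into independent sets is a proper $u$-coloring so $\chi(G)\le u$. Your alternative explicit construction is also just the construction from Lemma~\ref{lemma:ch2} specialized to this coloring, so both routes you give coincide with the paper's intended argument.
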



\subsection{Matchings, hitting sets and bicoloring covers for hypergraphs}
\label{subsec:matchhitcover}

Let $G(V,E)$ be a $k$-uniform hypergraph, 
with $|V|=n$ and $E=\{E_1,E_2,...,E_m\}$,
where $E_i\subseteq V$, $1\leq i\leq n$.
We have the following bounds for $\chi^c(G)$ based on the sizes of  
maximal matchings and hitting sets. The first algorithm \ref{algo:matching} uses a 
maximal matching for computing a bicoloring cover. The second algorithm 
\ref{algo:hittingset} uses a hitting set.

\begin{algorithm}[H]
\SetAlgoRefName{$MBC$}
\KwData{$k$-uniform hypergraph $G(V,E)$ with $|V|=n$, and 
a maximal matching $M$ of $G$}
 \KwResult{Set $X$ of bicolorings of size $|X| \leq log_2 |M| +2 $}
 
Color every vertex in the hyperedges of $M$ with color 1 and rest of the vertices with color 2\;
$recMBC$($M$)\;
Color the remaining hyperedge of the matching independently using one bicoloring\;
\caption{Computing bicoloring cover using a Maximal matching $M$}
\label{algo:matching}
\end{algorithm}

\begin{function}
\KwIn{A set of hyperedges $M$}
\If{$(|M|>1)$}{
Split the hyperedges in $M$ into two sets $A,B$ of size $\lfloor\frac{|M|}{2}\rfloor$ and $\lceil\frac{|M|}{2}\rceil$ respectively\;
Color every vertex in $A$ with color 1 and every vertex in $B$ with color 2\;

$recMBC(A)$;

$recMBC(B)$;

}

\caption{$recMBC$($M$)}
\end{function}

%
%

Let $M$ be a maximal matching of the $n$-vertex $k$-uniform 
hypergraph $G(V,E)$.
We propose an algorithm \ref{algo:matching} for computing a 
bicoloring cover of $G$ using $M$.
The algorithm \ref{algo:matching} takes the hypergraph $G(V,E)$ and a maximal 
matching $M$ of $G$
as inputs and produces a bicoloring cover $C1$ for $G$.
Let $V_M$ denote the set of vertices in the hyperedges in $M$.
In the first bicoloring, \ref{algo:matching} colors every 
vertex of $V_M$ with color 0, and, all the vertices in $V \setminus V_M$ 
with color 1.
Due to the maximality of the matching $M$, 
every hyperedge that contains a vertex from 
$V \setminus V_M$
shares at
least one vertex with some hyperedge in $M$. 
So, every hyperedge $e \not\subseteq V_M$ is 
certainly properly bicolored. The hyperedges 
which are not properly bicolored in the first bicoloring are 
subsets of $V_M$. 
Then \ref{algo:matching} calls a 
recursive 
function $recMBC$ with the matching $M$ as an argument. 
All the subsequent bicolorings are performed by
$recMBC$.
The function $recMBC$
splits the hyperedges in $M$ into two sets, $M1$ and $M2$
with $|M1|=\lceil \frac{|M|}{2}\rceil$ and $|M2|=\lfloor \frac{|M|}{2}\rfloor$.
\ref{algo:matching} colors the vertices of every hyperedge in $M1$ and $M2$ with colors 0 and
1, respectively.
This gives the second bicoloring.
Note that every hyperedge of $G$ that shares at least one vertex 
each with a hyperedge in $M1$ and a hyperedge in $M2$, 
is properly bicolored.
Now, vertices of hyperedges in $M1$ and $M2$ can be 
colored independently in the subsequent bicolorings.
The function $recMBC$ is invoked recursively on $M1$ and $M2$, 
separately.
Note that $recMBC$ terminates when its argument has a single hyperedge; 
such a hyperedge can be bicolored using a single bicoloring. 

We analyze the number of bicoloring generated by the algorithm \ref{algo:matching}
as follows.
After the first bicoloring, the problem size 
is $|M|$ and the problem size gets halved in each subsequent bicoloring
step. So, after $\log |M|$ bicolorings, the problem reduces to bicoloring of a
single hyperedge, which can be done using a single bicoloring.
We summarize our result in the following theorem.


\begin{theorem}
\label{thm:matching}

For any $k$-uniform hypergraph $G(V,E)$, 
$\chi^c(G) \leq \log |M| +2$, where $M$ is a maximal matching of $G$.
Algorithm \ref{algo:matching} computes such a bicoloring cover for $G(V,E)$ in 
$O(n\log |M|)$ time. 

\end{theorem}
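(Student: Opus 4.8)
The plan is to establish the bound $\chi^c(G) \leq \log |M| + 2$ by exhibiting an explicit bicoloring cover produced by Algorithm \ref{algo:matching}, and then to verify the running time separately. First I would fix a maximal matching $M$ of $G$ and let $V_M$ denote the union of its hyperedges. The crucial structural fact to isolate is a consequence of \emph{maximality}: any hyperedge $e$ that is not entirely contained in $V_M$ must intersect some hyperedge of $M$, since otherwise $e$ could be added to $M$, contradicting maximality. This observation is what makes the very first bicoloring---color $V_M$ with $0$ and $V \setminus V_M$ with $1$---correctly cover every hyperedge $e \not\subseteq V_M$: such an $e$ contains a vertex of color $1$ (one outside $V_M$, since $e \not\subseteq V_M$) and a vertex of color $0$ (the shared vertex with a hyperedge of $M$, which lies in $V_M$).

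Having disposed of every hyperedge not contained in $V_M$, the remaining uncovered hyperedges are all subsets of $V_M$, so it suffices to properly bicolor the hyperedges of $M$ themselves. The plan here is an induction / divide-and-conquer argument on $|M|$ mirroring the recursion $recMBC$. I would prove the invariant that $recMBC$, called on a set $M'$ of $|M'|$ hyperedges, produces at most $\lceil \log |M'| \rceil$ bicolorings that cover every hyperedge of $G$ contained in the union of $M'$. The inductive step splits $M'$ into halves $M1, M2$ of sizes $\lceil |M'|/2 \rceil$ and $\lfloor |M'|/2 \rfloor$ and uses one bicoloring that colors $M1$ with $0$ and $M2$ with $1$; any hyperedge meeting both halves is covered by this bicoloring, while the remaining uncovered hyperedges lie entirely within $\bigcup M1$ or entirely within $\bigcup M2$ and are handled by the two recursive calls on half-sized inputs. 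The base case $|M'| = 1$ is a single hyperedge, which is properly bicolorable with one bicoloring. Unwinding the recursion, the total number of bicolorings is at most $1 + \lceil \log |M| \rceil$, and combining with the first (maximality) bicoloring yields the stated bound $\log |M| + 2$.

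The step I expect to require the most care is not the maximality argument but the bookkeeping that guarantees \emph{completeness} of the cover: I must argue that every hyperedge of $G$, not merely every hyperedge of $M$, ends up properly bicolored in at least one of the produced bicolorings. The key is that a hyperedge $e \subseteq V_M$ which is never split apart (i.e.\ at every level of the recursion it stays inside one of the two halves) would eventually be confined to a single hyperedge of $M$; but distinct hyperedges of a matching are vertex-disjoint, so a hyperedge $e$ contained in $V_M$ but meeting two or more matching hyperedges must get separated by some bicoloring along the recursion, and is therefore covered there. I would make this precise by tracking, for each such $e$, the first recursion level at which two matching hyperedges it touches land in opposite halves.

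For the running time, the analysis is routine: constructing the first bicoloring is $O(n)$; the recursion tree has depth $O(\log |M|)$, and at each level the total work of assigning colors across all recursive calls is $O(n)$ since every vertex is touched once per level. This gives the claimed $O(n \log |M|)$ bound, completing the proof of Theorem \ref{thm:matching}.
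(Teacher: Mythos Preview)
Your proposal is correct and follows essentially the same approach as the paper: the initial maximality-based bicoloring, the recursive halving of $M$, and the final single-edge bicoloring, with the same $1 + \lceil \log |M| \rceil + 1$ count. In fact you are more careful than the paper in one respect: you explicitly argue that \emph{every} hyperedge $e \subseteq V_M$ (not merely the matching edges) is eventually covered, by tracking the first level at which two matching edges touched by $e$ land in opposite halves --- the paper leaves this implicit. One minor slip to clean up: your stated invariant ``$recMBC$ on $M'$ produces at most $\lceil \log |M'| \rceil$ bicolorings that cover every hyperedge contained in $\bigcup M'$'' is off by one at the base case ($\lceil \log 1 \rceil = 0$ but one bicoloring is still needed); your subsequent count $1 + \lceil \log |M| \rceil$ is correct, so just adjust the invariant to include the leaf bicoloring.
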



\begin{algorithm}[!htb]
\SetAlgoRefName{$HBC$}
\KwData{$k$-uniform hypergraph $G(V,E)$ with $|V|=n$ and a hitting set $H$}
\KwOut{Set $X$ of bicolorings of size $|X|= \log \lceil \frac{|H|}{k-1}\rceil + 1 $}

Color every vertex in $H$ with color 1 and the
rest of the vertices with color 2\;
Let $G'(H,E')$ be a hypergraph defined 
on the vertices of $H$, and 
$E'$ be all the hyperedges that are monochromatic 
after the first bicoloring\;
$KnCover$($G'$)\;
\caption{Computing bicoloring cover using a hitting set $H$}
\label{algo:hittingset}
\end{algorithm}

%
%

%

Let $H$ be a hitting set of the hypergraph $G(V,E)$.
We propose an algorithm \ref{algo:hittingset} for computing a
bicoloring cover of $G$ using $H$.
\ref{algo:hittingset} takes the hypergraph $G(V,E)$ and the 
hitting set $H$ as inputs and produces a bicoloring cover $C1$.
In the first bicoloring, \ref{algo:hittingset} colors every vertex in $H$ 
with color 0 and all the remaining vertices with the 
color 1. So, the 
hyperedges which are monochromatic in the 
first coloring are subsets of $H$. 
Let $G'(H,E')$ be a hypergraph on the vertices 
of $H$, $E'$ be all the 
hyperedges that are monochromatic after the first bicoloring.
\ref{algo:hittingset} invokes algorithm $KnCover$ on hypergraph 
$G'(H,E')$ to properly bicolor the hyperedges of $G'$.
By Corollary \ref{cor:kncover}, 
we know that $KnCover$ computes a bicoloring cover for $G'$ 
consisting of $\lceil \log  \frac{|H|}{k-1}\rceil$ bicolorings.
%
These $\lceil \log  \frac{|H|}{k-1}\rceil$ bicolorings  
combined with the first bicoloring gives the desired bicoloring cover for $G$.
So, we get the following theorem.

\begin{theorem}\label{thm:hittingset}

For any $k$-uniform hypergraph $G(V,E)$, 
$\chi^c(G) \leq \lceil \log  \frac{|H|}{k-1}\rceil + 1$, where 
$H$ is a hitting set of $G$. 
Algorithm \ref{algo:hittingset} computes such a bicoloring cover for $G(V,E)$ in 
$O(n\log \frac{|H|}{k-1})$ time. 

\end{theorem}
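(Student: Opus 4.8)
The plan is to verify the two assertions about Algorithm \ref{algo:hittingset} separately: correctness, which yields the upper bound $\chi^c(G)\leq \lceil \log \frac{|H|}{k-1}\rceil + 1$, and the running-time estimate. Both reduce to a short structural argument followed by an appeal to the bound already known for complete $k$-uniform hypergraphs, together with a count of the colorings produced.

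First I would analyze the initial bicoloring. Since $H$ is a hitting set, every hyperedge $e\in E$ contains at least one vertex of $H$ and therefore at least one vertex receiving color $0$. Hence if $e$ is monochromatic after the first step, it can contain no vertex of $V\setminus H$ (each such vertex gets color $1$), forcing $e\subseteq H$. This is the crucial structural observation: the hyperedges left uncovered by the first bicoloring are exactly those lying entirely inside $H$, and these constitute the hyperedge set $E'$ of the residual hypergraph $G'(H,E')$. In particular $E'\subseteq \binom{H}{k}$, so $G'$ is a subhypergraph of the complete $k$-uniform hypergraph on the $|H|$ vertices of $H$.

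Next I would combine the pieces. Any bicoloring cover of the complete $k$-uniform hypergraph on $H$ is in particular a cover of the subhypergraph $G'$, so Corollary \ref{cor:kncover} supplies a cover of $G'$ using $\lceil \log \frac{|H|}{k-1}\rceil$ bicolorings; in each of these the vertices of $V\setminus H$ may be colored arbitrarily, because every hyperedge meeting $V\setminus H$ is already covered by the first bicoloring. Taking the union of the single initial bicoloring with these $\lceil \log \frac{|H|}{k-1}\rceil$ bicolorings gives a bicoloring cover of $G$: every hyperedge is covered either at the first step (when $e\not\subseteq H$) or by some $KnCover$ bicoloring (when $e\in E'$). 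This establishes $\chi^c(G)\leq \lceil \log \frac{|H|}{k-1}\rceil + 1$.

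For the running time I would simply count colorings. The first bicoloring assigns a color to each of the $n$ vertices in $O(n)$ time, and $KnCover$ produces $\lceil \log \frac{|H|}{k-1}\rceil$ further bicolorings, each fixed by the grouping of the $|H|\leq n$ vertices and hence computable in $O(n)$ time; since $G'$ only has to be covered as a subhypergraph of the complete hypergraph on $H$, the set $E'$ need never be materialized, keeping the total at $O\!\left(n\log \frac{|H|}{k-1}\right)$. The one point deserving care is the boundary case $|H|\leq k-1$, where $H$ contains no $k$-subset and the logarithmic term is read as $0$, so the first bicoloring alone covers $G$. The main obstacle is really just the structural reduction of the second paragraph; once the hitting-set property is seen to confine all residual monochromatic hyperedges to $H$, the rest follows from the already-proved tight bound $\chi^c(K_{|H|}^k)=\lceil \log \frac{|H|}{k-1}\rceil$ for complete $k$-uniform hypergraphs.
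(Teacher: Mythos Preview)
Your proposal is correct and follows essentially the same approach as the paper: use the hitting-set property to show that the first bicoloring covers every hyperedge not contained in $H$, then invoke Corollary~\ref{cor:kncover} on the residual hypergraph $G'(H,E')$ to obtain the remaining $\lceil \log \frac{|H|}{k-1}\rceil$ bicolorings, and count $O(n)$ work per bicoloring for the running time. Your treatment is in fact slightly more detailed than the paper's, as you spell out why monochromatic hyperedges must lie in $H$, note that $E'$ need not be materialized, and address the boundary case $|H|\leq k-1$.
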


As the union of vertices of some maximal matching $M$ gives a hitting 
set, replacing $|H|$ by $|M|k$, yields the same bound as in Theorem \ref{thm:matching}. As the effectiveness of the algorithm followed in proof of Theorem \ref{thm:matching} depends on the size of 
the maximal matching, finding the smallest maximal matching is useful.


\section{Approximating bicoloring covers}
\label{sec:approx}

Lov\'{a}sz \cite{lovasz1973coverings} showed that the decision problem 
of bicolorability of hypergraphs is NP-complete.
Feige and Killian \cite{Feige1998187} showed that if NP does 
not have efficient randomized algorithms i.e., $NP \not\subset ZPP $, 
then there is no polynomial time algorithm for 
approximating the chromatic number of an $n$-vertex 
graph within a factor of $n^{1-\epsilon}$, 
for any fixed $\epsilon > 0$. Using the above result, 
Krivelevich \cite{Krivelevich20032} demonstrated that for any fixed 
$k\geq 3$, it is impossible to approximate the chromatic number of 
$k$-uniform graphs on $n$ vertices within a factor of 
$n^{1-\epsilon}$ for any fixed $\epsilon >0$, in time polynomial in $n$. 
In Section \ref{subsec:additive}, 
we show that it is impossible to approximate the bicoloring cover of 
$k$-uniform hypergraphs on $n$ vertices within an additive factor 
of $({1-\epsilon}) \log n $ for any fixed $\epsilon >0$, in  
time polynomial in $n$. 
We also 
design approximation algorithms for 
computing bicoloring covers 
in Section \ref{subsec:ratio} 
using the methods developed in \cite{Krivelevich20032}.

\subsection{Inapproximability of the computation 
of \texorpdfstring{$\chi^c(G)$}{xg}}
\label{subsec:additive}

Let $G(V,E)$ be a $k$-uniform hypergraph, and
$\chi^c(G)$ and $\chi(G)$ be the bicoloring cover number and 
the chromatic number of $G$, respectively. 
Assume that $G$ has a bicoloring cover of size $x$ 
i.e., $\chi^c(G) \leq x$. 
By Theorem \ref{thm:chromaticrelation}, $\chi(G) \leq 2^x$. 
Let $R$ be an algorithm that computes a bicoloring 
cover of size $x$ for graph $G$. 
Suppose $R$ is a 
$\alpha$-additive approximation algorithm i.e., for any input instance 
$G$, the size of the computed bicoloring cover 
$x \leq \chi^c(G)+ \alpha$. 
Then, 
using $R$ we can design an approximation algorithm for 
proper coloring of $G$ using 
$2^x \leq 2^{\chi^c(G)} 2^{\alpha} < \chi(G)2^{\alpha+1} $ colors. 
However, in \cite{Krivelevich20032}, it is established that no 
polynomial time algorithm can approximate $\chi(G)$ within a 
factor of $n^{1-\epsilon}$, for any fixed $\epsilon>0$. 
So, setting $2^{\alpha+1}=n^{1-\epsilon}$, 
we get $\alpha =(1-\epsilon) \log n -1$. 
Therefore, we have the following theorem. 

\begin{theorem}

Under the assumption that $NP \not\subset ZPP$, 
 no polynomial time algorithm can approximate the bicoloring cover number $\chi^c(G)$ for $n$-vertex $k$-uniform hypergraph $G(V,E)$ within an additive approximation factor of $(1-\epsilon) \log n -1$, for any fixed $\epsilon > 0$.

\end{theorem}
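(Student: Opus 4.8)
The plan is to argue by contradiction through a reduction from the inapproximability of the chromatic number, exploiting the exact identity $\chi^c(G)=\lceil\log\chi(G)\rceil$ of Theorem \ref{thm:chromaticrelation} to convert an additive guarantee for $\chi^c$ into a multiplicative guarantee for $\chi$. Suppose, for contradiction, that some polynomial time algorithm $R$ computes, on every $n$-vertex $k$-uniform input $G$, a bicoloring cover of size $x$ with the additive guarantee $x\le\chi^c(G)+\alpha$, where $\alpha=(1-\epsilon)\log n-1$ for some fixed $\epsilon>0$. Since the cover returned by $R$ is a genuine bicoloring cover, its size is at least the optimum, so $\chi^c(G)\le x\le\chi^c(G)+\alpha$.

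First I would convert the returned cover into a proper vertex-coloring using exactly the construction in the proof of Lemma \ref{lemma:ch1}: assign each vertex the integer whose binary representation is its $x$-bit color pattern across the $x$ bicolorings. That construction runs in time linear in the input and, as argued there, produces a proper coloring of $G$ with at most $2^x$ colors. Composing $R$ with this transformation therefore yields a polynomial time algorithm that properly colors $G$ on the same vertex set.

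Next I would bound the number of colors in terms of $\chi(G)$. Using $x\le\chi^c(G)+\alpha$ together with $\chi^c(G)=\lceil\log\chi(G)\rceil<\log\chi(G)+1$ from Theorem \ref{thm:chromaticrelation}, I obtain
\[
2^x\le 2^{\chi^c(G)}\,2^{\alpha}<2\chi(G)\,2^{\alpha}=\chi(G)\,2^{\alpha+1}.
\]
Substituting $\alpha=(1-\epsilon)\log n-1$ gives $2^{\alpha+1}=2^{(1-\epsilon)\log n}=n^{1-\epsilon}$, so the coloring uses strictly fewer than $\chi(G)\,n^{1-\epsilon}$ colors. Hence $R$ would yield a polynomial time $n^{1-\epsilon}$-factor multiplicative approximation of the chromatic number of $n$-vertex $k$-uniform hypergraphs, contradicting the result of Krivelevich \cite{Krivelevich20032}, which asserts that under $NP\not\subset ZPP$ no such approximation exists for any fixed $\epsilon>0$. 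This contradiction forces the conclusion that no algorithm $R$ with the claimed additive guarantee can exist.

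The step that needs the most care is the arithmetic of the additive-to-multiplicative conversion: the single unavoidable factor of $2$ arising from the ceiling in $\chi^c(G)=\lceil\log\chi(G)\rceil$ is precisely what is absorbed by the $-1$ in the claimed additive bound, so this constant must be tracked exactly rather than hidden in asymptotic notation. One must also check that the reduction preserves the vertex count $n$ (it does, since the derived coloring is on the same vertex set as $G$) so that Krivelevich's hardness transfers verbatim with the same parameter $n$. The remaining obligations --- that the cover-to-coloring transformation and the composed algorithm both run in time polynomial in $n$ --- are routine verifications.
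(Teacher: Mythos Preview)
Your proposal is correct and follows essentially the same argument as the paper: both assume an $\alpha$-additive approximation for $\chi^c$, convert the returned cover into a proper coloring with at most $2^x$ colors via the bit-vector construction of Lemma~\ref{lemma:ch1}, bound $2^x<\chi(G)\,2^{\alpha+1}$ using $\chi^c(G)=\lceil\log\chi(G)\rceil$, and then set $2^{\alpha+1}=n^{1-\epsilon}$ to contradict Krivelevich's inapproximability result. Your write-up is in fact more careful than the paper's in tracking the ceiling, verifying that the vertex count is preserved, and noting that the transformation runs in polynomial time.
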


\subsection{An approximation algorithm for computing bicoloring covers}
\label{subsec:ratio}

Krivelevich and Sudakov \cite{Krivelevich20032}  have developed an algorithm $D(G,p)$ that takes a $n$-vertex $k$-uniform hypergraph $G(V,E)$, and a integer $p \geq \chi(G)$ as inputs, and computes a proper coloring of the hypergraph $G$.
The algorithm $D(G,p)$ uses two algorithms $C_1$ and $C_2$ 
that properly 
color the hypergraph $G$ using at most $8 n^{1-\frac{1}{(k-1)(p-1)+1}}$ 
and $\frac{2n \log p}{\log n}$ colors, respectively if $p \geq \chi(G)$. 
In order words, $D(G,p)$ succeeds in computing an approximate proper 
coloring if $p \geq \chi(G)$.
Since the actual value of chromatic number is not known a priori, $D(G,p)$ 
is executed  with all possible integral values 
of $p$ in the 
range $1$ through $|V|$.
So, the approximation ratio for $\chi(G)$ 
using $D(G,p)$ is $\min\{\frac{8 n^{1-\frac{1}{(k-1)(p-1)+1}}}{p}, 
\frac{\frac{2n \log p}{\log n}}{p}\}$. 
Krivelevich and Sudakov 
use a value of $p=\frac{1}{(k-1)}\frac{\log n}{\log \log n}$ so that both 
the terms in the minimization are of the same order, achieving the 
 approximation ratio of $O(\frac{n (\log \log n)^2}{(\log n)^2}$).

In order to compute a bicoloring cover where the number of bicolorings is within 
a good approximation factor with respect to
$\chi^c(G)$, we use a similar idea and the algorithms of
Krivelevich and Sudakov. 
From Lemma \ref{lemma:ch1}, we know that
$\chi(G) \leq 2^{\chi^c(G)}$. Suppose we invoke $D(G,p)$, where $p=2^s$ and $s \geq \chi^c(G)$.
Then, 
the algorithms $C_1$ and $C_2$ properly color 
the hypergraph $G$ using at most 
$8 n^{1-\frac{1}{(k-1)(2^{s}-1)+1}}$ and $\frac{2ns}{\log n}$ colors, 
respectively.
However, we do not know the value of $\chi^c(G)$ to begin with. 
As we know that $\chi^c(G) \leq \lceil \log 
\frac{n}{k-1} \rceil$ 
(see Theorem \ref{thm:complete}), 
we run $D(G,2^s)$ with all possible values of $s$ in the 
range $1$ through $\lceil \log \frac{n}{k-1} \rceil$ and 
choose the minimum value of $s$ for which $D(G,2^s)$ outputs 
a proper coloring.
From this proper coloring, we can compute a bicoloring 
cover using the reduction stated in the proof of 
Lemma \ref{lemma:ch2}. 
Let $C_{12}$ be the algorithm that (i) takes a $k$-uniform hypergraph $G(V,E)$ 
and an integer $s$ as inputs, (ii) runs $D(G,2^s)$ for 
different values of $s$, and (iii) computes a 
bicoloring cover from the proper coloring output 
of $D(G,2^s)$. 
From Lemma \ref{lemma:ch2}, it is clear that 
$C_{12}$ produces a bicoloring cover of size 
$\min\Big(\log (8 n^{1-\frac{1}{(k-1)(2^s-1)+1}}), \log (\frac{2ns}{\log n})\Big)$.
So, the approximation ratio for algorithm $C_{12}$ is at 
most $\min\Big(\frac{\log (8 n^{1-\frac{1}{(k-1)(2^s-1)+1}})}{s}, 
\frac{\log (\frac{2ns}{\log n})}{s}\Big)$.
We choose the value of $s$ that makes both the terms 
of the same order. Setting $s=\log (\frac{1}{k-1}\frac{\log n}{\log \log n})$, 
the 
the first term becomes 
$\log (8 n^{1-\frac{1}{(k-1)(2^s-1)+1}})=
\log (8 n \allowbreak n^{-\frac{\log \log n}{\log n - (k-2)\log \log n}})\leq 
\log (8 n n^{-\frac{\log \log n}{\log n}})\allowbreak=\log (8 n n^{-\log_{n}\log n})=
O(\log(\frac{n \log \log n}{\log n}))$. 
The 
second term becomes
$\log(\frac{2ns}{\log n})\allowbreak=  
O(\log(\frac{n \log \log n}{\log n}))$. 
Therefore, 
$C_{12}$ has an approximation ratio of 
$O(\frac{\log n+ \log \log \log n-\log \log n}{\log \log n-\log \log \log n} 
\allowbreak) = O(\frac{\log n}{\log \log n-\log \log \log n}-1)$.
We have the following theorem.

\begin{theorem}\label{thm:sudakovapp}
For any  $n$ vertex $k$-uniform hypergraph $G(V,E)$, the bicoloring cover number $\chi^c(G)$ is $\allowbreak O(\frac{\log n}{\log \log n-\log \log \log n})$ approximable.
\end{theorem}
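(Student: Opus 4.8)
The plan is to reduce the bicoloring cover problem to the proper-coloring approximation problem of Krivelevich and Sudakov, exploiting the tight relationship $\chi^c(G)=\lceil\log\chi(G)\rceil$ from Theorem~\ref{thm:chromaticrelation}. The key leverage is that taking logarithms converts a multiplicative approximation factor for $\chi(G)$ into a much smaller factor for $\chi^c(G)$, because $\chi^c$ lives at the scale of $\log\chi$. First I would invoke the two coloring subroutines $C_1$ and $C_2$ of $D(G,p)$, which produce proper colorings with at most $8n^{1-1/((k-1)(p-1)+1)}$ and $\frac{2n\log p}{\log n}$ colors respectively whenever $p\geq\chi(G)$. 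The crucial change of variable is to set $p=2^s$ and search over $s$ rather than over $p$ directly, so that $s$ plays the role of an estimate for $\chi^c(G)$.

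Next I would address the fact that $\chi^c(G)$ is unknown a priori. Since Theorem~\ref{thm:complete} guarantees $\chi^c(G)\leq\lceil\log\frac{n}{k-1}\rceil$, I would run $D(G,2^s)$ for every integer $s$ from $1$ up to this bound and select the smallest $s$ for which a proper coloring is returned. For that smallest successful $s$ we have $2^s\geq\chi(G)$ while $s$ stays close to $\lceil\log\chi(G)\rceil=\chi^c(G)$, so $s$ itself is a good approximation of the target. From the proper coloring output I would then apply the bit-vector reduction from the proof of Lemma~\ref{lemma:ch2}: a coloring with $N$ colors yields a bicoloring cover of size $\lceil\log N\rceil$. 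Thus the cover produced by the combined algorithm $C_{12}$ has size
\[
\min\Big(\log\big(8n^{1-\frac{1}{(k-1)(2^s-1)+1}}\big),\ \log\big(\tfrac{2ns}{\log n}\big)\Big),
\]
and dividing by the lower bound $s$ on the optimum gives the approximation ratio.

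The heart of the argument is the choice of $s$ that balances the two terms, and this is the step I expect to be the main obstacle, since it requires careful asymptotic bookkeeping rather than a clean algebraic identity. I would set $s=\log\big(\frac{1}{k-1}\frac{\log n}{\log\log n}\big)$, mirroring the choice $p=\frac{1}{k-1}\frac{\log n}{\log\log n}$ used by Krivelevich and Sudakov. Substituting this into the first term, the delicate part is estimating the exponent $\frac{1}{(k-1)(2^s-1)+1}=\frac{\log\log n}{\log n-(k-2)\log\log n}$ and bounding it below by $\frac{\log\log n}{\log n}=\log_n\log n$, which yields $8n\,n^{-\log_n\log n}=O\big(\frac{n\log\log n}{\log n}\big)$ for the number of colors and hence a logarithm of order $\log\big(\frac{n\log\log n}{\log n}\big)$. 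The second term $\log\big(\frac{2ns}{\log n}\big)$ is of the same order, confirming the balance. Dividing the common value $O\big(\log n+\log\log\log n-\log\log n\big)$ by $s=O(\log\log n-\log\log\log n)$ then gives the claimed ratio $O\big(\frac{\log n}{\log\log n-\log\log\log n}\big)$, completing the proof.
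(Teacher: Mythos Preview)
Your proposal is correct and follows essentially the same approach as the paper: you invoke the Krivelevich--Sudakov algorithm $D(G,p)$ with the substitution $p=2^s$, search over $s\in\{1,\dots,\lceil\log\frac{n}{k-1}\rceil\}$, convert the resulting proper coloring to a bicoloring cover via Lemma~\ref{lemma:ch2}, and balance the two resulting bounds by the choice $s=\log\big(\frac{1}{k-1}\frac{\log n}{\log\log n}\big)$. Your asymptotic bookkeeping, including the estimate $\frac{1}{(k-1)(2^s-1)+1}\geq\frac{\log\log n}{\log n}$, matches the paper's computation line for line.
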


\section{Lower bounds for the bicoloring cover number}
\label{sec:independentset}

In this section we study the relationship between the bicoloring cover 
number, independent sets and related concepts. 
Throughout this section, all the approximation ratios are determined with 
respect to Algorithm \ref{algo:hittingset}.
In Section \ref{subsec:independentset},
we develop certain relationships between 
the bicoloring cover number $\chi^c(G)$, 
the chromatic 
number 
$\chi(G)$,
and introduce a new parameter $\gamma(G)$, 
which we call the {\it cover independence 
number}. 
We demonstrate a better approximation ratio for 
$\chi^c(G)$ 
for hypergraphs where there is large separation between 
$\alpha(G)$ (the {\it independence number}) and $\gamma(G)$. 
In Section \ref{subsec:separation}, we 
demonstrate examples of $k$-uniform hypergraphs $G(V,E)$,
where there is a large separation
between $\gamma(G)$ and $\alpha(G)$.
In Section \ref{subsec:clique}, 
using a probabilistic argument we demonstrate
the existence of hypergraphs with an arbitrarily large  
gap between $\omega(G)$ and $\chi^c(G)$. This  
shows that the 
lower bound for $\chi^c(G)$ as in 
Theorem \ref{thm:complete}
for arbitrary $k$-uniform hypergraphs, 
is not tight.

\subsection{Independence number, cover independence number and
the bicoloring cover number}
\label{subsec:independentset}

A set $I$ of vertices of any hypergraph $G(V,E)$ is 
called an {\it independent} 
set if there is no hyperedge of $G$ in $I$ 
i.e., for no hyperedge $e \in E(G)$, $e \subseteq I$. 
The maximum size of any 
independent set is called the {\it independence number} $\alpha(G)$. Note that $\chi(G) \geq \frac{|V|}{\alpha}$(\cite{MR2002}). Combined with Lemma \ref{lemma:ch1}, we have the following observations.

\begin{observation}\label{obs:independentsetrel}
For a $k$-uniform hypergraph $G$, $\chi^c(G) \geq \log \lceil\frac{|V(G)|}{\alpha (G)}\rceil$.
\end{observation}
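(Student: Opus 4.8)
The plan is to combine the two results that the paper has already assembled just before this observation. The statement to prove is that for a $k$-uniform hypergraph $G$, one has $\chi^c(G) \geq \log \lceil \frac{|V(G)|}{\alpha(G)} \rceil$. The text immediately preceding the observation supplies exactly the two ingredients needed: first, the standard fact (cited as \cite{MR2002}) that $\chi(G) \geq \frac{|V|}{\alpha(G)}$, and second, Lemma \ref{lemma:ch1}, which gives $\lceil \log \chi(G) \rceil \leq \chi^c(G)$, equivalently $\chi(G) \leq 2^{\chi^c(G)}$.

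First I would recall the chromatic-number bound. In any proper coloring of $G$ with $\chi(G)$ colors, each color class is an independent set, so it has size at most $\alpha(G)$; since the color classes partition the $|V|$ vertices, we get $|V| \leq \chi(G)\cdot \alpha(G)$, hence $\chi(G) \geq \frac{|V|}{\alpha(G)}$. Next I would invoke Lemma \ref{lemma:ch1} to pass from the chromatic number to the bicoloring cover number. Chaining the two inequalities gives $2^{\chi^c(G)} \geq \chi(G) \geq \frac{|V|}{\alpha(G)}$.

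The only mild subtlety is the ceiling and the step of taking logarithms. Taking base-two logarithms of $2^{\chi^c(G)} \geq \frac{|V|}{\alpha(G)}$ yields $\chi^c(G) \geq \log \frac{|V|}{\alpha(G)}$. Because $\chi^c(G)$ is an integer, one may strengthen the right-hand side to its ceiling: $\chi^c(G) = \lceil \chi^c(G) \rceil \geq \lceil \log \frac{|V|}{\alpha(G)} \rceil$. One then needs the elementary monotonicity fact that $\lceil \log x \rceil = \log \lceil x \rceil$ fails in general but that $\lceil \log \frac{|V|}{\alpha} \rceil \geq \log \lceil \frac{|V|}{\alpha} \rceil$ — indeed since $\lceil \log y \rceil \geq \log y$ for all $y$ and the ceiling on $\frac{|V|}{\alpha}$ only rounds up to the next integer, the stated bound with the ceiling inside the logarithm follows directly from $\chi^c(G) \geq \log \frac{|V|}{\alpha(G)}$ together with integrality. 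I would present this final rounding carefully, as it is the one place where a reader could object.

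There is essentially no hard obstacle here: the observation is a short corollary, and I expect the entire argument to be three or four lines. The only point demanding any care is the interplay of the ceiling function with the logarithm when asserting the rounded form of the bound, so that is where I would spend a sentence making the inequality explicit rather than asserting it.
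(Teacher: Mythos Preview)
Your proposal is correct and follows exactly the paper's approach: the observation is stated immediately after citing $\chi(G)\ge |V|/\alpha(G)$ and Lemma~\ref{lemma:ch1}, and the paper simply combines these two facts. Your handling of the ceiling is fine, though it is slightly cleaner to note that $\chi(G)$ is an integer, so $\chi(G)\ge |V|/\alpha(G)$ already gives $\chi(G)\ge \lceil |V|/\alpha(G)\rceil$, whence $\chi^c(G)\ge \lceil\log\chi(G)\rceil \ge \log\lceil |V|/\alpha(G)\rceil$ directly.
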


\begin{proposition}\label{prop:1}
For a $k$-uniform hypergraph $G$, $\chi^c(G)$ can be approximated in polynomial time by a 
ratio factor $\frac{1}{1-t}$ algorithm if $\alpha(G)=n^t$, where $t<1$.
\end{proposition}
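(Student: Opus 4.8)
The plan is to sandwich $\chi^c(G)$ between a lower bound coming from the independence number and an upper bound realized by a polynomial-time algorithm, and then simply read off the ratio. First I would invoke the lower bound of Observation~\ref{obs:independentsetrel}: since $\alpha(G)=n^{t}$ and $|V(G)|=n$, we obtain $\chi^c(G)\geq \log\lceil n/\alpha(G)\rceil=\log\lceil n^{1-t}\rceil\geq (1-t)\log n$. This is the denominator of the approximation ratio, and it is the only place where the hypothesis $\alpha(G)=n^{t}$ is used.

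For the numerator I would produce an explicit cover. Taking the trivial hitting set $H=V$ (the whole vertex set hits every hyperedge) and feeding it to Algorithm~\ref{algo:hittingset}, Theorem~\ref{thm:hittingset} guarantees a bicoloring cover of size at most $\lceil\log\frac{|H|}{k-1}\rceil+1=\lceil\log\frac{n}{k-1}\rceil+1$, computed in polynomial time; the same bound is available directly from Theorem~\ref{thm:complete}. Crucially, the algorithm never needs to know $t$ or to locate a maximum independent set (which would be NP-hard): it is oblivious to $\alpha(G)$, and the promise $\alpha(G)=n^{t}$ is exploited purely in the analysis.

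Finally I would form the ratio of the computed size to the optimum, namely $\frac{\lceil\log\frac{n}{k-1}\rceil+1}{(1-t)\log n}$. Since $\lceil\log\frac{n}{k-1}\rceil+1=\log n+O(1)$, this equals $\frac{\log n+O(1)}{(1-t)\log n}=\frac{1}{1-t}\bigl(1+o(1)\bigr)$, yielding the claimed ratio $\frac{1}{1-t}$.

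The step I expect to require the most care is controlling the lower-order terms: the additive $+1$, the $-\log(k-1)$, and the two ceilings must all be shown to be absorbed into a $(1+o(1))$ factor so that the clean constant $\frac{1}{1-t}$ survives. There is no conceptual obstacle here — the result is essentially the quotient of a known upper bound (Theorems~\ref{thm:complete} and~\ref{thm:hittingset}) and a known lower bound (Observation~\ref{obs:independentsetrel}) — but asserting the ratio as exactly $\frac{1}{1-t}$, rather than $\frac{1}{1-t}(1+o(1))$, calls for either an asymptotic reading of the statement or slightly tighter bookkeeping of these constants.
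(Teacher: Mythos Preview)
Your proposal is correct and follows essentially the same route as the paper: the paper also invokes Algorithm~\ref{algo:hittingset} (Theorem~\ref{thm:hittingset}) for the polynomial-time upper bound and Observation~\ref{obs:independentsetrel} for the lower bound, then divides to obtain the ratio $\frac{\log|H|-\log(k-1)+1}{\log n-\log\alpha(G)}\leq\frac{1}{1-t}$. You are in fact slightly more explicit than the paper in specifying the trivial hitting set $H=V$ and in flagging the $(1+o(1))$ caveat on the lower-order terms, which the paper silently absorbs.
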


\begin{proof}
Algorithm 
\ref{algo:hittingset}
computes a bicoloring cover of size 
$\lceil \log \frac{|H|}{k-1}\rceil + 1 $ in polynomial time, 
where $H$ is a hitting set for $G$ (see Theorem \ref{thm:hittingset}). 
Following Observation \ref{obs:independentsetrel}, we 
note that the approximation ratio is at 
most $\frac{\log |H|- \log{(k-1)}+1}{\log |V(G)|-\log \alpha(G)}$, 
which is at most $\frac{1}{1-t}$ if $\alpha(G)=n^{t}$ and $t < 1$.
\qed
\end{proof}

From Observation \ref{obs:independentsetrel} 
we note that the bicoloring cover 
number $\chi^c(G)$ is lower bounded by  
$\log \frac{|V|}{\alpha(G)}$.
We introduce the notion of {\it cover independence} in 
Section \ref{subsubsec:notioncovind}, 
and show in 
Theorem \ref{thm:independentsetrel1}
of Section
\ref{subsubsec:lbbasedgamma} 
that $\chi^c(G)$ is lower bounded by 
$\log \frac{|V|}{\gamma(G)}$, where $\gamma(G)$ is the
{\it cover independence} number. 
Further, in Section \ref{subsec:separation}
we construct hypergraphs called {\it cover friendly} hypergraphs, 
where the values of $\alpha(G)$ and
$\gamma(G)$ are 
widely separated.
Observe that Theorem \ref{thm:independentsetrel1} yields a better lower
bound for $\chi^c(G)$ than that given by Observation \ref{obs:independentsetrel}. 


\subsubsection{The notion of cover independence}
\label{subsubsec:notioncovind}

There can be multiple sets of bicolorings of size 
$\chi^c(G)$ that cover $G$. 
Let $w$ be the number of distinct (labeled) bicoloring covers 
of size $\chi^c(G)$, where 
$w \leq 2^{n\chi^c}$. 
Let the set  $C=\{C_1,...,C_w\}$ be
the set of all the 
bicoloring covers of size $\chi^c(G)$ i.e., the set of all the optimal bicoloring covers for $G$.
Let $C_i=\{X_1^i,...,X_{\chi^c(G)}^i\}$, $1\leq i \leq w$, be a 
bicoloring cover of size $\chi^c(G)$, 
where $X_j^i$ denotes the $j^{th}$ bicoloring of vertices of $G$
in the $i$th bicoloring cover $C_i$.
The $\chi^c(G)$ bicolorings in $C_i$ define 
a {\it color bit vector $B_v^i$} of $\chi^c(G)$ bits for
each vertex $v \in V$, 
where the $j^{th}$ bit of $B_v^i$ corresponds to the color of 
$v$ in the $j^{th}$ bicoloring $X_j^i$ in $C_i$.
Consider the partition 
${\cal P}_i=\{V_1^i,V_2^i,...,V_q^i\}$ 
of the vertex set $V$ of $G(V,E)$ 
such that vertices 
$u$ and $v$ belong to the same part, say $V_k^i$, if and only if
$B_u^i$ is identical to $B_v^i$.
The partition
${\cal P}_i$ is called 
a {\it canonical partition} of $V$ due 
to the optimal bicoloring cover $C_i$.
Note that $q$ is 
the number of distinct color bit 
vectors, determining the number of parts in the above partition.
Let $\gamma_i(G)$ be the size of the largest set of vertices that receive 
the same color bit vector for the bicoloring cover $C_i$.
We define

\begin{align*}
\gamma_i(G)=  \max_{\substack{
            1\leq k \leq q}}
     		|V_k^i|,
\end{align*}

We also define 
\begin{align*}
\gamma(G)=  \max_{\substack{
            1\leq i \leq w}}
     		\gamma_i(G),
\end{align*}

We call $\gamma(G)$ the {\it cover independence number} 
of the $k$-uniform hypergraph
$G(V,E)$. 
Any optimal bicoloring cover 
$C_i$ of the hypergraph
$G(V,E)$ with $\gamma_i(G)=\gamma(G)$ is called a {\it witness}
for $G$.

\subsubsection{Interpreting the parameter 
\texorpdfstring{$\gamma(G)$}{r} with examples}

We know that there is a unique bipartition for a connected 
bipartite graph $G(V,E)$, where each edge of $G$ has 
one vertex in each part.  
This unique bipartition yields a bicoloring that covers all 
edges of $G$.
The value of $\gamma(G)$ 
for such graphs is the size of the larger part in this bipartition.
%


\begin{figure}[ht]
\centering
\includegraphics[scale=0.6]{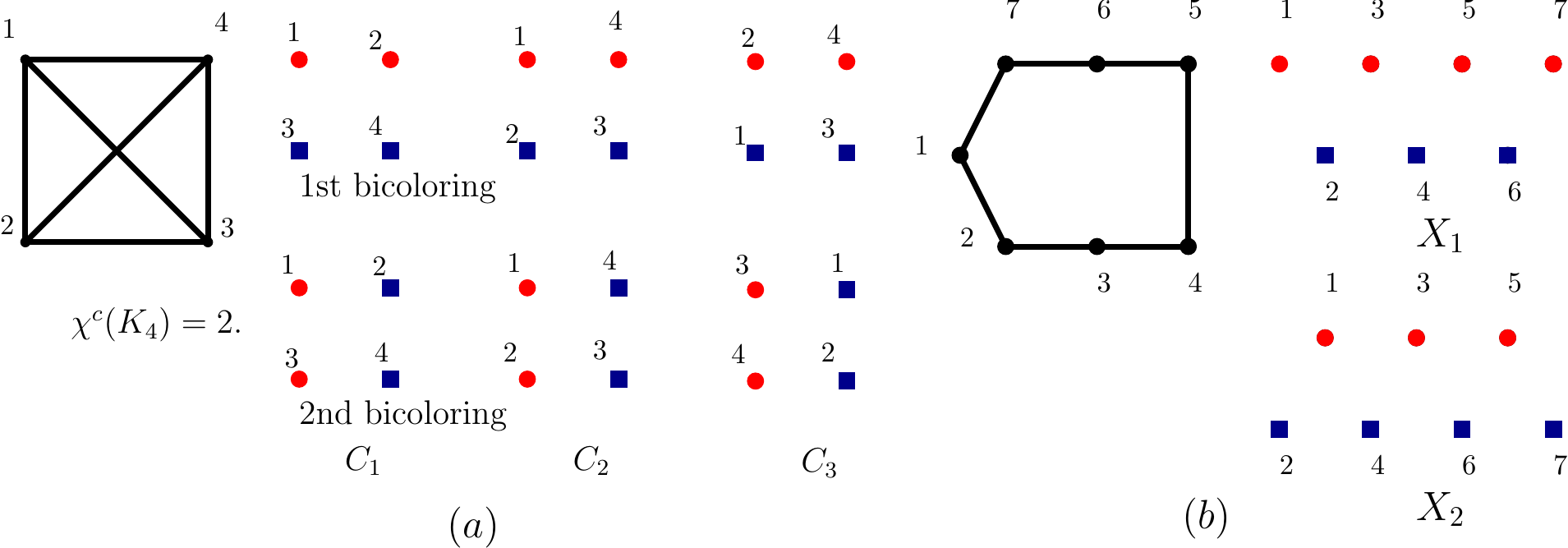}
\caption{$(a)$ $BC_1$, $BC_2$ and 
$BC_3$ denote three distinct bicoloring covers of 
size 2 for $K_4$ where $\gamma(K_4)=1$. 
$(b)$ 
Bicoloring cover $C= \{X_1,X_2\}$ for $C_7$, where
$X_1=red\{1,3,5,7\},blue\{2,4,6\}$,
$X_2=red\{1,3,5,\},blue\{2,4,6,7\}$ and $\gamma(C_7) 
\geq |\{1,3,5\}|=3$.}
\label{fig:gammaex1}
\end{figure}

Note that $\gamma(G)$ for a complete graph is 1. 
Observe that in a complete graph there is an edge
between each pair of vertices. So, a
bicoloring cover $C$ must color vertices in
every pair of vertices with different colors in 
at least one bicoloring $C_i$ of the cover $C$. 
For instance, consider the bicoloring covers of 
$K_4$. From Figure \ref{fig:gammaex1}, it is clear that the 
size of the 
largest set of vertices colored with same color in both the 
bicolorings is 1, in all the three bicoloring 
covers $BC_1$, $BC_2$ and $BC_3$ i.e., 
$\gamma_1(K_4)=\gamma_2(K_4)=\gamma_3(K_4)=1$. 
Therefore, $\gamma(K_4) \geq 1$. 
To see that $\gamma(K_4)<2$, 
observe that if any pair of vertices (say vertices 1 and 2), 
are colored with the same color in both the bicolorings, 
then the edge \{1,2)\} remains uncovered by the set of bicolorings. 

For an odd cycle 
$C_n$, $V=\{v_1,...,v_n\}$, $\gamma(C_n)$ is 
$\frac{n-1}{2}$ and $\chi^c(C_n)=2$. 
Since an odd cycle is not bicolorable, 
$\chi^c(C_n) \geq 2$. 
To show that $\chi^c(C_n)=2$, 
we consider a bicoloring $X_1$, 
where every odd vertex is colored 0, 
every even vertex is colored 1. 
The only edge that is not properly colored is 
${v_1,v_n}$. 
A second bicoloring $X_2$, 
which is exactly the 
same as $X_1$ except that 
the color for $v_n$ is 1 in $X_2$. 
Note that 
$X_2$ properly colors $\{v_1,v_n\}$.
So, $C= \{X_1, X_2\}$ is a bicoloring cover for $C_n$. 
So, $\chi^c(C_n)=2$. 
The vertices $\{v_1,v_3,...,v_{n-2}\}$ are colored 
with 0 in both $X_1$ and $X_2$ in $C$. 
Consequently, $\gamma(C_n) \geq \frac{n-1}{2}$ 
(see Fig. \ref{fig:gammaex1}). 
Also, observe that any subset of 
vertices from the odd cycle $C_n$ with greater than
$\frac{n-1}{2}$ vertices
must contain two consecutive vertices.
Therefore, $\gamma(C_n) \leq \frac{n-1}{2}$.

For any bicolorable hypergraph,  
$\chi^c=1$, set $C$ consists of all the 
proper bicolorings of vertices, $\gamma_i$ is the size of the larger of the two color classes of $i^{th}$ proper bicoloring.
$\gamma \geq \gamma_i \geq \frac{n}{2}$. 
For example, consider the bicoloring of $H(V,E)$, where $V=\{1,2,3,4,5\}$, and $E$ consists of all the 3-uniform hyperedges except $\{1,2,3\}$ and $\{1,2,4\}$ (see Fig. \ref{fig:gammaex2}). 
Certainly $H$ is bicolorable with bicolorings $X_1$ and $X_2$: $X_1=red\{1,2,3\}, blue\{4,5\}$, $X_2= red\{1,\allowbreak 2,4\}, blue\{3,5\}$. $\gamma_1=\gamma_2=3$. Coloring any four vertices with same color in a bicoloring does not cover all the hyperedges. Hence $\gamma=3 \geq \frac{5}{2}$.

\begin{figure}[ht]
\centering
\includegraphics[scale=0.6]{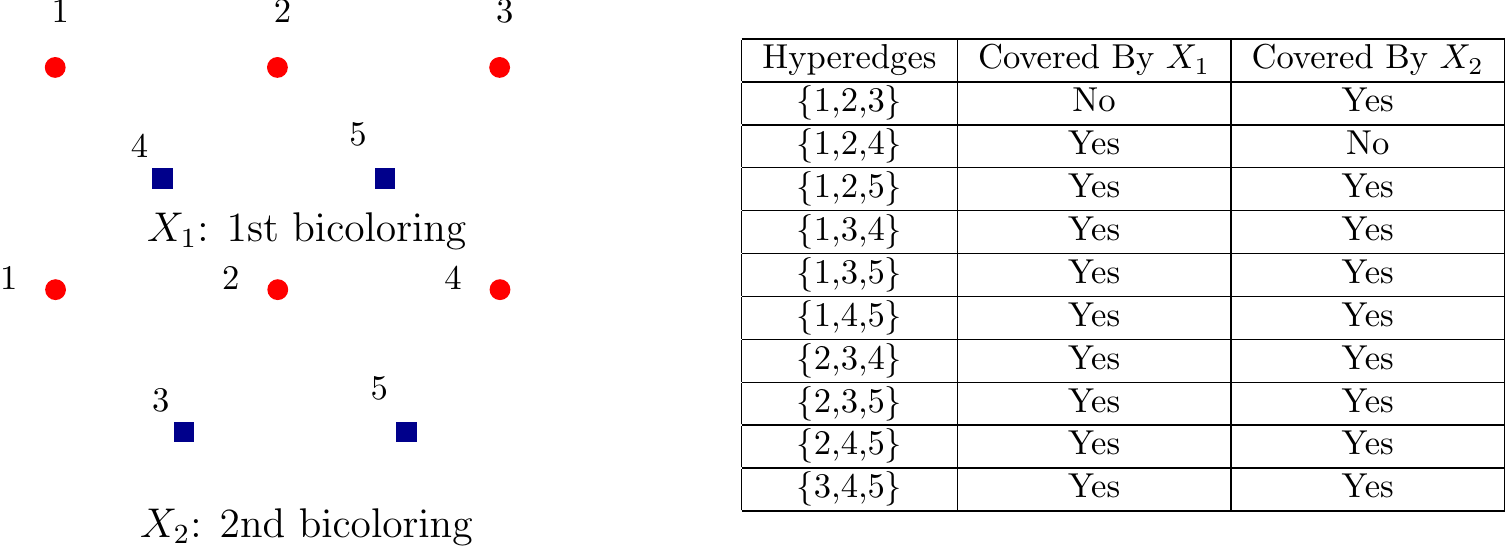}
\caption{ Two bicolorings of $V=\{1,2,3,4,5\}$: $X_1=red\{1,2,3\}$, $blue\{4,5\}$, $X_2= red\{1,2,4\}$, $blue\{3,5\}$. $C= \{X_1,X_2\}$ is a bicoloring cover of $K_5^3$. $\gamma(K_5^3) \geq |\{1,2\}|=2$. 
$C_1=\{X_1\}$ and $C_2=\{X_2\}$ are two distinct bicoloring covers for $H=K_5^3\setminus \{\{1,2,3\}, \{1,2,4\}\}$.
$\gamma_1 \geq |\{1,2,3\}|=3$. $\gamma_2 \geq |\{1,2,4\}|=3$.
$\gamma(H) \geq max(\gamma_1,\gamma_2) =3.$}
\label{fig:gammaex2}
\end{figure}

\subsubsection{A preliminary lower bound for \texorpdfstring{$\gamma(G)$}{r}}

For arbitrary $k$-uniform hypergraph $G$, $\gamma(G) \geq k-1$. 
Assume for the sake of contradiction
that $\gamma(G) = l$ for some $1\leq l \leq k-2$. 
This implies there exists an optimal bicoloring cover 
$C_i$ such that $\gamma_i(G)=l$, and there does not exist any optimal
bicoloring cover $C_j$ 
with $\gamma_j(G)=l+1$. 
We arrive at a contradiction by showing 
that there exists an optimal bicoloring 
cover $C_j$ with $\gamma_j(G)=l+1$. 
Consider $C_i$, a witness for 
$G$, and let $V'$ be the set of vertices such that 
$|V'|=l=\gamma_i(G)=\gamma(G)$, where 
all the vertices in $V'$ receive identical
color bit vectors for the bicoloring cover $C_i$.
Let us move a vertex $s$ from $V \setminus V'$ to 
$V'$ by assigning the color bit vector of $V'$ to $s$, thereby obtaining another set of bicolorings $C_j$ of the same size as $C_i$.
Let $V'_{new}$ = $V' \cup \{s\}$.
If we can show that $C_j$ is a bicoloring cover for the 
hypergraph $G$, then it follows that 
$\gamma_j(G)=|V'_{new}|=l+1$ and we are done. 
Any hyperedge that 
does not contain any vertex from $V'_{new}$
is covered by $C_j$ (using the same bicolorings 
as in $C_i$). 
Note that since $|V'_{new}| \leq k-1$, any hyperedge $e \in E$ that includes vertices from $V'_{new}$,
must contain at least one vertex $t \in V \setminus V'_{new}$.
Since the color bit vectors of $t$ and  $V'_{new}$ are different for $C_j$,
$e$ is properly bicolored by some bicoloring in the set $C_j$.
Since these exhaustive cases include every hyperedge in $G$,  
$C_j$ is a bicoloring cover for $G$, and $\gamma_j(G) =  l+1$, a contradiction. 
Therefore, we have the following theorem.

\begin{theorem}
For any $k$-uniform hypergraph $G$, $\gamma(G) \geq k-1$.
\end{theorem}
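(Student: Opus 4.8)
The plan is to argue by contradiction using a \emph{vertex-moving} operation that enlarges the largest monochromatic color-class of some optimal bicoloring cover, thereby contradicting the maximality built into the definition of $\gamma(G)$. Suppose, for contradiction, that $\gamma(G) = l$ with $1 \le l \le k-2$. By definition of $\gamma(G)$ as the maximum of $\gamma_i(G)$ over all optimal covers, there is a witness cover $C_i$ whose canonical partition ${\cal P}_i$ has a largest part $V'$ with $|V'| = l$; every vertex of $V'$ carries the same color bit vector under $C_i$, and no optimal cover has a part of size $l+1$.

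Next I would perform the move. Since $G$ is $k$-uniform with at least one hyperedge we have $n \ge k$, so $l \le k-2 < n$ guarantees that $V \setminus V'$ is nonempty; pick any $s \in V \setminus V'$. Form a new set of bicolorings $C_j$ from $C_i$ by recoloring $s$, in each of the $\chi^c(G)$ bicolorings, so that $s$ receives coordinate by coordinate the common color bit vector of $V'$. No other vertex changes color, so under $C_j$ the set of vertices carrying that bit vector is exactly $V'_{new} := V' \cup \{s\}$, and every other vertex still carries a bit vector distinct from it.

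The crux is to verify that $C_j$ is again a bicoloring cover for $G$, and this is where the hypothesis $l \le k-2$ is used. I would split the hyperedges into two cases. A hyperedge $e$ disjoint from $V'_{new}$ is colored under $C_j$ exactly as under $C_i$ (only $s$ was recolored), hence it is still properly bicolored by the same bicoloring that covered it in $C_i$. A hyperedge $e$ meeting $V'_{new}$ satisfies $|e| = k > k-1 \ge l+1 = |V'_{new}|$, so $e$ must contain some vertex $t \notin V'_{new}$; since $t$'s bit vector differs from that of $V'_{new}$ in some coordinate $r$, the vertex $t$ and the $V'_{new}$-vertex of $e$ receive different colors in the $r$-th bicoloring of $C_j$, so $e$ is non-monochromatic there. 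As these two cases exhaust $E$, $C_j$ is an optimal bicoloring cover with $\gamma_j(G) = |V'_{new}| = l+1$, contradicting $\gamma(G) = l$; hence $\gamma(G) \ge k-1$.

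The main obstacle is this cover-preservation step: I must ensure the recoloring does not break any hyperedge that the old cover relied on $s$'s original color to split, and that every hyperedge touching the enlarged class still has an \emph{outside} vertex against which to be split. Both are controlled by the size bound $|V'_{new}| \le k-1$, which forces any hyperedge intersecting $V'_{new}$ to spill over to a vertex of differing bit vector. The remaining details (existence of $s$, and the fact that $V'_{new}$ is exactly one color class so that $t$ genuinely differs from it) are routine once this bound is in place.
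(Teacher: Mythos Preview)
Your proposal is correct and follows essentially the same argument as the paper's proof: assume $\gamma(G)=l\le k-2$, take a witness optimal cover $C_i$ with largest part $V'$ of size $l$, recolor one outside vertex $s$ to carry the bit vector of $V'$, and then verify via the two-case split (hyperedges disjoint from $V'_{new}$ versus hyperedges meeting $V'_{new}$, the latter forced to contain an outside vertex $t$ because $|V'_{new}|\le k-1$) that the resulting $C_j$ is still an optimal cover with $\gamma_j(G)=l+1$, a contradiction. Your write-up is in fact slightly more careful than the paper's in justifying that $V\setminus V'\neq\emptyset$ and that $V'_{new}$ is exactly the new color class, but the strategy and the key size bound $|V'_{new}|\le k-1$ are identical.
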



\subsubsection{A lower bound for \texorpdfstring{$\chi^c(G)$}{c1} 
and \texorpdfstring{$\chi(G)$}{c}based on \texorpdfstring{$\gamma(G)$}{r}}
\label{subsubsec:lbbasedgamma}

We now study the significance of the cover independence number $\gamma(G)$ 
and its relationship with the bicoloring cover number $\chi^c(G)$ and chromatic
number $\chi(G)$ of a $k$-uniform hypergraph $G(V,E)$. 
Observe that $\gamma(G)$ is the maximum cardinality 
of a subset $S_i\subseteq V$,
of vertices of $G(V,E)$, 
where the color bit vector for vertices in $S_i$ remains invariant 
in the $i$th optimal bicoloring cover.
Consequently, $\gamma(G)$ can be used to lower bound the number of such 
subsets of vertices of $G$, where each such subset of vertices is 
represented by 
its own color bit vector,
as shown in the analysis below. 

Consider any optimal bicoloring cover $\cal{C}$ of a $k$-uniform 
hypergraph $G$. 
The bicoloring cover 
$\cal{C}$ splits $V(G)$ into a canonical partition ${\cal P}$ of at least  
$\lceil \frac{|V(G)|}{\gamma(G)} \rceil$ 
subsets $V_1,..,V_{\lceil \frac{|V(G)|}{\gamma(G)} \rceil},...,V_q$ of $G$, 
where each vertex in any such set $S$ of size at most $\gamma(G)$ 
receives the color bit vector corresponding to the set $S$
for the bicoloring cover $C$.
If for any $i$, $j$, $i < j$, there is no hyperedge that shares 
at least one vertex each with $V_i$ and $V_j$ in $G$, then we
merge $V_j$ into $V_i$. We repeat this process 
for every $i$, $j$, $i < j$, till there is at least one 
hyperedge that shares at least one vertex each 
with $V_i$ and $V_j$. Let this new bicoloring cover 
be ${\cal{C}}1$ and let ${\cal{P}}1= \{V_1,..,V_{p}\}$ be the new
canonical partition of 
the vertices of $G$ due to ${\cal{C}}1$, where $p\leq q$ denotes 
the number of sets in the canonical partition ${\cal P}1$ of $V$ 
due to ${\cal{C}}1$.
For any $i,j,1\leq i <j\leq p$, $V_i$ and $V_j$ are now 
assigned distinct color
bit vectors.
Let ${\cal{C}}1=\{X_1,...,X_{\chi^c(G)}\}$, 
where $X_i$ denote the $i^{th}$ bicoloring
in ${\cal C}1$.
By the definition of $\gamma(G)$, 
$|V_i| \leq \gamma(G)$, $1 \leq i \leq p$, 
so $p \geq \lceil\frac{|V(G)|}{\gamma(G)}\rceil$.


The canonical partition ${\cal{P}}1$ due to ${\cal{C}}1$ can be 
naturally mapped to a complete graph $H$ with (i) the vertex set  
$V(H)=\{1,2,...,p\}$, 
where $p \geq \lceil\frac{|V(G)|}{\gamma(G)}\rceil$, 
and the vertex $i$ corresponds to the 
part $V_i$ of the canonical partition ${\cal{P}}1$, and (ii)
the set $E(H)$ of edges $\{i,j\}$, denoting the existence of 
a hyperedge $e$ of $G$ that shares at least one vertex each with the
corresponding sets $V_i$ and $V_j$ in the canonical partition ${\cal P}1$. 

\begin{proposition}
$H$ is a complete graph.
\end{proposition}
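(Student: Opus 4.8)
The plan is to read the claim directly off the way the partition ${\cal P}1$ and the graph $H$ are constructed. Recall that ${\cal P}1$ is obtained from the canonical partition induced by ${\cal C}$ by repeatedly merging a part $V_j$ into a part $V_i$ whenever $G$ contains \emph{no} hyperedge sharing at least one vertex each with $V_i$ and $V_j$, and that the procedure halts only once every pair of surviving parts is joined by such a hyperedge. The edge $\{i,j\}$ of $H$ is declared present exactly when such a hyperedge exists for $V_i$ and $V_j$. Thus, once I verify that the halting condition genuinely holds for \emph{every} pair of parts of ${\cal P}1$, completeness of $H$ is immediate.

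First I would argue that the merging procedure terminates: each merge replaces two parts by their union and so strictly decreases the number of parts, whence after at most $q-1$ merges no further merge is possible (starting from the $q$ parts of the initial canonical partition). Next I would establish a monotonicity property, namely that merging never destroys an existing adjacency. Indeed, suppose some hyperedge $e$ shares a vertex with $V_i$ and a vertex with $V_k$, and we subsequently merge a third part $V_j$ into $V_i$; then the enlarged part $V_i \cup V_j \supseteq V_i$ still contains the vertex of $e$ that lay in $V_i$, so $e$ continues to witness adjacency between the merged part and $V_k$. Hence the collection of adjacent pairs can only grow as the merging proceeds, never shrink.

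Combining these two facts, at termination there can be no pair of distinct parts $V_i, V_j$ of ${\cal P}1$ for which $G$ lacks a hyperedge meeting both: otherwise the stopping rule would not be satisfied and the procedure would have performed a further merge. By the definition of $E(H)$ this means $\{i,j\} \in E(H)$ for every pair $1 \le i < j \le p$, so $H$ is the complete graph on its $p$ vertices.

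The argument is essentially bookkeeping, so I expect no serious obstacle. The only point deserving care is the monotonicity claim, which guarantees that merging two parts to repair one missing adjacency cannot inadvertently create a new missing adjacency elsewhere; without it, one might worry that the stopping condition is never reached. Once monotonicity and termination are in hand, completeness of $H$ is forced by the stopping rule of the construction.
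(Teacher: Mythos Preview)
Your proof is correct and follows essentially the same approach as the paper: both read completeness of $H$ directly off the stopping rule of the merging construction. The paper's argument is a single sentence invoking that rule, whereas you add (correct but not strictly necessary) termination and monotonicity checks --- termination alone already forces the stopping condition to hold, since each merge strictly reduces the number of parts.
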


\begin{proof}
According to the definition of 
${\cal{C}}1$, for every $i,j,1 \leq i < j \leq p$, 
there is at least one hyperedge that shares at 
least one vertex each with parts $V_i$ and $V_j$ of the 
canonical partition ${\cal P}1$. 
So, for every $i,j,1 \leq i < j \leq p$, there is an 
edge between vertices $i$ and $j$ in $H$. So, the proposition holds. \qed
\end{proof}

Since $H$ is a complete graph, 
$\omega(H)=p \geq \lceil\frac{|V(G)|}{\gamma(G)}\rceil$.
Using Corollary  \ref{cor:cgraph}, 
we conclude the following lemma.

\begin{lemma}\label{eq:ind}
$\chi^c(H) \geq \log \lceil\frac{|V(G)|}{\gamma(G)}\rceil$.
\end{lemma}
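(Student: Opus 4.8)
The plan is to exploit the structure already set up: we have shown that $H$ is a complete graph on $p$ vertices, and that its size satisfies $p = \omega(H) \geq \lceil |V(G)|/\gamma(G)\rceil$. The claim about $\chi^c(H)$ will then follow by reducing it to the known bicoloring cover number of a complete graph and applying monotonicity of the logarithm.

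First I would invoke Corollary \ref{cor:cgraph}, which characterizes the bicoloring cover number of a complete graph: for the complete graph $H$ on $p$ vertices it gives $\chi^c(H) = \lceil \log p \rceil$. Equivalently, since $H$ is complete we have $\chi(H) = p$, so Theorem \ref{thm:chromaticrelation} yields $\chi^c(H) = \lceil \log \chi(H) \rceil = \lceil \log p \rceil$ directly; either route establishes the exact value of $\chi^c(H)$.

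Next I would assemble the inequality chain. The ceiling dominates its argument, so $\lceil \log p \rceil \geq \log p$; and since $p \geq \lceil |V(G)|/\gamma(G) \rceil$ with $\log$ monotone increasing, $\log p \geq \log \lceil |V(G)|/\gamma(G) \rceil$. Concatenating these gives $\chi^c(H) = \lceil \log p \rceil \geq \log \lceil |V(G)|/\gamma(G) \rceil$, which is precisely the asserted bound.

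There is essentially no substantive obstacle here; the entire content is carried by the prior characterization of $\chi^c$ for complete graphs in Corollary \ref{cor:cgraph}. The only care required is bookkeeping with the floor and ceiling functions: one must pass $\lceil \log p \rceil \geq \log p$ in the correct direction, and the monotonicity step must rely on the inequality $p \geq \lceil |V(G)|/\gamma(G) \rceil$ rather than an exact value of $p$, since $p$ arose only after the possible merging of parts in the canonical partition and hence is available to us only as a lower bound.
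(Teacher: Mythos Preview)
Your proposal is correct and follows essentially the same approach as the paper: the paper's proof consists of the single line ``Using Corollary~\ref{cor:cgraph}, we conclude the following lemma,'' applied immediately after noting that $H$ is complete with $\omega(H)=p\geq \lceil |V(G)|/\gamma(G)\rceil$. Your write-up simply unpacks the inequality chain $\chi^c(H)=\lceil\log p\rceil\geq\log p\geq\log\lceil |V(G)|/\gamma(G)\rceil$ more explicitly, which is fine.
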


\begin{lemma}\label{claim:ind}
$\chi^c(H) \leq \chi^c(G)$.
\end{lemma}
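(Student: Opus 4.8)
The plan is to exhibit an explicit bicoloring cover for the complete graph $H$ that is derived directly from the optimal bicoloring cover ${\cal C}1 = \{X_1,\dots,X_{\chi^c(G)}\}$ of $G$, and then argue that this induced cover is valid for $H$. Since any valid bicoloring cover for $H$ has size at least $\chi^c(H)$, producing a cover for $H$ of size $\chi^c(G)$ immediately yields $\chi^c(H) \le \chi^c(G)$. The natural construction is to transfer the color bit vectors from the parts of the canonical partition ${\cal P}1$ to the vertices of $H$: recall that each part $V_i$ of ${\cal P}1$ is assigned a single well-defined color bit vector of length $\chi^c(G)$ (this is exactly what it means for $V_i$ to be a part of the canonical partition), and vertex $i$ of $H$ corresponds to part $V_i$. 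So I would define $\chi^c(G)$ bicolorings $Y_1,\dots,Y_{\chi^c(G)}$ of $V(H)$ by setting the color of vertex $i$ in bicoloring $Y_\ell$ to be the $\ell$th bit of the color bit vector of part $V_i$.

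The key step is then to verify that $\{Y_1,\dots,Y_{\chi^c(G)}\}$ covers every edge of $H$. Take any edge $\{i,j\}$ of $H$. By the construction of $H$, this edge exists precisely because there is a hyperedge $e$ of $G$ sharing at least one vertex with $V_i$ and at least one vertex with $V_j$. Since ${\cal C}1$ is a bicoloring cover for $G$, this hyperedge $e$ is properly bicolored in some bicoloring $X_\ell$, meaning $e$ contains a $0$-colored vertex and a $1$-colored vertex under $X_\ell$. The crucial sub-point I would establish is that the vertices of $e$ lying in $V_i$ all share the same color in $X_\ell$ (namely the $\ell$th bit of $V_i$'s bit vector), and likewise for $V_j$; hence if $i$ and $j$ received the same bit in every $Y_\ell$, then the restriction of $e$ to $V_i \cup V_j$ would be monochromatic in every $X_\ell$. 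The remaining concern is whether $e$ could be bicolored only by virtue of a third part $V_m$; I would argue that since the color bit vectors of distinct parts are distinct after the merging step, and the specific bicoloring $X_\ell$ that covers $e$ must separate two vertices of $e$, one checks that the bits of $V_i$ and $V_j$ must differ in at least one coordinate, giving the edge $\{i,j\}$ a proper bicoloring in $H$ under the corresponding $Y_\ell$.

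I expect the main obstacle to be making the last inference fully rigorous, because $e$ may intersect more than two parts, and its being non-monochromatic in $G$ does not by itself force $V_i$ and $V_j$ specifically to carry different bits --- the two differently-colored vertices of $e$ could both lie in parts other than $V_i$ and $V_j$. The cleanest way around this is to not rely on a single fixed hyperedge but to argue at the level of the parts: since $V_i$ and $V_j$ are distinct parts of the canonical partition ${\cal P}1$, they carry distinct color bit vectors by the definition of a canonical partition, so they differ in at least one bit position $\ell$, and therefore edge $\{i,j\}$ is properly bicolored by $Y_\ell$ regardless of the finer structure of $e$. This reframing sidesteps the difficulty entirely and makes the covering property immediate. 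Combining this with the fact that $\{Y_1,\dots,Y_{\chi^c(G)}\}$ has exactly $\chi^c(G)$ bicolorings establishes $\chi^c(H) \le \chi^c(G)$, as required.
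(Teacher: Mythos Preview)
Your proposal is correct and follows the same overall construction as the paper: define bicolorings $Y_\ell$ on $H$ by giving vertex $i$ the color that part $V_i$ receives under $X_\ell$, and then verify that this yields a bicoloring cover for $H$ of size $\chi^c(G)$. Where you differ from the paper is in the verification step, and your version is in fact cleaner. The paper argues that for an edge $\{i,j\}$ of $H$ there is a hyperedge $e$ of $G$ meeting both $V_i$ and $V_j$, picks an $X_\ell$ that properly bicolors $e$, and then asserts that $V_i$ and $V_j$ receive different colors in $X_\ell$; but, exactly as you observe, this last implication is not immediate, since $e$ may intersect additional parts and the two distinctly colored vertices of $e$ could both lie outside $V_i \cup V_j$. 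Your reframing --- using only that distinct parts of ${\cal P}1$ carry distinct color bit vectors, a fact the paper explicitly records just before defining $H$ --- bypasses this difficulty entirely and makes the covering property immediate. So your argument is essentially the paper's, with the verification step tightened.
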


\begin{proof}
We show that the bicoloring
cover ${\cal{C}}1$ for $G$ can be modified into a
bicoloring cover ${\cal{C}}1'$ for $H$.
We construct ${\cal{C}}1'$ in the following manner.
For each $X_l \in {\cal{C}}1$, we include a bicoloring
$X_l'$ in ${\cal{C}}1'$.
We assign the color of vertices of $V_i$ in $X_l$ to the vertex
$i$ in $X_l'$.
In this construction,
${\cal{C}}1'= \{X_1',...,X_{\chi^c(G)}'\}$, and
$|{\cal{C}}1'|=\chi^c(G)$.
We need to show that ${\cal{C}}1'$ is a valid bicoloring cover for $H$.
Let $e'=\{i,j\} \in E(H)$.
This implies that there exists a
hyperedge $e$ that shares at least one vertex each with
$V_i$ and $V_j$. Suppose $e$ is covered in bicoloring $X_l$ of
${\cal{C}}1$. This implies that $V_i$ and $V_j$ are assigned
different colors in $X_l$.
So, by the construction of $H$, vertices
$i$ and $j$ are colored with different colors in $X_l'$,
thereby covering $e'$.
So, ${\cal{C}}1'$ is a valid bicoloring cover for $H$ and
$\chi^c(H) \leq |{\cal{C}}1'| = \chi^c(G)$.
\qed
\end{proof}

Using Lemma \ref{eq:ind} and Lemma \ref{claim:ind}, we have the following 
theorem.

\begin{theorem}\label{thm:independentsetrel1}

For a $k$-uniform hypergraph $G$, 
$\chi^c(G) \geq \log \lceil\frac{|V(G)|}{\gamma(G)}\rceil$.

\end{theorem}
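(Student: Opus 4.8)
\textbf{Proof proposal for Theorem \ref{thm:independentsetrel1}.}

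The plan is to reduce the lower bound on $\chi^c(G)$ to a lower bound on the bicoloring cover number of an auxiliary complete graph $H$, which we can compute exactly. The construction of $H$ has already been set up in the discussion preceding the statement: starting from any optimal bicoloring cover $\cal{C}$ of $G$, we pass to a canonical partition, merge parts that share no connecting hyperedge to obtain the refined cover ${\cal{C}}1$ and partition ${\cal{P}}1 = \{V_1,\dots,V_p\}$, and then map this partition to a graph $H$ on $p$ vertices whose edges record which pairs of parts are joined by a hyperedge of $G$. By the Proposition just proved, $H$ is the complete graph $K_p$, and since every part has size at most $\gamma(G)$ we have $p \geq \lceil |V(G)|/\gamma(G)\rceil$.

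First I would invoke Lemma \ref{eq:ind}, which gives $\chi^c(H) \geq \log\lceil |V(G)|/\gamma(G)\rceil$; this follows from $\omega(H) = p \geq \lceil |V(G)|/\gamma(G)\rceil$ together with the exact formula for the bicoloring cover number of a complete graph (via Corollary \ref{cor:cgraph} and the relation $\chi^c = \lceil \log \chi\rceil$ of Theorem \ref{thm:chromaticrelation}). Next I would invoke Lemma \ref{claim:ind}, which states $\chi^c(H) \leq \chi^c(G)$; its proof transfers the cover ${\cal{C}}1$ of $G$ to a cover of $H$ by coloring each vertex $i$ of $H$ with the common color that part $V_i$ receives in each constituent bicoloring, and checks that any edge of $H$ is covered because the witnessing hyperedge of $G$ forces the two parts to differ in color in some bicoloring.

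Chaining the two lemmas then yields the theorem immediately:
\begin{align*}
\chi^c(G) \;\geq\; \chi^c(H) \;\geq\; \log\left\lceil \frac{|V(G)|}{\gamma(G)}\right\rceil,
\end{align*}
which is exactly the claimed bound. Since both Lemma \ref{eq:ind} and Lemma \ref{claim:ind} are already available, the theorem is a one-line composition and there is no genuine obstacle left at this stage. The conceptual heavy lifting lives upstream: the main subtlety is ensuring that the merging step producing ${\cal{C}}1$ is legitimate, namely that merging two parts $V_i,V_j$ with no connecting hyperedge cannot destroy the covering property (it cannot, since any hyperedge touching the merged part still meets a third part and is therefore covered), and that this merging can only decrease the number of parts, so the inequality $p \geq \lceil |V(G)|/\gamma(G)\rceil$ survives. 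Once that is in place, the composition of the two lemmas closes the argument.
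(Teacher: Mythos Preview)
Your proposal is correct and matches the paper exactly: the theorem is obtained simply by chaining Lemma~\ref{eq:ind} and Lemma~\ref{claim:ind}, precisely as you do.

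One small slip in your closing commentary is worth flagging. You write that ``this merging can only decrease the number of parts, so the inequality $p \geq \lceil |V(G)|/\gamma(G)\rceil$ survives,'' but decreasing the number of parts works \emph{against} that inequality, not for it. The actual reason the bound on $p$ holds after merging is the one the paper gives: ${\cal C}1$ is still an optimal bicoloring cover (it has $\chi^c(G)$ bicolorings and, as you correctly argue, remains a valid cover), so by the very definition of $\gamma(G)$ as the maximum over \emph{all} optimal covers of the largest part size, each part of ${\cal P}1$ has size at most $\gamma(G)$, whence $p \geq \lceil |V(G)|/\gamma(G)\rceil$. This does not affect your main argument, which is sound.
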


\begin{corollary}\label{corr:1}
For a $k$-uniform hypergraph $G$, $\chi^c(G)$ can be approximated in polynomial time by a 
ratio factor $\frac{1}{1-t}$ algorithm if $\gamma(G)=n^t$, where $t<1$.
\end{corollary}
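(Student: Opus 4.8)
The plan is to mirror the structure of Proposition \ref{prop:1}, but to replace the use of Observation \ref{obs:independentsetrel} with the stronger lower bound from Theorem \ref{thm:independentsetrel1}. The key observation is that both results give an upper bound on $\chi^c(G)$ via Algorithm \ref{algo:hittingset}, but the denominator of the approximation ratio differs: here we have the sharper lower bound $\chi^c(G) \geq \log\lceil\frac{|V(G)|}{\gamma(G)}\rceil$ available, since $\gamma(G)=n^t$ rather than $\alpha(G)=n^t$.

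First I would recall that Algorithm \ref{algo:hittingset} computes, in polynomial time, a bicoloring cover of size at most $\lceil\log\frac{|H|}{k-1}\rceil + 1$, where $H$ is a hitting set of $G$ (Theorem \ref{thm:hittingset}). Since $|H|\le n$, this computed cover has size at most $\log|H|-\log(k-1)+1 \le \log n - \log(k-1) + 1$. This serves as the numerator of the approximation ratio. Next I would invoke Theorem \ref{thm:independentsetrel1}, which gives the lower bound $\chi^c(G)\geq \log\lceil\frac{|V(G)|}{\gamma(G)}\rceil = \log\frac{n}{n^t} = (1-t)\log n$ when $\gamma(G)=n^t$. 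Dividing the upper bound on the computed size by this lower bound on the optimum yields an approximation ratio of at most
\[
\frac{\log n - \log(k-1) + 1}{(1-t)\log n},
\]
which tends to $\frac{1}{1-t}$ as $n$ grows (the lower-order terms $-\log(k-1)+1$ in the numerator are dominated by $\log n$ for fixed $k$, so the ratio is at most $\frac{1}{1-t}$ asymptotically).

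The main obstacle, such as it is, lies in being careful about the ceiling functions and the additive lower-order terms so that the claimed bound $\frac{1}{1-t}$ is genuinely attained rather than merely approached. Concretely, one must verify that $\frac{\log n - \log(k-1)+1}{(1-t)\log n}\le \frac{1}{1-t}$, which holds precisely because $\log n - \log(k-1)+1 \le \log n$ for $k\ge 3$ (or absorbs into an asymptotic statement for $k=2$), so the numerator never exceeds $\log n$. I would conclude by noting that the whole procedure runs in polynomial time, since the hitting set can be taken from a maximal matching (as remarked after Theorem \ref{thm:hittingset}) and Algorithm \ref{algo:hittingset} itself is polynomial, which establishes the corollary.
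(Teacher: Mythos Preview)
Your proposal is correct and follows essentially the same approach as the paper: use Algorithm \ref{algo:hittingset} (Theorem \ref{thm:hittingset}) for the upper bound on the computed cover size, invoke Theorem \ref{thm:independentsetrel1} for the lower bound $\chi^c(G)\ge \log\lceil |V(G)|/\gamma(G)\rceil$, and divide to obtain the ratio $\frac{\log|H|-\log(k-1)+1}{\log n-\log\gamma(G)}\le \frac{1}{1-t}$. If anything, you are more careful than the paper about the lower-order terms and the $k=2$ edge case, which the paper simply suppresses.
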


\begin{proof}
Algorithm \ref{algo:hittingset} computes a bicoloring cover of size 
$\lceil \log \frac{|H|}{k-1}\rceil + 1 $ in polynomial time for $G$, 
where $H$ is
a hitting set of $G$ (see Theorem \ref{thm:hittingset}). 
Following Theorem \ref{thm:independentsetrel1}, we observe that the achieved
approximation ratio 
is at most $\frac{\log |A|- \log{(k-1)}+1}{\log |V(G)|-\log \gamma(G)}$, 
which is at most $\frac{1}{1-t}$ if $\gamma(G)=n^{t}$ and $t < 1$.
\qed
\end{proof}

Further, we establish the following lower bound for 
$\chi(G)$ based on Lemma \ref{lemma:ch2} and Theorem \ref{thm:independentsetrel1}.

\begin{theorem}\label{thm:independentsetrel2}
For a $k$-uniform hypergraph $G$, $\chi(G) 
\geq  \frac{|V(G)|}{2\gamma(G)}$.
\end{theorem}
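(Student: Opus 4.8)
The plan is to combine the two results we have just proved, namely Theorem~\ref{thm:independentsetrel1} (the lower bound $\chi^c(G) \geq \log \lceil \frac{|V(G)|}{\gamma(G)} \rceil$) and Lemma~\ref{lemma:ch2} (the upper bound $\chi^c(G) \leq \lceil \log \chi(G) \rceil$), to extract a lower bound on $\chi(G)$ itself. First I would observe that chaining these two inequalities gives $\lceil \log \chi(G) \rceil \geq \chi^c(G) \geq \log \lceil \frac{|V(G)|}{\gamma(G)} \rceil$, so that $\lceil \log \chi(G) \rceil \geq \log \frac{|V(G)|}{\gamma(G)}$. Exponentiating base two then yields $\chi(G) \geq 2^{\log \frac{|V(G)|}{\gamma(G)} - 1}$ in the worst case, because the ceiling on the left can round $\log \chi(G)$ up by strictly less than one; that is, $\log \chi(G) > \lceil \log \chi(G) \rceil - 1 \geq \log \frac{|V(G)|}{\gamma(G)} - 1$.

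Carrying this out, I would write $\log \chi(G) \geq \lceil \log \chi(G) \rceil - 1$ explicitly and substitute the lower bound on $\lceil \log \chi(G) \rceil$ obtained above, giving $\log \chi(G) \geq \log \frac{|V(G)|}{\gamma(G)} - 1 = \log \frac{|V(G)|}{2\gamma(G)}$. Raising both sides to the power of two (which preserves the inequality since $2^x$ is monotone increasing) immediately produces the claimed bound $\chi(G) \geq \frac{|V(G)|}{2\gamma(G)}$. The argument is essentially a two-line arithmetic manipulation once the two prior results are invoked, so no combinatorial construction is needed here.

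The only subtlety — and the step I would check most carefully — is the handling of the ceiling function. The factor of $2$ (equivalently, the $-1$ in the exponent) is exactly what compensates for the possible rounding loss introduced by $\lceil \log \chi(G) \rceil$ versus $\log \chi(G)$; without it, the bare bound $\chi(G) \geq \frac{|V(G)|}{\gamma(G)}$ would not follow from this chain of inequalities, since Lemma~\ref{lemma:ch2} only controls the \emph{ceiling} of $\log \chi(G)$ from above. I would make sure the inequality $\log \chi(G) \geq \lceil \log \chi(G)\rceil - 1$ is stated as the key reason the constant $2$ appears, rather than leaving it implicit, since that is the one place a reader might suspect an off-by-a-factor error. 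I do not anticipate any genuine obstacle beyond this bookkeeping.
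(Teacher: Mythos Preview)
Your proposal is correct and matches the paper's proof essentially line for line: the paper combines Lemma~\ref{lemma:ch2} and Theorem~\ref{thm:independentsetrel1} via the inequality $\lceil \log \chi(G)\rceil \leq \log \chi(G)+1$ (equivalently your $\log \chi(G)\geq \lceil \log \chi(G)\rceil -1$) to get $2\chi(G)\geq 2^{\chi^c(G)}\geq \frac{|V(G)|}{\gamma(G)}$. Your careful discussion of the ceiling is exactly the point; there is nothing to add.
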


\begin{proof}
From Lemma \ref{lemma:ch2}, $\chi^c(G) \leq 
\lceil \log \chi(G) \rceil \leq \log \chi(G) + 1$.
Therefore, $2^{\chi^c(G)}\leq 2\cdot \chi(G)$.
From Theorem \ref{thm:independentsetrel1}, 
$2 \cdot \chi(G) \geq 2^{\chi^c(G)} \geq \frac{|V(G)|}{\gamma(G)}$ 
and the theorem follows.
\qed
\end{proof}

The following proposition establishes the fact that $\alpha(G)$ is at least as large as $\gamma(G)$.

\begin{proposition}\label{prop:alg}
For an arbitrary $k$-uniform hypergraph $G(V,E)$, $\alpha(G) \geq \gamma(G)$.
\end{proposition}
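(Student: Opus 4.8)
The plan is to exhibit an explicit independent set of $G$ whose size equals $\gamma(G)$; since $\alpha(G)$ is by definition the maximum cardinality over all independent sets, this immediately yields $\alpha(G) \geq \gamma(G)$. The key observation driving the argument is that any maximal group of vertices sharing a single color bit vector in an optimal bicoloring cover cannot contain a hyperedge, and is therefore independent.

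First I would invoke the definition of $\gamma(G)$ to fix a \emph{witness} bicoloring cover $C_i$ satisfying $\gamma_i(G) = \gamma(G)$; such a witness exists because $\gamma(G)$ is a maximum taken over the finite collection $\{C_1,\ldots,C_w\}$ of optimal bicoloring covers. By the definition of $\gamma_i(G)$, there is a subset $V' \subseteq V$ with $|V'| = \gamma(G)$ all of whose vertices receive identical color bit vectors $B_v^i$ under $C_i$.

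Next I would argue that $V'$ is an independent set. Suppose, for contradiction, that some hyperedge $e \in E$ satisfies $e \subseteq V'$. Since every vertex of $V'$ carries the same color bit vector, every vertex of $e$ receives the same color in each of the $\chi^c(G)$ bicolorings $X_1^i,\ldots,X_{\chi^c(G)}^i$ of $C_i$. Hence $e$ is monochromatic in every bicoloring of $C_i$ and is therefore never properly bicolored, contradicting the assumption that $C_i$ is a bicoloring cover. Thus no hyperedge lies inside $V'$, so $V'$ is independent, giving $\alpha(G) \geq |V'| = \gamma(G)$.

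The argument is direct and I anticipate no substantive obstacle; the only point that requires care is the translation between \emph{identical color bit vector across all bicolorings} and \emph{monochromatic in every bicoloring}, which rests on reading off the $j$th coordinate of the color bit vector as the color assigned in the $j$th bicoloring $X_j^i$. Once this coordinate-wise correspondence is made explicit, the contradiction with the covering property of $C_i$ follows immediately.
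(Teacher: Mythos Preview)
Your proposal is correct and follows essentially the same approach as the paper: both arguments rest on the observation that the set of vertices receiving a common color bit vector in an optimal bicoloring cover must be independent, since otherwise some hyperedge would be monochromatic in every bicoloring and hence uncovered. The only cosmetic difference is that the paper phrases this as a global proof by contradiction (assuming $\gamma(G) > \alpha(G)$), whereas you exhibit the independent set directly; the underlying idea is identical.
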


\begin{proof}
We prove the proposition by contradiction. Assume that $\gamma(G) > \alpha(G)$.
Let $\cal{C}$ be one of the bicoloring covers of size $\chi^c(G)$ that produces a subset of vertices $\cal{V}$ of cardinality $\gamma(G)$, such that every vertex in the subset receives the same color in each of the $\chi^c(G)$ bicolorings of $\cal{C}$.
From our assumption, $\gamma(G) > \alpha(G)$, so there must be at least one hyperedge $e \in E$ such that $e \subseteq \cal{V}$.
From the definition of $\cal{V}$, it is clear that $e$ remains monochromatic in all of the $\chi^c(G)$ bicolorings: $\cal{C}$ cannot be a bicoloring cover of size $\chi^c(G)$. This concludes the proof of Proposition \ref{prop:alg}.
\qed

\end{proof}


The gap between $\alpha(G)$ and $\gamma(G)$ becomes a question of great importance for comparing the lower bounds of $\chi^c(G)$ by Proposition \ref{prop:1} and Corollary \ref{corr:1}.
In the following section, we generate an example where $\gamma(G)$ is strictly less than $\alpha(G)$, and also generalize the example to construct a class of hypergraphs where $\alpha(G) \gg \gamma(G)$.


\subsection{Construction of hypergraphs with a large gap between 
\texorpdfstring{$\alpha(G)$}{a} 
and 
\texorpdfstring{$\gamma(G)$}{r}}
\label{subsec:separation}

\subsubsection{A small hypergraph demonstrating the separation between \texorpdfstring{$\alpha(G)$}{a}  and \texorpdfstring{$\gamma(G)$}{r}}
\label{subsubsec:separation}

We need to show that there exists hypergraphs where there is an arbitrary gap between $\alpha(G)$ and $\gamma(G)$. Consider the 3-uniform hypergraph $G1(V,E)$, where  $V = {\cal{V}}_1 \cup {\cal{V}}_2$, ${\cal{V}}_1 = \{1,2,...,8\}$ and ${\cal{V}}_2 = \{9,10,11,12\}$ (see Figure \ref{fig:coverfriendly1}). 
The set of hyperedges $E  = E_1 \cup E_2$ is as follows:
\begin{itemize}
\item $E_1 = \{\{u,v,w\}| 1\leq u < v < w \leq 8\}$, and,
\item $E_2 = \{\{u,v,w\}| 1\leq u < v \leq 8, 9\leq w \leq 12\} \setminus \{\{ 1,5,9\},\{2,6,10\},\{3,7,11\},\{4,8,12\}\}$.
\end{itemize}

\begin{figure}
\centering
\includegraphics[scale=0.5]{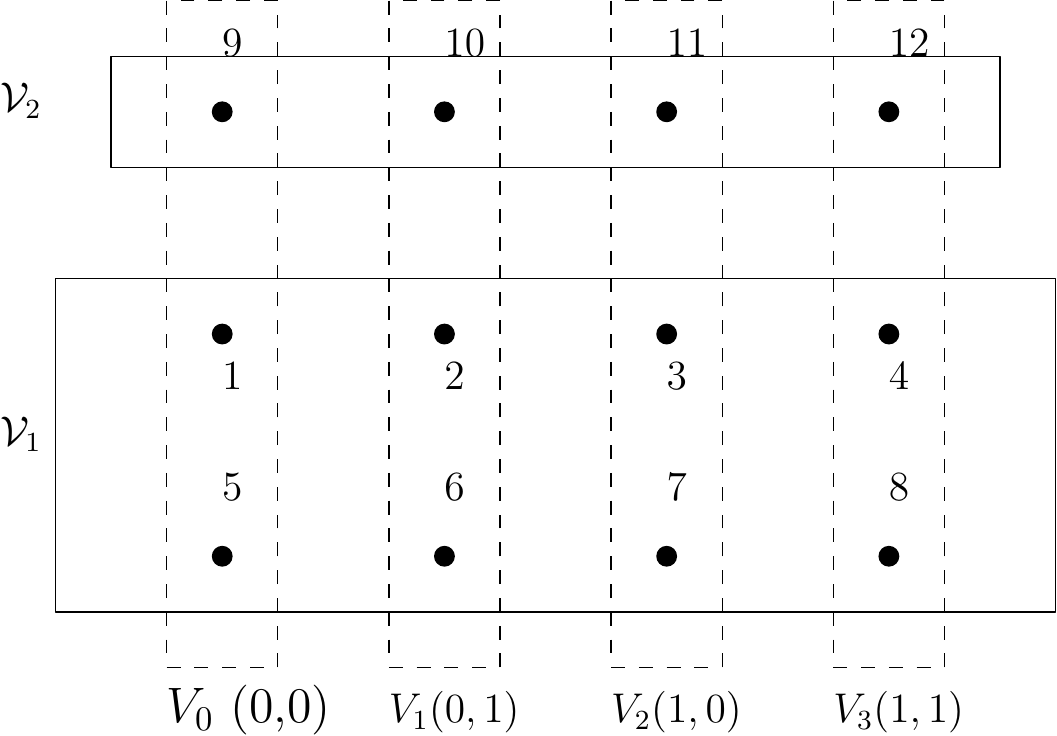}
\caption{Example of a hypergraph $G1(V,E)$ with $\alpha(G1)>\gamma(G1)$. $V = {\cal{V}}_1 \cup {\cal{V}}_2$, where ${\cal{V}}_1 = \{1,2,...,8\}$ and ${\cal{V}}_2 = \{9,10,11,12\}$. $V_0$, $V_1$, $V_2$ and $V_3$
denote the parts with color bit vectors (0,0), (0,1), (1,0) and (1,1) respectively.}
\label{fig:coverfriendly1}
\end{figure}

Observe that ${\cal{V}}_2$ is an independent set as it contains no hyperedges.
Also, observe that $G1$ is not bicolorable since it contains a $K_8^3$ as a subhypergraph (due to hyperedges in $E_1$) and from Theorem \ref{thm:complete}, 
\begin{align}\label{eq:ex1}
\chi^c(G1) \geq \lceil \log \frac{8}{3} \rceil=2.
\end{align}

\begin{lemma}
\label{lemma:alpha}

The independence number $\alpha(G1)$ is five for the hypergraph $G1$. 
Moreover, independent sets of size greater 
that three for $G1$ are obtained by adding at most one vertex 
from ${\cal V}_1$ to subsets of ${\cal V}_2$.

\end{lemma}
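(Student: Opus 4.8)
The plan is to prove both claims simultaneously by classifying an arbitrary independent set $I$ according to how many of its vertices lie in ${\cal V}_1$. Write $a = |I \cap {\cal V}_1|$ and $b = |I \cap {\cal V}_2|$. The lower bound $\alpha(G1) \geq 5$ is immediate: since every hyperedge in $E_1 \cup E_2$ contains at least two vertices of ${\cal V}_1$, any set meeting ${\cal V}_1$ in at most one vertex is automatically independent; in particular ${\cal V}_2 \cup \{v\}$ for any $v \in {\cal V}_1$ is an independent set of size $5$. The real work therefore lies in establishing $\alpha(G1) \leq 5$ together with the structural description, and I would do this by a case analysis on $a$.

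First I would record the key observation just used. Because $E_1$ forces $a \leq 2$ (any three vertices of ${\cal V}_1$ form a hyperedge), and because every hyperedge of $E_1 \cup E_2$ is a triple with at least two endpoints in ${\cal V}_1$, only the cases $a \leq 1$ and $a = 2$ arise. When $a \leq 1$, no hyperedge can be contained in $I$, so $I$ is independent irrespective of $b$; the largest such set takes $b = 4$ and $a = 1$, giving size $5$, and every independent set of this type is exactly a subset of ${\cal V}_2$ together with at most one vertex of ${\cal V}_1$. This already delivers the structural half of the statement for all independent sets that do not use two vertices of ${\cal V}_1$.

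The main obstacle is the case $a = 2$, where the four excepted triples do the decisive work. Suppose $I \cap {\cal V}_1 = \{p,q\}$. For each $w \in I \cap {\cal V}_2$ the triple $\{p,q,w\}$ must fail to be a hyperedge, i.e. it must be one of the four excluded triples $\{1,5,9\}$, $\{2,6,10\}$, $\{3,7,11\}$, $\{4,8,12\}$. I would observe that these four triples use the four distinct pairs $\{1,5\}$, $\{2,6\}$, $\{3,7\}$, $\{4,8\}$ of ${\cal V}_1$, each matched to a \emph{unique} vertex of ${\cal V}_2$. Hence if $\{p,q\}$ is not one of these four pairs, then $\{p,q,w\} \in E_2$ for every $w \in {\cal V}_2$, forcing $b = 0$ and $|I| = 2$; and if $\{p,q\}$ is one of the four pairs, then only the single matching vertex of ${\cal V}_2$ avoids creating a hyperedge, so $b \leq 1$ and $|I| \leq 3$. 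In either subcase $|I| \leq 3$.

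Combining the cases, every independent set of size greater than three must fall under $a \leq 1$, which simultaneously yields $\alpha(G1) = 5$ and the stated characterization that such sets are subsets of ${\cal V}_2$ with at most one extra vertex from ${\cal V}_1$. The only delicate point, and the step I expect to require the most care, is the bookkeeping in the $a = 2$ case: verifying that each of the four admissible pairs is rescued by exactly one vertex of ${\cal V}_2$, so that no independent set of size four can survive while using two vertices of ${\cal V}_1$.
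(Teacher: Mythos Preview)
Your proof is correct and follows essentially the same approach as the paper: a case analysis on the number $a=|I\cap{\cal V}_1|$, using $E_1$ to force $a\leq 2$, then showing that the four excepted triples in $E_2$ limit the $a=2$ case to $|I|\leq 3$ while the $a\leq 1$ case yields the maximum of $5$. Your write-up is in fact more explicit than the paper's about the lower bound and the bookkeeping in the $a=2$ case.
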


\begin{proof}
We prove the above lemma by showing that the maximum 
sized independent sets of $G1$ are $\{\{i,9,10,11,12\}|  
\allowbreak i \in {\cal{V}}_1\}$. Observe that any 
maximum independent set can contain at most two vertices 
from ${\cal{V}}_1$, since any three vertices in  
${\cal{V}}_1$ introduces a hyperedge in $G1$. 
Suppose $u$ and $v$ be any two vertices from 
${\cal{V}}_1$ that are in some maximal 
independent set, $u < v$. If $u \neq v-4$, 
then we cannot add any vertex to that 
independent set. If $u = v-4$, then we can add only 
one vertex $v+4$ to that independent set. Such 
independent sets are of size 3. However, 
restricting only one vertex $u$ from ${\cal{V}}_1$ 
in the independent set, we can add all the vertices of ${\cal{V}}_2$,
generating the independent set $\{u,9,10,11,12\}$.
\qed
\end{proof}

In what follows we show that $\chi^c(G1) \leq 2$, which combined with Inequality \ref{eq:ex1} gives  $\chi^c(G1) = 2$.
Consider the bicolorings of vertices:
\begin{itemize}
\item $X_1=\{0,0,1,1,0,0,1,1,0,0,1,1\}$.
\item $X_2=\{0,1,0,1,0,1,0,1,0,1,0,1\}$, where $j^{th}$ entry in $X_i$ denote the color of the vertex $j$ in $i^{th}$ bicoloring, $1 \leq j \leq 12$.
\end{itemize}

Each vertex $j$, receives a color 
bit vector $(b_{j,2},b_{j,1})$, where 
$b_{j,2}$ and $b_{j,1}$ denote the color of 
$j$ in $X_2$ and $X_1$, respectively.
Let $C1=\{X_2,X_1\}$ be a set of bicolorings 
of $V$. Split $V$ into partition $P=\{V_0,V_1,V_2,V_3\}$ such that
vertex $j$ is added to part $V_0$, $V_1$, $V_2$ or  $V_3$ 
if $j$ receives bits $(0,0)$, $(0,1)$, $(1,0)$, $(1,1)$, 
respectively.
So $V_0=\{1,5,9\}$, $V_1=\{2,6,10\}$, $V_2=\{3,7,11\}$, 
$V_3=\{4,8,12\}$.
Note that each $V_i$, $0\leq i\leq 3$, 
is an independent set, and has a 
distinct color bit vector associated with it due to the bicoloring cover $C1$. 
The construction of $G1$ guarantees that  every hyperedge 
of $G1$ consists of vertices from at least two of the parts.

We wish to show that $\gamma(G1)=3 < \alpha(G)=5$. 
For this purpose, we first show that $C1$ is indeed an optimal 
bicoloring cover and use $C1$ to show that
$\gamma(G1) \geq 3$.
Consider a hyperedge $e=\{u,v,w\} \in E$. 
Without loss of generality, we assume that $u$ and $v$ 
lie in different parts in $P$. Let the color bit vector of $u$ ($v$) be $(a2,a1)$($(b2,b1)$).
By definition of the parts, 
either (i) $a1 \neq b1$, or  (ii) $a1 = b1 \text{ and }a2 \neq b2$. 
If  $a1 \neq b1$, $e$ is covered by $X_1$; 
otherwise, $e$ is covered by $X_2$.
So, $C1$ is a valid bicoloring cover for $G1$ and 
$\chi^c(G1) \leq |C1|=2$. Combined with 
Inequality \ref{eq:ex1}, we conclude that $\chi^c(G1)=2$.
So, $C1$ is an optimal bicoloring cover and by definition, 
$\gamma(G1) \geq \max(\{|{V_i}|| 0 \leq i \leq 3\})=3$.

We now show that $\gamma(G1)$ is 3. 
For the sake of contradiction 
we assume that (i) $C2$ is an optimal bicoloring cover of 
size two for $G1(V,E)$, 
(ii) $\gamma(G1)>3$, and
(iii) $C2$ is a witness for $G1$.
From Lemma \ref{lemma:alpha}, it 
is clear that any independent set of size  
five for $G1$ is obtained only 
by adding any single vertex 
from ${\cal{V}}_1$ to the set ${\cal{V}}_2$. 
From Lemma 
\ref{lemma:alpha}, we also know that 
an independent set of size four in $G1$ is either the set 
${\cal{V}}_2$, or any set with a single vertex from ${\cal V}_1$
and any three vertices from ${\cal V}_2$.
Let any such independent set be called $V_4$.
Since $C2$ is an optimal bicoloring cover, 
the canonical partition $P2$ generated from $C2$ 
consists of at most four mutually disjoint independent sets of $G1$, and since
$C2$ is a witness for $G1$, it must contain $V_4$ as one of the parts.
We define 
$V1'=V\setminus V_4$.
Observe that $V1'$ contains either 7 or 8 vertices 
from ${\cal{V}}_1$ 
i.e., $|V1' \cap {\cal V}_1| \geq 7$. 
If $V_4={\cal V}_2$ or $|V_4|=5$, then
$V1'\subseteq {\cal V}_1$. 
If $V_4$ has one vertex from ${\cal V}_1$, and $|V_4| \geq 4$, 
then
$V1'$ has one vertex from ${\cal V}_2$ 
and seven vertices from ${\cal V}_1$. 
From the construction of $E$, it is clear that vertices from 
$V1'\cap {\cal V}_1$ form 
a $K_7^3$ in $G1$.
Consider the partition of the vertices of $V1'$ into independent sets of $G1$.
Any such independent set of $G1$ can include at most two vertices of 
$V1' \cap {\cal V}_1$; 
three vertices from 
$V1' \cap {\cal V}_1$ always form a hyperedge in $G1$.
Since $V_4$ is already a part in canonical partition $P2$,
there can be at most three more parts in $P2$ as $\chi^c(G1)=2$.
So, $V_4$ has at most 5 vertices and the other at most 
3 parts can include at most $3 \cdot 2=6$ vertices of 
$V1' \cap {\cal V}_1$.
So, at least one vertex $u$ of $V$ ($u \in V1' \cap {\cal V}_1$), 
is not included in the partition $P2$.
Therefore, no such partition $P2$ can include every vertex of $V$. 
So, either $C2$ is not an optimal bicoloring cover, 
or $C2$ is not a witness for $G1$, a contradiction to our assumption.
Consequently, $\gamma(G1)=3$. We conclude that $\alpha(G1)=5 >3 =\gamma(G1)$.

\begin{lemma}
For the hypergraph $G1$, the bicoloring cover number $\chi^c(G1)$ is two, 
and the cover independence number
$\gamma(G1)$ is three. 
\end{lemma}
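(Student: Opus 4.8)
The plan is to prove the two claimed equalities by establishing matching lower and upper bounds for each parameter, reusing the structural facts already developed for $G1$.

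For the bicoloring cover number, the lower bound $\chi^c(G1) \geq 2$ is immediate from Inequality \ref{eq:ex1}, since $G1$ contains a copy of $K_8^3$ among the hyperedges of $E_1$, and Theorem \ref{thm:complete} then forces at least $\lceil \log \frac{8}{3} \rceil = 2$ bicolorings. For the matching upper bound I would exhibit the explicit bicolorings $X_1$ and $X_2$ and verify that $C1 = \{X_2, X_1\}$ covers every hyperedge of $G1$. The verification reduces to a short case analysis on any hyperedge $e = \{u,v,w\}$: at least two of its vertices lie in distinct parts of the partition $P = \{V_0, V_1, V_2, V_3\}$, and a bit-by-bit comparison of their color bit vectors shows that $e$ is split by $X_1$ when the low-order bits differ and by $X_2$ otherwise. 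This gives $\chi^c(G1) \leq 2$, so $\chi^c(G1) = 2$. The lower bound $\gamma(G1) \geq 3$ then follows at once: having shown $C1$ to be an optimal cover, each of its parts $V_i$ is a monochromatic set of size $3$, so the definition of $\gamma$ yields $\gamma(G1) \geq 3$.

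The hard part, and the real content of the lemma, will be the upper bound $\gamma(G1) \leq 3$, which requires ruling out any optimal (size-two) cover whose canonical partition contains a monochromatic class of size at least $4$. I would argue by contradiction: suppose such a witness $C2$ exists. Its monochromatic classes are independent sets of $G1$, so by Lemma \ref{lemma:alpha} the large class $V_4$ of size at least $4$ must consist of almost all of ${\cal V}_2$ together with at most one vertex of ${\cal V}_1$. Deleting $V_4$ leaves a set $V1'$ containing at least seven vertices of ${\cal V}_1$, which by the construction of $E_1$ span a $K_7^3$.

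The decisive counting step is that each remaining independent part can absorb at most two vertices of ${\cal V}_1$, since any three of them form a hyperedge; and because a size-two cover admits at most $2^2 = 4$ parts in total, only three parts remain besides $V_4$, covering at most $3 \cdot 2 = 6$ of the at least seven vertices of $V1' \cap {\cal V}_1$. At least one vertex is then left uncovered, contradicting that the partition exhausts $V$. The obstacle is entirely combinatorial bookkeeping, balancing the at most four available parts against the forced $K_7^3$, and once this count is seen to be tight we conclude $\gamma(G1) = 3$.
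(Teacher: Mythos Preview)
Your proposal is correct and follows essentially the same route as the paper: the same lower bound via the $K_8^3$ subhypergraph, the same explicit cover $C1=\{X_1,X_2\}$ with the bit-by-bit verification, and the same counting contradiction for $\gamma(G1)\leq 3$ using Lemma~\ref{lemma:alpha} together with the observation that the at most three remaining parts can each hold at most two vertices of ${\cal V}_1$. The only cosmetic difference is phrasing; the logical structure is identical.
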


%
%
%



\subsubsection{An asymptotic construction demonstrating the separation between \texorpdfstring{$\alpha(G)$}{a}  and \texorpdfstring{$\gamma(G)$}{r}}
\label{subsubsec:separation1}

In order to give a general asymptotic
construction, we choose a composite $n$, where $n=p\cdot q$, and
$p$ and $q$ are  integers, $q>p>2$, and 
$q$ is of the form $2^{z}$, $z \in {\mathbb{N}}$.
Let $t = \log_{n}p$.
So, $p=n^t$ and $q=n^{1-t}$. Observe that even keeping $p$
fixed at a certain value, we can indefinitely increase the
values of $n$ and $q$, achieving ever increasing ratio
$\frac {q} {p}=n^{1-2t}$.
Since $p < q$,  it follows that  $0 < t < 0.5$.
Consider the $n$-vertex $k$-uniform hypergraph
$G(V,E)$ 
where $k=n^{t}$.
We design our hypergraph $G$ in such a way that
$\alpha(G)=p+q-2=n^t+n^{1-t}-2$ and
$\gamma(G)=p=n^{t}$.
Let
$V ={\cal V}_1 \cup {\cal V}_2$, where
${\cal V}_1 = \{1,2,...,n-q\}$,
and
${\cal V}_2 = \{n-q+1,n-q+2,...,n\}$.

Let $E_1 = \{\{u_1, ...,u_{k}\}| u_1 < ... <u_k \text{ and } 
\{u_1, ...,u_{k}\} \subset {\cal V}_1\}$, 
$E_2 = \{\{u_1, ...,u_{k}\}| u_k \in {\cal V}_2 \text{, }u_1 < ... < u_{k-1}, \text{ and } 
\{u_1, ...,u_{k-1}\}\subset {\cal V}_1\}$,
$E_3=\{{u_1, ...,u_{k}}| u_1< ...< u_{k}\text{ and for a 
fixed \allowbreak }\allowbreak r, 1 \leq r \leq q, 
\{u_1, ...,u_{k}\} \subseteq V_r  \}$.
Let $E=E_1 \cup E_2 \setminus E_3$. 
Note that ${\cal V}_2$ is an independent set; hyperedges of $G$ are either subsets of ${\cal V}_1$, or
include at most one vertex from ${\cal V}_2$. 

We partition the vertices of $V=\{1,...,n\}$
into $q=n^{1-t}$ parts $\{V_0,...,V_{q-1}\}$,
such that the vertex $i$ is placed in
$V_{(i-1)\mod q}$.
Since $q$ divides $n$,
$|V_r|=p=n^t$,
$0 \leq r \leq q-1$.
So, we get a grid-like arrangement of vertices similar to
that in Figure \ref{fig:coverfriendly1}
with $p = n^t$ rows and $q=n^{1-t}$ columns.
Also, observe that each $V_r$ is an independent set since removal of 
$E_3$ from $E_1 \cup E_2$ removes all hyperedges $e$ that 
lie completely inside a part $V_r$, $0 \leq r \leq q-1$.

\begin{lemma}
\label{lemma:alpha1}

For the hypergraph $G$, $\alpha(G)=p+q-2$. 
Moreover, independent sets of size greater than $p$ for $G$ 
are obtained by adding at most $p-2$ vertices 
from ${\cal V}_1$ to subsets of ${\cal V}_2$.
\end{lemma}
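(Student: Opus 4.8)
The plan is to mirror the argument for the small hypergraph in Lemma \ref{lemma:alpha}, but first to pin down the hyperedge structure precisely. First I would observe that, because $q$ divides $n$, the cyclic assignment $i \mapsto V_{(i-1)\bmod q}$ places exactly one vertex of ${\cal V}_2$ and exactly $p-1=k-1$ vertices of ${\cal V}_1$ in each part $V_r$, so that $|V_r|=k$ and, read in increasing order, each part has its ${\cal V}_2$-vertex last. Hence each $V_r$ lies in $E_2$, and since a $k$-subset confined to a single part can only be the whole part, $E_3=\{V_0,\dots,V_{q-1}\}$. Consequently $E = (E_1\cup E_2)\setminus\{V_0,\dots,V_{q-1}\}$, and I record two facts that drive everything: (a) every $k$-subset of ${\cal V}_1$ is a hyperedge, since no part contains $k$ vertices of ${\cal V}_1$ and so none of these sets is deleted by $E_3$; and (b) a $k$-set with $k-1$ vertices in ${\cal V}_1$ and one vertex $w\in{\cal V}_2$ is a hyperedge unless it equals the part $V_r$ whose ${\cal V}_2$-vertex is $w$.

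For the lower bound I would simply exhibit an independent set of the claimed size: take all of ${\cal V}_2$ together with any $p-2$ vertices of ${\cal V}_1$. This set has only $p-2=k-2$ vertices in ${\cal V}_1$, while every hyperedge (being in $E_1\cup E_2$) uses at least $k-1$ vertices of ${\cal V}_1$; so it contains no hyperedge and has size $q+(p-2)=p+q-2$. Thus $\alpha(G)\ge p+q-2$.

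The upper bound is the heart of the proof, and I would carry it out by a case split on $a:=|I\cap{\cal V}_1|$ for an arbitrary independent set $I$, writing $b:=|I\cap{\cal V}_2|\le q$. By fact (a), $a\le k-1$, for otherwise $I\cap{\cal V}_1$ would contain a $k$-subset of ${\cal V}_1$, which is a hyperedge. If $a\le p-2$ then $|I|\le (p-2)+q=p+q-2$ and we are done. The delicate case — and the one I expect to be the main obstacle — is $a=p-1=k-1$: here $S:=I\cap{\cal V}_1$ is a $(k-1)$-set, and for any $w\in I\cap{\cal V}_2$ the set $S\cup\{w\}$ is a $k$-set with exactly one ${\cal V}_2$-vertex, so by fact (b) it must coincide with the unique part whose ${\cal V}_1$-vertices are $S$; since distinct parts have disjoint ${\cal V}_1$-vertex sets, this forces all such $w$ to agree, whence $b\le 1$ and $|I|\le k=p$. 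Because $q>2$, we have $p<p+q-2$, so in every case $|I|\le p+q-2$, giving $\alpha(G)=p+q-2$.

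Finally, the ``moreover'' statement falls out of the same casework: the only way to exceed size $p$ is to be in the branch $a\le p-2$, i.e.\ to take a subset of ${\cal V}_2$ and adjoin at most $p-2$ vertices of ${\cal V}_1$; conversely every such set is independent by the ${\cal V}_1$-count argument used for the lower bound. The one point requiring care throughout is the bookkeeping of fact (b) — that among all $k$-sets with a single ${\cal V}_2$-vertex precisely the $q$ full parts are non-edges — since that is exactly what lets a $(k-1)$-subset of ${\cal V}_1$ coexist with at most one ${\cal V}_2$-vertex.
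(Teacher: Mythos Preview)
Your proposal is correct and follows essentially the same approach as the paper's proof: both argue that at most $k-1=p-1$ vertices of ${\cal V}_1$ can lie in an independent set, split on whether exactly $p-1$ or at most $p-2$ do, and show in the former case that at most one ${\cal V}_2$-vertex can be adjoined while in the latter all of ${\cal V}_2$ can. Your write-up is in fact more careful than the paper's --- you make explicit that $E_3=\{V_0,\dots,V_{q-1}\}$ and isolate facts (a) and (b), and your handling of the $a=p-1$ case works regardless of how many parts $S$ meets, whereas the paper phrases it only for ``two parts''.
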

                     

\begin{proof}
Observe that any maximum independent set can 
contain at most $k-1=p-1$ vertices
from 
${\cal V}_1$; otherwise, it introduces at least one hyperedge 
$e \in E_1$. Let $u_1, ...,u_{p-1}$ be any $p-1$ vertices 
from ${\cal V}_1$ that belong to some independent set, say $S$. 
If every $u_j$, $1\leq j\leq p-1$, of this independent set $S$
belongs to the same part 
$V_r$, $1 \leq r \leq q$, then we can add at most one more vertex 
of the same set $V_r$ to the independent set $S$:
this gives an independent set of size $p$. 
Otherwise, if we vertices of $S$ are spread over two parts
$V_r$ and $V_{r'}$, $0\leq r<r'\leq q-1$, 
then adding any other vertex
would give a hyperedge of $G$ and not an independent set. 
However, if we restrict the independent set to include
only $p-2$ vertices $u_1, ...,u_{p-2}$ from ${\cal V}_1$, 
then we can add all the vertices of ${\cal V}_2$ 
to the independent set, thereby generating an independent set 
$\{u_1, ...,u_{p-2},1,...,q\}$, with $p+q-2$ vertices.
\qed

\end{proof}

A lower bound for the bicoloring cover number of $G$ 
is now estimated as follows.
$G$ has a complete $k$-uniform subhypergraph on the vertices of ${\cal V}_1$, 
due to the hyperedges of 
$E$. 
So, using Theorem \ref{thm:complete} 

\begin{align}\label{eq:ind10}
\chi^c(G) \geq \lceil \log (\frac{n-q}{k-1})\rceil =  
\lceil \log (\frac{pq-q}{p-1})\rceil 
=\lceil \log q\rceil.
\end{align}

By construction, $\{V_0,...,V_{q-1}\}$ is a partition of 
$V$ into independent sets. So, using Lemma \ref{lemma:indpset},
$G$ has a bicoloring cover of size $\lceil \log q\rceil$ i.e., 

\begin{align}\label{eq:ind11}
\chi^c(G) \leq \lceil \log q \rceil. 
\end{align}

From Inequalities \ref{eq:ind10} and \ref{eq:ind11}, 
it is clear that the set of bicolorings that partitions 
$V$ into $\{V_0,..., \allowbreak V_{q-1}\}$, is 
a bicoloring cover of optimal size. 
By definition, 
$\gamma(G) \geq \max(\{|{V_r}|| 0 \leq r \leq q-1\})=p$.

We further claim that $\gamma(G)=p=n^t$.
For the sake of contradiction we
assume that (i) $C2$ is an optimal bicoloring cover for $G$ of size 
$\lceil \log q \rceil$, (ii) $\gamma(G)>p$, and (iii) $C2$ is
a witness for $G1$.
From Lemma \ref{lemma:alpha1} we know that
any independent set of 
size strictly greater than $p$ for $G$ must have
at most $p-2$ vertices from ${\cal V}_1$ 
and a subset of vertices from ${\cal V}_2$.
Let any such independent set be $V_q$. Then, $|V_q| \leq p+q-2\leq 
\alpha(G)$ and $|V_q \cap {\cal V}_1 |\leq p-2$.
Since $C2$ is an optimal bicoloring cover, 
the canonical partition $P2$ consists of at 
most 
$2^{\chi^c(G)}$= 
$2^{ \lceil \log q \rceil}=q$ 
independent parts.
Since
$C2$ is a witness for $G1$, 
the canonical partition $P2$ 
of $C2$
must contain $V_q$ as one of its parts.
We define 
$V'=V\setminus V_q$.
Since (i) $|V_q \cap {\cal V}_1| \leq p-2$, and (ii) $|{\cal V}_1| = n-q$, observe 
that $V'$ contains at 
least $n-q-p+2$ vertices 
from ${\cal{V}}_1$
i.e.,  $|V' \cap {\cal V}_1| \geq n-q-p+2$. 
From the definition of $E$, 
the vertices of $V' \cap {\cal V}_1$ form a $K_{n-q-p+2}^p$ 
in $G$.
Consider the partition of the vertices of 
$V'$ into independent sets in $G$.
Any such independent set of $G$ can include at most $p-1$ vertices 
of $V' \cap {\cal V}_1$; 
$p$ vertices from $V' \cap {\cal V}_1$ always form 
a hyperedge in $G$.
Since $V_q$ is already a part in $P2$,
there can be at most $q-1$ more parts in $P2$ as 
$\chi^c(G1)=\lceil \log q \rceil$.
Now, $V_q$ has at most $p+q-2$ vertices and 
the other at most $q-1$ parts can include at 
most $(q-1)(p-1)=pq-q-p+1=n-q-p+1$ 
vertices of $V1' \cap {\cal V}_1$.
So, at least one vertex $u \in V' \cap {\cal V}_1 \subset V$ 
is not included in the partition $P2$.
Therefore, no such partition $P2$ can include every vertex of $V$. 
So, either $C2$ is not an optimal bicoloring 
cover or it is not a witness for $G1$, 
a contradiction to our assumption.
So, $\gamma(G)=p=n^t$. 
Consequently, $\alpha(G)> q = n^{1-t} > p = n^t =\gamma(G)$. This 
concludes the construction of hypergraphs 
where $\gamma(G)=n^t$ and $\alpha(G)> n^{1-t} >\gamma(G)$.

As discussed above, there exists $k$-uniform hypergraphs 
$G(V,E)$ where $\alpha(G) \geq n^{1-t}$, whereas $\gamma(G)=n^t$, 
for some small fraction $0<t <0.5$. 
We call this special class of hypergraphs {\it cover friendly} 
hypergraphs. We have the following lemma.

\begin{lemma}
For the cover friendly $n$-vertex $n^t$-uniform hypergraph $G$, 
$\chi^c(G)=
\lceil \log n^{1-t} \rceil$, and $\gamma(G)
=n^t$, where $t<0.5$.
\end{lemma}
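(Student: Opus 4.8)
The plan is to assemble the final lemma directly from the facts that have already been proved in this subsection, treating it as a summary statement rather than a fresh argument. The lemma asserts two equalities for the cover friendly hypergraph $G$: that $\chi^c(G) = \lceil \log n^{1-t} \rceil$ and that $\gamma(G) = n^t$, under the hypothesis $t < 0.5$. Both of these are conclusions that the preceding discussion establishes explicitly, so the work is to cite them cleanly and confirm the identifications.

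First I would establish the bicoloring cover number. Recall from the construction that $q = n^{1-t}$, and that Inequalities \ref{eq:ind10} and \ref{eq:ind11} together give $\chi^c(G) = \lceil \log q \rceil$; the lower bound comes from Theorem \ref{thm:complete} applied to the complete $k$-uniform subhypergraph on ${\cal V}_1$, while the upper bound comes from Lemma \ref{lemma:indpset} applied to the partition $\{V_0,\dots,V_{q-1}\}$ into independent sets. Substituting $q = n^{1-t}$ immediately yields $\chi^c(G) = \lceil \log n^{1-t} \rceil$, which is the first claimed equality. No new computation is needed here beyond the substitution.

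Next I would record the cover independence number. The long argument immediately preceding this lemma shows that $\gamma(G) \geq p$ (by definition, since each part $V_r$ of the witness partition has size $p$ and is independent) and that $\gamma(G) = p$ (the contradiction argument ruling out any witness partition with a part larger than $p$, using Lemma \ref{lemma:alpha1} to bound the number of ${\cal V}_1$ vertices that can be absorbed). Since $p = n^t$ by the definition $t = \log_n p$, this gives $\gamma(G) = n^t$, the second claimed equality.

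The step I expect to carry the real weight is \emph{not} inside this lemma at all but in the $\gamma(G) = p$ derivation that precedes it; that is where the counting argument (at most $q-1$ non-$V_q$ parts, each holding at most $p-1$ vertices of $V' \cap {\cal V}_1$, totalling $n-q-p+1 < n-q-p+2$ vertices, forcing an uncovered vertex) does the genuine combinatorial work. For the lemma itself, the only subtlety to check is consistency of the floor/ceiling and the constraint $t < 0.5$: since $p < q$ forces $t < 0.5$, and since $q = 2^z$ is an exact power of two, $\lceil \log q \rceil = \log q = (1-t)\log n$ is in fact an integer, so the ceiling in $\lceil \log n^{1-t} \rceil$ is cosmetic but harmless. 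With these identifications in place the lemma follows at once, and I would simply write that it is a direct consequence of Inequalities \ref{eq:ind10} and \ref{eq:ind11} together with the preceding determination of $\gamma(G)$.
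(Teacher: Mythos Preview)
Your proposal is correct and matches the paper's approach exactly: the lemma is stated as a summary of the facts already established in the preceding discussion (Inequalities \ref{eq:ind10} and \ref{eq:ind11} for $\chi^c(G)$, and the contradiction argument for $\gamma(G)=p=n^t$), and the paper gives no separate proof. Your extra remark that $q=2^z$ makes the ceiling cosmetic is a nice clarification the paper does not spell out.
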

%

Using the results from the asymptotic 
construction of $G$
in Section
\ref{subsubsec:separation1}, 
note that $G1$ 
in Section \ref{subsubsec:separation} (see Figure \ref{fig:coverfriendly1}),
is an instance of $G$ where (i) $|V|=n=12$, 
(ii) $p=3$, (iii) $q=4$, (iv) $t = \log_n{p}=\log_{12}3$, 
(v) $\alpha(G1)=p+q-2=5$, 
(vi) $\chi^c(G1)=\lceil \log q \rceil=2$, and (vii) $\gamma(G1)=p=3$.
We summarize 
the general construction in the following theorem.

\begin{theorem}

Let $G(V,E)$ be a $n$-vertex $k$-uniform hypergraph, 
where $n = pq$, $2 < p < q$, such that $q$ is of the form 
$2^z$, $p,z \in {\mathbb{N}}$,and $k=n^t$. 
Let $t= \log_n p$.
Let $V ={\cal V}_1 \cup {\cal V}_2$, where
${\cal V}_1 = \{1,2,...,n-q\}$,
and
${\cal V}_2 = \{n-q+1,n-q+2,...,n\}$.
Let $E=E_1 \cup E_2 \setminus E_3$, where  
$E_1 = \{\{u_1, ...,u_{k}\}| u_1 < ... <u_k \text{ and } 
\{u_1, ...,u_{k}\} \subset {\cal V}_1\}$, 
$E_2 = \{\{u_1, ...,u_{k}\}| u_k \in {\cal V}_2 \text{, }u_1 < ... < u_{k-1}, \text{, and } 
\{u_1, ...,u_{k-1}\}\subset {\cal V}_1\}$,
$E_3=\{{u_1, ...,u_{k}}| u_1< ...< u_{k}\text{, and for a 
fixed } r,\text{ } 1 \leq r \leq q, \text{ }
\{u_1, ...,u_{k}\} \subseteq V_r  \}$.
Then, \\(i) $ \chi^c(G)= \lceil \log n^{1-t} \rceil$, 
(ii) $\alpha(G) = n^{1-t}+n^t-2$, and (iii) $\gamma(G)=n^t$.

\end{theorem}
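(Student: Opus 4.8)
The plan is to observe that this theorem is essentially a consolidation of the three quantities computed for the construction $G$ of Section~\ref{subsubsec:separation1}, so I would prove each of the three parts by assembling the bounds already established, while being careful that the algebraic substitutions $p=n^t$, $q=n^{1-t}$, $k=n^t$ are consistent throughout. First I would fix the partition $\{V_0,\dots,V_{q-1}\}$ of $V$ (placing vertex $i$ into $V_{(i-1)\bmod q}$), verify once and for all that each $V_r$ has exactly $p$ vertices and is independent (since removing $E_3$ deletes every hyperedge lying inside a single $V_r$), and confirm that $G$ contains a complete $k$-uniform subhypergraph on ${\cal V}_1$ from the edges of $E_1$. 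These two structural facts are the common input to all three parts.

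For part~(i) I would prove the two matching inequalities. The lower bound $\chi^c(G)\ge\lceil\log q\rceil$ follows by applying Theorem~\ref{thm:complete} to the $K_{n-q}^{k}$ living on ${\cal V}_1$, exactly as in Inequality~(\ref{eq:ind10}), where the simplification $\lceil\log\frac{pq-q}{p-1}\rceil=\lceil\log q\rceil$ uses $n-q=pq-q$ and $k-1=p-1$. The upper bound $\chi^c(G)\le\lceil\log q\rceil$ follows from Lemma~\ref{lemma:indpset} applied to the partition into the $q$ independent sets $V_0,\dots,V_{q-1}$, as in Inequality~(\ref{eq:ind11}). Substituting $q=n^{1-t}$ yields $\chi^c(G)=\lceil\log n^{1-t}\rceil$. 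Part~(ii) is immediate: it is precisely the content of Lemma~\ref{lemma:alpha1}, which gives $\alpha(G)=p+q-2$, and substituting $p=n^t$, $q=n^{1-t}$ gives $n^{t}+n^{1-t}-2$.

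Part~(iii) is the genuine obstacle and I expect to spend most of the effort here. The easy direction is $\gamma(G)\ge p$: since each $V_r$ is independent of size $p$ and the partition above is an optimal bicoloring cover, $\gamma(G)\ge\max_r|V_r|=p$. The hard direction $\gamma(G)\le p$ requires the counting/contradiction argument already sketched in the section. I would suppose toward a contradiction that some \emph{witness} optimal cover $C2$ yields a part $V_q$ with $|V_q|>p$. By Lemma~\ref{lemma:alpha1}, any independent set of size exceeding $p$ meets ${\cal V}_1$ in at most $p-2$ vertices, so $|V'\cap{\cal V}_1|\ge n-q-p+2$ where $V'=V\setminus V_q$; these vertices induce a $K_{n-q-p+2}^{p}$, so each of the remaining parts can absorb at most $p-1$ of them. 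Since the canonical partition $P2$ has at most $2^{\lceil\log q\rceil}=q$ parts and $V_q$ is one of them, the other $q-1$ parts cover at most $(q-1)(p-1)=n-q-p+1$ vertices of $V'\cap{\cal V}_1$, one short, leaving some vertex uncovered and contradicting that $C2$ is a cover. The delicate point to get right is the arithmetic identity $(q-1)(p-1)=pq-q-p+1=n-q-p+1$ together with the bound $|V'\cap{\cal V}_1|\ge n-q-p+2$, which produces the crucial off-by-one gap; I would double-check these against the boundary behaviour of $\lceil\log q\rceil$ and the assumption $2<p<q$ so that the clique $K_{n-q-p+2}^{p}$ is genuinely nontrivial. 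Concluding $\gamma(G)=p=n^t$ then completes part~(iii).
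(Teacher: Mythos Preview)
Your proposal is correct and follows essentially the same route as the paper: the theorem is indeed just a consolidation of the bounds derived in Section~\ref{subsubsec:separation1}, and you have identified exactly the ingredients the paper uses---Theorem~\ref{thm:complete} and Lemma~\ref{lemma:indpset} for the two directions of (i), Lemma~\ref{lemma:alpha1} for (ii), and the counting contradiction built on Lemma~\ref{lemma:alpha1} and the identity $(q-1)(p-1)=n-q-p+1$ for the upper bound in (iii). Your extra care about the off-by-one arithmetic and the fact that $q=2^z$ makes $2^{\lceil\log q\rceil}=q$ exactly is well placed; the paper relies on precisely these points.
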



For cover friendly hypergraphs, 
using Corollary \ref{corr:1}, we get an 
approximation ratio of $\frac{1}{1-t}$ for approximating $\chi^c(G)$. 
However, using Proposition \ref{prop:1}, 
we get an approximation ratio of at least $\frac{1}{t}$. 
So, we get an improvement (reduction) in 
approximation ratio for $\chi^c(G)$ by a 
factor of at least $\frac{1-t}{t}$, using the properties of $\gamma(G)$. 
Moreover, a constant approximation ratio  of $\frac{1}{1-t}$ for approximating $\chi^c(G)$ is guaranteed for cover friendly hypergraphs as opposed to approximation ratio of $\allowbreak O(\frac{\log n}{\log \log n-\log \log \log n})$ for general hypergraphs given by Theorem \ref{thm:sudakovapp}, exploiting the characteristics of $\gamma(G)$.

\subsection{Clique number and the bicoloring cover number}
\label{subsec:clique}
 
We define {\it clique} number for hypergraphs as follows.
Let $H(V',E')$ be the largest {\it induced subhypergraph} 
of a $k$-uniform hypergraph $G(V,E)$, where
$V' \subseteq V$, $E' \subseteq E$, and $E'\subseteq 2^{V'}$, 
such that every subset 
of $k$ vertices from
$V'$ constitutes a hyperedge in $H$.
We say that $|E'|=\binom{|V'|}{k}$. 
We define the {\it clique} number $\omega(G)$ for hypergraph
$G(V,E)$ as the cardinality of the set $V'$.
Note that $\omega (G)\geq k$ for any $k$-uniform hypergraph.
All non-empty triangle-free undirected graphs have clique number two.
Observe that a non-empty $k$-uniform hypergraph $G(V,E)$ has $\omega(G)=k$ 
provided no induced subhypergraph $G'(S,E')$ of $G(V,E)$, defined on a 
subset $S\subseteq V$ where $|S|=k+1$, 
has all the $k+1$ $k$-sized subsets of $S$ as hyperedges in $E'$. 
Like triangle-free 
graphs, $k$-uniform hypergraphs 
with $\omega(G)=k$ can be quite a rich class of
hypergraphs. We prove the following theorem.

\begin{theorem}\label{thm:clique} 
For any $t\geq 1$, there exists a $k$-uniform hypergraph $G(V,E)$
with $\omega(G)=k$ and $\chi^c(G) > t$.
\end{theorem}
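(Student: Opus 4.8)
The plan is to reduce the statement to a chromatic-number bound and then invoke the probabilistic method with alteration, mirroring Erd\H{o}s's classical construction of graphs with high girth and high chromatic number. By Theorem~\ref{thm:chromaticrelation}, $\chi^c(G)=\lceil\log\chi(G)\rceil$, so $\chi^c(G)>t$ holds whenever $\chi(G)>2^{t}$. Moreover, by the definition of $\omega$, a non-empty $k$-uniform hypergraph satisfies $\omega(G)=k$ exactly when it contains no $(k+1)$-clique, i.e.\ no $(k+1)$-set all of whose $k$-subsets are hyperedges. Hence it suffices to produce, for $n$ large, a $k$-uniform hypergraph on at least $3n/4$ vertices that has no $(k+1)$-clique and whose chromatic number exceeds $2^{t}$.

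First I would fix the random model $G=G^k(n,p)$, in which each of the $\binom{n}{k}$ possible hyperedges is present independently with probability $p$, and set $p=c_k\,n^{-k/(k+1)}$ for a small constant $c_k$ depending only on $k$. Let $X$ count the $(k+1)$-cliques. Then $\mathbb{E}[X]=\binom{n}{k+1}p^{k+1}\le \frac{c_k^{\,k+1}}{(k+1)!}\,n$, and choosing $c_k$ so that this is at most $n/8$ gives, by Markov's inequality, $\Pr[X\ge n/4]\le 1/2$. Next I would control the independence number: the probability that a fixed $s$-set is independent is $(1-p)^{\binom{s}{k}}$, so a union bound gives $\Pr[\alpha(G)\ge s]\le\binom{n}{s}(1-p)^{\binom{s}{k}}\le \exp\!\big(s\ln n-p\binom{s}{k}\big)$. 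Choosing $s=\lceil C_k(\ln n)^{1/(k-1)}n^{k/((k+1)(k-1))}\rceil$ with $C_k$ large enough makes $p\binom{s}{k}>2s\ln n$, forcing $\Pr[\alpha(G)\ge s]<1/2$.

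Since the two bad events each have probability below $1/2$, with positive probability a single outcome $G$ satisfies both $X<n/4$ and $\alpha(G)<s$. I would then apply alteration: delete one vertex from each $(k+1)$-clique of $G$, removing fewer than $n/4$ vertices and destroying every clique, to obtain an induced subhypergraph $H$ on at least $3n/4$ vertices with no $(k+1)$-clique. Deleting vertices cannot increase the independence number (an independent set of the induced $H$ is independent in $G$), so $\alpha(H)\le\alpha(G)<s$, and the bound $\chi(H)\ge |V(H)|/\alpha(H)\ge (3n/4)/s$ applies. The exponent $1-\frac{k}{(k+1)(k-1)}=1-\frac{k}{k^2-1}$ is strictly positive for every $k\ge2$, so $n/s$ grows polynomially in $n$ and $(3n/4)/s$ eventually exceeds $2^{t}$. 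Fixing such an $n$ yields $\chi(H)>2^{t}>1$; since $\chi(H)>1$ forces $H$ to contain a hyperedge, we get $\omega(H)=k$, and $\chi^c(H)=\lceil\log\chi(H)\rceil>t$, proving the theorem.

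I expect the main obstacle to be the simultaneous calibration of $p$: it must be small enough that the expected number of $(k+1)$-cliques is linear (so the alteration deletes only a constant fraction of vertices) yet large enough that the first-moment bound forces $\alpha(G)$ to be much smaller than the surviving vertex count. Verifying that the single scaling $p\sim n^{-k/(k+1)}$ satisfies both constraints — in particular that the resulting exponent $1-\frac{k}{k^2-1}$ stays positive and that the $(\ln n)^{1/(k-1)}$ factor in $s$ does not overwhelm the polynomial gain in $n/s$ — is the crux; the clique count and the independence estimate are otherwise routine Markov and union-bound computations.
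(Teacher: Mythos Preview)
Your proposal is correct and follows essentially the same approach as the paper: the Erd\H{o}s-style probabilistic method with alteration on $G^k(n,p)$ with $p\sim n^{-k/(k+1)}$, bounding the number of $(k{+}1)$-cliques by Markov, bounding the independence number by the first moment, and then deleting one vertex per clique. The only notable difference is that the paper targets directly a linear-size independent set $s=\lceil kn/((k{+}1)2^t)\rceil$ (which makes the exponent computation a one-liner, since $p\,(s/k)^k$ then grows like $n^{k^2/(k+1)}\gg n$), whereas you work harder to locate the true threshold $s\sim(\ln n)^{1/(k-1)}n^{k/(k^2-1)}$; your route is more informative but unnecessary for the statement, and the paper's shortcut would have spared you the delicate balancing of $C_k$ against $c_k$.
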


By Theorem \ref{thm:complete}, we know that 
$\chi^c(G) \geq \lceil \log \left(\frac{\omega(G)}{k-1}\right)\rceil$. 
In reality, $\chi^c(G)$ can be arbitrarily far apart 
from $\omega(G)$. Analyzing in a manner similar to that in the 
existential proof of the existence of 
triangle-free graphs with arbitrarily large chromatic 
numbers (see \cite{MR2002}), 
we demonstrate the separation between
$\chi^c(G)$  
and $\omega(G)$ as stated in Theorem \ref{thm:clique}. 
%

A {\it random} $k$-uniform hypergraph $G_{n,p}(V,E)$ is a 
$k$-uniform hypergraph on $n$
labeled vertices $V= [n] = \{1,..., n\}$, in 
which every subset $e \subset V$ of size $|e|=k$ is
chosen to be a hyperedge of $G$ randomly, and 
independently with probability $p$, where
$p$ may depend on $n$. We use $G(V,E)$ or simply $G$ to denote such 
as random hypergraph.
To show the gap between $\omega(G)$ and $\chi^c(G)$, 
we choose a random $k$-uniform hypergraph $G(V,E)$ with 
the value of $p$ set to $n^{-\frac{k}{k+1}}$. 
For showing that 
$\chi^c(G) > t$ 
some arbitrary integer $t$, 
it is sufficient to show that 
$G$ contains no independent set of size 
$\lceil \frac{n}{2^t} \rceil$: we know from
Observation \ref{obs:independentsetrel},  
$\chi^c(G) \geq \log \frac{|V|}{\alpha(G)} > 
\log \frac{n}{\frac{n}{2^t}} = t$.

Let $C_I(G)$ and $C_{\omega}(G)$ denote 
the number of independent sets of 
size $\lceil\frac{kn}{(k+1)2^t} \rceil$ 
and the number of 
complete subgraphs of order $k+1$, respectively,  
in the (random) $k$-uniform
$n$-vertex hypergraph $G(V,E)$.
For any event $x$, let ${\mathcal{E}}(x)$ 
denote the expectation of $x$. We show that
$Prob ({C_I(G)} \geq 1)+Prob ({C_{\omega}(G)} \geq \frac{n}{k+1}) < 1$;
this implies there exists some hypergraph $G(V,E)$ such that ${C_I(G)} = 0$,
as well as ${C_{\omega}(G)} < \frac{n}{k+1}$. 
Then, we delete at most 
$\frac{n}{k+1}$ 
vertices from $G(V,E)$ to generate a new 
hypergraph $G'$ where ${C_I(G')} = 0$, as well as ${C_{\omega}(G')}=0$.

First we show that
$Prob ({C_I(G)} \geq 1)$
with probability strictly less than $\frac{1}{2}$ as follows.
Let $F$ be some set of $\lceil\frac{kn}{(k+1)2^t} \rceil$ vertices
in $G$. 
The probability that $F$ is an independent set is 
$(1-p)^{\binom{\lceil\frac{kn}{(k+1)2^t} \rceil}{k}}$. 
The expectation ${\mathcal{E}}(C_I(G))$ is 
the above probability summed up over all the possible subsets of 
size $\lceil \frac{kn}{(k+1)2^t} \rceil$ in $G$. 
We use the following three known inequalities in our analysis:
(i) $\binom{n}{k} < 2^n$, $0<k<n$,
 (ii) $1-x \leq e^{-x}$, $0\leq x \leq 1$, and
 (iii)  $\binom{n}{k} > (\frac{n}{k})^k$, $0< k <  n$. Therefore,

\begin{align}
\label{ineq:n}
{\cal{E}}(C_I)= &\binom{n}{\lceil \frac{kn}{(k+1)2^t} \rceil}
(1-p)^{\binom{\lceil\frac{kn}{(k+1)2^t} \rceil}{k}}& 
\text{(from the definition of expectation)} \nonumber\\
\leq & \binom{n}{\lceil \frac{kn}{(k+1)2^t} \rceil}
(1-p)^{\binom{\frac{kn}{(k+1)2^t}}{k}} &\nonumber\\
< & 2^n e^{-p \big(\frac{\frac{kn}{(k+1)2^t}}{k}\big)^k} & 
\text{(using (i), (ii) and (iii))}\nonumber\\
= & 2^n e^{-p \big(\frac{n}{(k+1)2^t}\big)^k}&
\end{align}


For a sufficiently large value of $n$ that depends on both 
$k$ and $t$, we can show that 
$2^n e^{-p \big(\frac{n}{(k+1)2^t}\big)^k}$ is strictly less than
$\frac{1}{2}$
(see Appendix \ref{app:clique} for details). 
Now, using Markov's inequality we know that 
$P(C_I(G) \geq 1)\leq {\mathcal{E}}(C_I(G)) < \frac{1}{2}$ for 
sufficiently large values of $n$.

Next, we need to show that the probability 
of existence of complete subhypergraphs of size $k+1$ is small.
Let $W$ be some subset of $k+1$ vertices in $G$. 
Then, $W$ is a complete subgraph with probability $p^{k+1}$. 
The expectation ${\cal{E}}(C_{\omega}(G))$ is given by
${\cal{E}}(C_{\omega})= \binom{n}{k+1} p^{k+1}
< \frac{n^{k+1}}{(k+1)!}\cdot n^{-\frac{k}{k+1} (k+1)}
=\frac{n}{(k+1)!}.
$
Again, using Markov's inequality, $P({C_{\omega}(G)} \geq  \frac{n}{k+1}) 
< \frac{1}{k!} $.
Since $P(C_I(G) \geq 1)+P(C_{\omega}(G) \geq \frac{n}{k+1}) < 1$, 
there exists some graph $G$ such that $C_I(G) = 0$, as well as $C_{\omega}(G) 
< \frac{n}{k+1}$. From each of the at most
$(k+1)$-sized 
complete subhypergraphs, we can remove one vertex each 
(and all the hyperedges incident on it), to 
eliminate that complete subhypergraph of size $k+1$. 
Note that the removal of such vertices and 
corresponding hyperedges cannot increase the size of 
any independent set in $G$.
This transformation results in a subhypergraph 
$G'(V',E')$ of $G$ such that $G'$ does not contain 
any $(k+1)$-sized complete subhypergraphs, and $|V'|\geq n-\frac{n}{k+1}=
\frac{kn}{k+1}$. Moreover, $G'$ does not contain any independent 
set of size $\lceil\frac{kn}{(k+1)2^t} 
\rceil=\lceil\frac{|V'|}{2^t} \rceil$, and therefore $\chi^c(G') > t$.
So, this hypergraph $G'$ has $\omega(G')=k$ but $\chi^c(G')>t$, for any $t>1$,
establishing our claim in Theorem \ref{thm:clique}.

\section{Bicoloring covers for sparse hypergraphs}
\label{sec:probabilistic}

A $k$-uniform hyperedge is rendered
monochromatic with probability 
$\frac{2}{2^{k}}=2^{-(k-1)}$ 
in a random bicoloring of its $k$ vertices. 
If the number of hyperedges $|E|$ in a $k$-uniform hypergraph is
at most $2^{k-2}$, then the probability that some hyperedge is 
rendered monochromatic in a
random bicoloring
is at most $\frac{2^{k-2}}{2^{k-1}} < \frac{1}{2}$. 
Since the probability that none of the hyperedges is monochromatic is at 
least $\frac{1}{2}$, we have the following algorithm for 
computing a bicoloring for $G$. 
Randomly and independently color the vertices of $G$ 
and check whether all the hyperedges are properly bicolored. 
If some hyperedge is rendered monochromatic in the random bicoloring then
repeat the random bicoloring step. 
We can easily verify that 
the expected number of steps of failure is less than two.
Extending similar arguments, we develop the
following relationship between the
number of hyperedges in a $k$-uniform 
hypergraph and the size of its bicoloring cover.

\begin{theorem}
\label{thm:Cover-I}
A $k$-uniform hypergraph $G(V,E)$ with $|E|\leq 2^{(k-1)x-1} $ has a bicoloring cover of size $x$ that can be computed in expected polynomial time.
\end{theorem}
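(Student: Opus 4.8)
The plan is to generalize the single-bicoloring probabilistic argument that precedes the theorem to $x$ independent bicolorings and then wrap it in a Las Vegas algorithm. First I would sample $x$ bicolorings $X_1,\ldots,X_x$ independently, where in each bicoloring every vertex receives color $0$ or $1$ uniformly and independently of all other choices. The key observation is that a fixed $k$-uniform hyperedge $e$ fails to be covered by the resulting set precisely when $e$ is monochromatic in every one of the $x$ bicolorings.

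For a single bicoloring, $e$ is monochromatic with probability $2\cdot 2^{-k}=2^{-(k-1)}$, exactly as in the preamble to the theorem. Since the $x$ bicolorings are chosen independently, the probability that $e$ remains monochromatic in all of them is $\left(2^{-(k-1)}\right)^x = 2^{-(k-1)x}$. Applying a union bound over all $|E|$ hyperedges, the probability that some hyperedge is left uncovered is at most $|E|\cdot 2^{-(k-1)x}$. Under the hypothesis $|E|\leq 2^{(k-1)x-1}$ this is at most $2^{(k-1)x-1}\cdot 2^{-(k-1)x}=\tfrac{1}{2}$, so a random choice of $x$ bicolorings is a valid bicoloring cover of size $x$ with probability at least $\tfrac{1}{2}$, establishing existence.

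For the algorithmic claim I would convert this into a Las Vegas procedure: draw $x$ random bicolorings, then verify the candidate by scanning every hyperedge against every bicoloring, which takes $O(|E|\cdot k\cdot x)$ time; if every hyperedge is non-monochromatic in at least one bicoloring, output the cover, and otherwise resample and repeat. Because each independent trial succeeds with probability at least $\tfrac{1}{2}$, the number of trials is dominated by a geometric random variable with expectation at most $2$, so the total expected running time is polynomial in the input size.

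The argument has no serious obstacle; it is a clean first-moment/union-bound computation. The one point demanding care is the algorithmic wrapper: I must confirm that the per-trial success probability is bounded below by an absolute constant (here $\tfrac{1}{2}$), \emph{independent} of $n$, $k$, and $x$, since this is precisely what guarantees that the expected number of resampling rounds is $O(1)$ and hence that the overall expected time is polynomial. This constant lower bound is exactly what the hypothesis $|E|\leq 2^{(k-1)x-1}$ secures.
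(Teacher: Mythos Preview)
Your proposal is correct and follows essentially the same approach as the paper: independent random bicolorings, the monochromaticity probability $2^{-(k-1)x}$ per hyperedge, a union bound yielding success probability at least $\tfrac12$, and a Las Vegas wrapper with an expected constant number of trials. If anything, your write-up is slightly more careful than the paper's, which asserts the failure probability is \emph{strictly} less than $\tfrac12$ whereas the union bound only gives $\leq \tfrac12$ at the boundary $|E|=2^{(k-1)x-1}$; your version handles this correctly.
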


%
%

\begin{proof}
Since all the $x$ bicolorings are random and independent, 
the probability that a specific hyperedge becomes monochromatic 
in each of the $x$ 
bicolorings is $(\frac{1}{2^{k-1}})^x$. 
Choosing the number of hyperedges $|E|\leq 2^{(k-1)x-1}$, 
the probability that some hyperedge becomes monochromatic in each of 
the $x$ bicolorings is strictly less than $\frac{1}{2}$. 
In other words, the 
probability that each of the $|E|$ hyperedges is 
non-monochromatic in one or more bicolorings is at least $\frac{1}{2}$. 
Consequently, the hypergraph has a cover of size $x$ and 
that can be computed by random coloring of vertices in 
expected two iterations. \qed
\end{proof}

Since Theorem \ref{thm:Cover-I} gives only a 
sufficiency condition for a $k$-uniform 
hypergraph to have a bicoloring cover of size 
$x$, it is interesting to estimate the smallest integer $m$  
such that there is no 
bicoloring cover with $x$ bicolorings 
for some $k$-uniform hypergraph with $m$ hyperedges. 
This number $m$ 
is a measure of the
tightness of the sufficiency condition given by 
Theorem \ref{thm:Cover-I}.
We define $m(k,x)$ as the smallest integer
such that there exists a $k$-uniform hypergraph $G$ with $m(k,x)$
hyperedges, 
which does not have a bicoloring cover of size $x$.
%
%
If the number of hyperedges in the $k$-uniform hypergraph is less 
than $m(k,x)$, then it certainly has a bicoloring cover of size $x$. 
In other words, for any hypergraph of size less than $m(k,x)$, 
there exist at least one set of $x$ bicolorings of 
vertices that properly bicolors every hyperedge 
of the hypergraph. However, if the number of hyperedges 
is greater than or equal $m(k,x)$, 
then we cannot guarantee the existence of a 
bicoloring cover of size $x$ for the hypergraph. 
Alternatively, there exist at least one 
$k$-uniform hypergraph of size $m(k,x)$ such that no 
set of $x$ bicolorings can properly bicolor every 
hyperedge in the hypergraph. 
From Theorem \ref{thm:Cover-I}, 
it is obvious that $m(k,x) > 2^{(k-1)x-1}$. 

We note that $8< m(2,3,2) \leq 84$. 
The lower bound is 
given by the proper substitution in 
Theorem \ref{thm:Cover-I}. The 
upper bound is obtained from a 
$K_9^3$ which does not have a 
bicoloring cover of size 2 (see Theorem \ref{thm:complete}).
Computing the exact values of $m(k,x)$ for different 
values of $x$ by brute-force is difficult. 
In order to prove that $m(k,x)=a$, 
for some fixed $x$, $k$ and $a$, 
one may find out at least one 
$k$-uniform hypergraph with $a$ hyperedges that 
does not have a bicoloring cover of size $x$. 
So, one may check every hypergraph with $a$ hyperedges 
for a bicoloring cover of size $x$, which is computationally expensive. 
In order to estimate an upper bound for $m(k,x)$, 
we 
consider a (i) $k$-uniform hypergraph with $k+k^2$ vertices,  
(ii) fix a set $C1$ of $x$ independent bicolorings of the vertex set $V$, and
(iii)
pick  $m$ $k$-uniform hyperedges uniformly, independently and randomly. 
Let the probability that a randomly picked hyperedge $e$ becomes monochromatic in 
a random bicoloring be at least $p$; below we estimate a lower bound for 
$p$ considering random bicolorings. Then, 
the probability that $e$ becomes monochromatic in each of 
the $x$ bicolorings is at least  $p^x$. 
So, the probability that $e$ becomes non-monochromatic in at least 
one of the $x$ bicolorings is at most $1-p^x$.
Since, we are choosing $m$ hyperedges independently, randomly and uniformly,
the probability that every hyperedge from the $m$ 
chosen hyperedges becomes non-monochromatic in at least one of the 
$x$ of bicolorings in $C1$ is at most $(1-p^x)^m$.
Since there are $2^{nx}$ ways to perform the $x$ independent bicolorings,
the probability each of the $m$ chosen hyperedges 
becomes non-monochromatic in at least 
one set of $x$ colorings 
is at most $ 2^{nx} (1-p^x)^m$.
So, if $f(n,x,p,m)=2^{nx} (1-p^x)^m < 1$, then there exists at 
least one set of $m$ hyperedges that cannot be 
covered by any set of $x$ bicolorings. 

Now we estimate a lower bound for $p$, the probability with which 
any randomly picked hyperedge becomes monochromatic in any bicoloring of 
the $k+k^2$ vertices in $V$.
Any bicoloring colors some vertices with color 1 and rest with color 2. Let 
the set of color 1 vertices be of size $a$. Then, the total number of 
monochromatic hyperedges is $\binom{a}{k}+\binom{n-a}{k}$. This 
sum is minimized at $a=\lceil\frac{n}{2}\rceil$.
%
Therefore, the probability that a particular random hyperedge $e$ 
is monochromatic in 
one bicoloring is at least 
$2\frac{\binom{\frac{n}{2}}{k}}{\binom{n}{k}} = 
2*\frac{\frac{n}{2}(\frac{n}{2}-1)...(\frac{n}{2}-k+1)}{n(n-1)...(n-k+1)}> \frac{1}{2^{k-1}}(\frac{n-2k}{n-k})^k =\frac{1}{2^{k-1}}(1-\frac{1}{k})^k$ (since $\frac{\frac{n}{2}}{n}>\frac{\frac{n}{2}-1}{n-1}>...>\frac{\frac{n}{2}-k+1}{n-k+1}>\frac{\frac{n}{2}-k}{n-k}$). 
Let $p= \frac{1}{2^{k-1}}(1-\frac{1}{k})^k$. 
For $k \geq 2$, $(1-\frac{1}{k})^k \geq \frac{1}{4}$. 
We find that 
the expression $f(n,x,p,m)$ is upper bounded by
$2^{nx}(1-(\frac{1}{2^{k+1}})^x)^m
 < 2^{nx}e^{-\frac{m}{2^{(k+1)x}}}$.
The last expression becomes unity when $m$ is set to $2^{(k+1)x} \cdot n \cdot x \ln 2$.
This implies that there exists a hypergraph with $m$ hyperedges such that at 
least one of the $m$ hyperedges remains monochromatic in each set of $x$ bicolorings.
Since $n=k^2+k$, we have 
$2^{(k+1)x} \cdot n \cdot x \ln 2< 2^{(k+1)x} 
\cdot 2k^2 \cdot x \cdot 2=x k^2 2^{(k+1)x+2}$. We state our result in 
the following theorem.

\begin{theorem}
$2^{(k-1)x-1} < m(k,x) \leq x \cdot k^2 \cdot 2^{(k+1)x+2}$.
\end{theorem}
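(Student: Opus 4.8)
The statement splits into a lower and an upper bound. The lower bound $2^{(k-1)x-1} < m(k,x)$ I would obtain immediately from Theorem \ref{thm:Cover-I}: that result guarantees that every $k$-uniform hypergraph with at most $2^{(k-1)x-1}$ hyperedges admits a bicoloring cover of size $x$. Hence any hypergraph witnessing the \emph{failure} of such a cover must carry strictly more than $2^{(k-1)x-1}$ hyperedges, which forces $m(k,x) > 2^{(k-1)x-1}$ by the minimality in the definition of $m(k,x)$.

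For the upper bound my plan is a probabilistic existence argument over a fixed ground set. I would fix $n = k + k^2$ vertices and sample $m$ hyperedges independently and uniformly at random from all $k$-subsets of these $n$ vertices. The first quantitative input is a lower bound on the probability $p$ that a single randomly chosen hyperedge is monochromatic under one fixed bicoloring. Since a bicoloring that places $a$ vertices in one color class produces $\binom{a}{k}+\binom{n-a}{k}$ monochromatic $k$-sets, and this count is minimized at the balanced split $a=\lceil n/2\rceil$, I get $p \geq 2\binom{n/2}{k}/\binom{n}{k} \geq \frac{1}{2^{k-1}}(1-\frac{1}{k})^k$, and then $p \geq 2^{-(k+1)}$ using $(1-\frac{1}{k})^k \geq \frac{1}{4}$ for $k \geq 2$.

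Next I would bound the failure probability by a union bound over bicoloring tuples. For one fixed tuple of $x$ independent bicolorings, the chance that a random hyperedge is monochromatic in all $x$ is at least $p^x$, so the chance it is properly bicolored in at least one is at most $1-p^x$; by independence of the $m$ sampled hyperedges, the probability that this single tuple covers the whole random hypergraph is at most $(1-p^x)^m$. Summing over all $2^{nx}$ tuples of $x$ bicolorings gives $f(n,x,p,m) = 2^{nx}(1-p^x)^m$. If $f < 1$, then with positive probability the random hypergraph is covered by \emph{no} tuple of $x$ bicolorings, so there exists an $m$-edge $k$-uniform hypergraph with no bicoloring cover of size $x$, establishing $m(k,x) \leq m$.

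Finally I would solve $f < 1$ for $m$. Using $p \geq 2^{-(k+1)}$ and $1 - y \leq e^{-y}$, one gets $f \leq 2^{nx}e^{-m/2^{(k+1)x}}$, which drops below $1$ as soon as $m \geq 2^{(k+1)x}\, n\, x \ln 2$. Substituting $n = k^2 + k < 2k^2$ and $\ln 2 < 1$ then yields the claimed $m < x\, k^2\, 2^{(k+1)x+2}$. The main obstacle is obtaining the clean uniform lower bound on $p$: one must justify that the balanced coloring minimizes the monochromatic count and then control the ratio $\binom{n/2}{k}/\binom{n}{k}$ independently of $k$. The specific choice $n = k + k^2$ is precisely what makes the product telescope into the factor $(1-\frac{1}{k})^k$, pinning the constant at $2^{-(k+1)}$ and thereby driving the exponent $(k+1)x$ in the final bound.
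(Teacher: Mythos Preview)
Your argument is essentially identical to the paper's: the lower bound via Theorem~\ref{thm:Cover-I}, the choice $n=k^{2}+k$, the random-hyperedge model with the union bound $f=2^{nx}(1-p^{x})^{m}$, the balanced-split estimate $p\ge 2^{-(k+1)}$ obtained from $\frac{n-2k}{n-k}=1-\tfrac{1}{k}$, and the final arithmetic all match the paper line for line.

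One point worth flagging (and it is present verbatim in the paper's argument as well): the step ``for a fixed tuple of $x$ bicolorings, a uniformly random $k$-set is monochromatic in all of them with probability at least $p^{x}$'' is asserted rather than proved. The bound $p$ is indeed a uniform lower bound \emph{per bicoloring}, but for a single random $e$ the events ``$e$ is monochromatic under $X_i$'' across different fixed $X_i$ are not independent and can be negatively correlated. In particular, as soon as $x\ge \chi^{c}(K^{k}_{k^{2}+k})=\lceil\log\tfrac{k(k+1)}{k-1}\rceil$ there exist tuples for which \emph{no} $k$-set is simultaneously monochromatic, so the probability is $0<p^{x}$, and those terms alone force the union sum $\sum_{T}(1-q(T))^{m}\ge 1$. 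Your phrase ``$x$ independent bicolorings'' hints that you felt an independence hypothesis was needed, but with the tuple fixed (as it must be for the union bound over $2^{nx}$ tuples) no such independence is available. You have faithfully reproduced the paper's reasoning, so this is not an error you introduced; but a self-contained version would need either a direct lower bound on $\sum_{j}\binom{n_j}{k}/\binom{n}{k}$ for the joint color classes (which forces $n$ to grow with $x$) or an appeal to $K^{k}_{(k-1)2^{x}+1}$ for the upper bound.
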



\section{Computing bicoloring covers for hypergraphs with bounded dependency}
\label{sec:deprel}



The {\it dependency} of a hyperedge $e$ in a $k$-uniform 
hypergraph $G(V,E)$,
denoted by $d(G,e)$ 
is the number of hyperedges in the set $E$ 
with which $e$ shares at least one vertex. 
The {\it dependency of a hypergraph} $d(G)$ or simply $d$, denotes the 
maximum dependency of any hyperedge in the hypergraph $G$.
Lov\'{a}sz local lemma  
\cite{loverd1975,Motwani:1995:RA:211390,Moser:2010:CPG:1667053.1667060} 
ensures the existence of a proper bicoloring 
for any $k$-uniform hypergraph provided the 
dependency of the hypergraph 
is upper bounded by $\frac{2^{k-1}}{e}-1$. 
Furthermore, the constructive
version of 
Lov\'{a}sz local lemma by Moser and Tardos 
\cite{Moser:2010:CPG:1667053.1667060} 
enables the computation of a bicoloring of a $k$-uniform 
hypergraph with dependency at most   
$\frac{2^{k-1}}{e}-1$ by a randomized algorithm. 
Chandrasekaran et.al. \cite{chand2013} proposed a derandomization  for local lemma  that computes a bicoloring
in polynomial time.
In what follows, we  
use similar techniques for establishing permissible bounds on the 
dependency of a hypergraph as a function of the size of its 
desired bicoloring cover, and for computing such bicoloring covers. 
The Kolmogorov complexity approach for Lov\'{a}sz local lemma 
leads to a method that can bicolor a hypergraph whose dependency is 
at most $2^k/8$ (see \cite{LF2009}). 


%

Let $P$ be a finite set of mutually independent random variables
in a probability space. 
We consider a finite set $\cal A$ of events, where each event 
$A\in \cal A$ is determined by a subset $S(A) \subseteq P$ of the variables
in $P$.
We say that an event $A\in \cal A$ is violated if 
an evaluation of variables in $S(A)$ results in the 
occurrence of $A$. We have the following lemma due to 
Moser and Tardos \cite{Moser:2010:CPG:1667053.1667060}.

\begin{lemma}
\label{cover_lemma_2}
\cite{Moser:2010:CPG:1667053.1667060}
Let $P$ be a finite set of mutually independent random variables
in a probability space. 
Let ${\cal A}$ be a finite set of events determined by these variables.
For any $A \in {\cal A}$, let $\Gamma(A)$ 
denote the set of all the events in ${\cal A}$ that 
depend on $A$.
If there exists an assignment of reals 
$x : {\cal A} \rightarrow (0, 1)$ such that
$\forall A \in {\cal A} : Pr[A] \leq x(A) \prod_{B \in \Gamma(A)}(1-x(B))$,
then there exists an assignment of values to the variables in
$P$ not violating any of
the events in ${\cal A}$. 
Moreover the Moser-Tardos Sequential Solver algorithm 
\cite{Moser:2010:CPG:1667053.1667060} resamples
an event $A \in {\cal A}$ at most an expected 
$x(A)/(1-x(A))$ times before it finds such
an evaluation. Thus, the expected total number 
of resampling steps is at most
$\sum_{A \in {\cal A}}\frac{x(A)}{1-x(A)}$.

\end{lemma}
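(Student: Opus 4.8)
The plan is to prove this as the constructive form of the Lov\'{a}sz Local Lemma, following the witness-tree (entropy-compression) method of Moser and Tardos. The object to analyze is the \emph{Moser--Tardos Sequential Solver}: sample every variable in $P$ once, then repeatedly select some currently violated event $A \in {\cal A}$, resample exactly the variables in $S(A)$, and halt when no event is violated. It suffices to show that this process terminates in finite expected time, since a terminating execution outputs an assignment violating no event, establishing both the existence claim and the resampling bounds simultaneously.

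First I would record the \emph{execution log} $C = (C_1, C_2, \ldots)$, where $C_t$ is the event resampled at step $t$, and for each $t$ build a \emph{witness tree} $\tau_C(t)$ rooted at $C_t$: processing the log backward from step $t-1$, attach each earlier event $C_s$ as a child of the deepest existing node whose associated variable set meets $S(C_s)$ (and as a new root-child if none does). The key structural fact is that each such tree is \emph{proper} --- nodes at a common depth have pairwise disjoint variable supports --- and that distinct log positions yield distinct nodes. The heart of the argument is the \textbf{Witness Tree Lemma}: for any fixed proper tree $\tau$ with root labeled $A$, the probability that $\tau$ arises as some $\tau_C(t)$ is at most $\prod_{v \in V(\tau)} \Pr[A_{[v]}]$, where $A_{[v]}$ is the event labeling node $v$. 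To prove it I would couple the algorithm with a precomputed \emph{resampling table} assigning each variable an independent infinite stack of samples, consumed in order; occurrence of $\tau$ then forces, for each node $v$, the event $A_{[v]}$ to hold on a specific level of each table cell it uses. Because the tree is proper, distinct nodes consume distinct cells, so these forced events are mutually independent and their probabilities multiply.

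With this in hand, the expected number $N_A$ of resamplings of a fixed event $A$ satisfies
\[
E[N_A] \;\le\; \sum_{\tau:\, \mathrm{root}(\tau)=A} \prod_{v \in V(\tau)} \Pr[A_{[v]}] \;\le\; \sum_{\tau:\, \mathrm{root}(\tau)=A} \prod_{v \in V(\tau)} x'(A_{[v]}),
\]
where $x'(B) = x(B)\prod_{C \in \Gamma(B)}(1-x(C))$ and the second inequality is exactly the hypothesis $\Pr[A] \le x'(A)$. I would then interpret the final sum through a Galton--Watson process in which a node labeled $B$ independently spawns, for each $C$ in the inclusive neighborhood $\{B\}\cup\Gamma(B)$, a child labeled $C$ with probability $x(C)$. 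A direct computation shows this process produces a given proper tree $\tau$ rooted at $A$ with probability $p_\tau = \frac{1-x(A)}{x(A)}\prod_{v \in V(\tau)} x'(A_{[v]})$; since the $p_\tau$ form a subprobability distribution, $\sum_\tau p_\tau \le 1$, whence $\sum_\tau \prod_v x'(A_{[v]}) \le \frac{x(A)}{1-x(A)}$. Thus $E[N_A] \le \frac{x(A)}{1-x(A)}$, and summing over $A$ gives the total expected resampling bound $\sum_{A \in {\cal A}} \frac{x(A)}{1-x(A)}$.

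The main obstacle is the Witness Tree Lemma, and within it the bookkeeping of the resampling-table coupling: one must argue rigorously that the tree-building rule is injective enough that a given proper tree pins down \emph{disjoint} table cells, so that independence --- and therefore multiplicativity of probabilities --- genuinely holds. The properness of the witness trees (disjoint supports at equal depth) is the delicate combinatorial ingredient that makes this coupling work; once it is established, the Galton--Watson identity for $p_\tau$ and the final summation over events are routine.
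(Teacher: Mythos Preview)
The paper does not prove this lemma at all: it is stated with a citation to Moser and Tardos \cite{Moser:2010:CPG:1667053.1667060} and used as a black box. Your proposal is a faithful outline of the original Moser--Tardos witness-tree argument, which is the standard proof of exactly this statement, so there is nothing to compare against here and your sketch is appropriate.
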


In particular, if $\forall A \in {\cal A}$, $x(A)$ is set 
to $\frac{1}{d+1}$ and $P[A] \leq p$, then the premise of 
Lemma \ref{cover_lemma_2} reduces to $ep(d+1) \leq 1$, 
where $d$ is the maximum {\it dependency} 
$\max_{A \in {\cal A}}|\Gamma(A)|$ 
of any 
event $A$ in 
$\cal A$.
So, from 
Lemma \ref{cover_lemma_2} with suitable substitutions, 
we get the following corollary.

\begin{corollary}\label{cor:lll}

Let $P$ be a finite set of mutually independent random variables
in a probability space. Let ${\cal A}$ be a 
finite set of events determined by these variables,  where $m=|{\cal A}|$.
For any $A \in {\cal A}$, let $\Gamma(A)$ denote the set of all the events in ${\cal A}$ that depend on $A$.
Let $d = \max_{A \in {\cal A}}|\Gamma(A)|$.
If 
$\forall A \in {\cal A} :P[A] \leq  p  \text{ and } ep(d+1) \leq 1$, 
then
an assignment of the variables not violating any of
the events in ${\cal A}$ can be computed using expected 
$\frac{1}{d}$ resamplings per event and
expected $\frac{m}{d}$ resamplings in total.

\end{corollary}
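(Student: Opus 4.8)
The plan is to obtain Corollary~\ref{cor:lll} as a direct specialization of Lemma~\ref{cover_lemma_2}, using the uniform assignment $x(A)=\frac{1}{d+1}$ for every event $A\in{\cal A}$. With this single choice, every quantity appearing in the premise of the lemma becomes an explicit function of $d$ alone, so the entire argument collapses to verifying one inequality and then reading off the resampling bounds.

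First I would check that the hypothesis $ep(d+1)\le 1$ forces the premise $Pr[A]\le x(A)\prod_{B\in\Gamma(A)}(1-x(B))$ of Lemma~\ref{cover_lemma_2}. Substituting $x(B)=\frac{1}{d+1}$ gives $\prod_{B\in\Gamma(A)}(1-x(B))=\left(\frac{d}{d+1}\right)^{|\Gamma(A)|}$. Since $|\Gamma(A)|\le d$ and the base $\frac{d}{d+1}$ lies strictly below $1$, a smaller exponent only enlarges the product, so this term is at least $\left(\frac{d}{d+1}\right)^{d}$, and hence the right-hand side is at least $\frac{1}{d+1}\left(\frac{d}{d+1}\right)^{d}$.

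The key step is the standard estimate $\left(1-\frac{1}{d+1}\right)^{d}\ge \frac{1}{e}$ (equivalently $(1-1/n)^{n-1}\ge 1/e$ taken at $n=d+1$), which turns the bound above into $\frac{1}{e(d+1)}$. As $ep(d+1)\le 1$ is exactly the statement $p\le\frac{1}{e(d+1)}$, and $Pr[A]\le p$ by assumption, we obtain $Pr[A]\le\frac{1}{e(d+1)}\le x(A)\prod_{B\in\Gamma(A)}(1-x(B))$, so the premise of Lemma~\ref{cover_lemma_2} is met and the lemma produces an assignment of the variables in $P$ violating none of the events in ${\cal A}$.

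Finally, I would extract the resampling counts from the quantitative half of Lemma~\ref{cover_lemma_2}. With $x(A)=\frac{1}{d+1}$ one computes $\frac{x(A)}{1-x(A)}=\frac{1}{d}$, so the Moser--Tardos solver resamples each event an expected $\frac{1}{d}$ times, and summing over all $m=|{\cal A}|$ events yields an expected total of $\sum_{A\in{\cal A}}\frac{x(A)}{1-x(A)}=\frac{m}{d}$ resampling steps. Since every inequality here is elementary, I do not anticipate a genuine obstacle; the only points demanding care are the monotonicity argument that replaces $|\Gamma(A)|$ by its worst-case value $d$ in the exponent (which relies on $\frac{d}{d+1}<1$) and the invocation of the $1/e$ bound, both of which are routine.
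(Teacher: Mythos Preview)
Your argument is correct and follows exactly the approach the paper takes: it sets $x(A)=\frac{1}{d+1}$ uniformly, observes that the premise of Lemma~\ref{cover_lemma_2} then reduces to $ep(d+1)\le 1$, and reads off the per-event and total expected resampling counts $\frac{x(A)}{1-x(A)}=\frac{1}{d}$ and $\frac{m}{d}$. Your write-up simply makes explicit the monotonicity in the exponent and the $(1-\tfrac{1}{d+1})^d\ge \tfrac{1}{e}$ estimate that the paper leaves implicit.
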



\begin{algorithm}[H]
\SetAlgoRefName{$MTC$}
 \KwData{$k$-uniform hypergraph $G(V,E)$ with $d\leq \frac{2^{|X|(k-1)}}{e}-1$}
 \KwResult{Set $X$ of bicolorings of size $|X|$ }
\For{$v \in V$}{
	\For{$i \in \{1,...,|X|\}$}{
			${r_v}^i \leftarrow$ a random evaluation of $v$ in $i^{th}$ bicoloring of $X$\;
		}
}
\While{$\exists A_i \in \mathcal{A}$: $A_i$ happens i.e., every bicoloring in $X$ renders $E_i$ monochromatic}{
		Pick an arbitrary violated event $A_i \in \mathcal{A}$\;
		\For{$ v \in E_i$}{
			\For{$i \in \{1,...,|X|\}$}{
					${r_v}^i \leftarrow$ a random evaluation of $v$ in $i^{th}$ bicoloring of $X$\;
				}
		}
}
\caption{Randomized algorithm for computing a bicoloring cover}
\label{algo:admostar}
\end{algorithm}

In what follows, we use Corollary \ref{cor:lll} and 
an adaptation of the Moser-Tardos algorithm which we 
call $MTC$, 
to compute a bicoloring cover $X$ of $x$ bicolorings, 
for a $k$-uniform hypergraph $G(V,E)$. 
Let the event $A_i$ correspond to the hyperedge $E_i \in E$ becoming 
monochromatic in each of the $x$ random and 
independent bicolorings. The probability $p(A_i)$ is 
at most $\big(\frac{1}{2^{k-1}}\big)^{x}$.
So, using Corollary \ref{cor:lll}, the maximum allowable 
dependency $d$ of the hypergraph $G$ is
$\frac{2^{x(k-1)}}{e}-1$, so that $G$ has a bicoloring cover 
with $x$ bicolorings. 
In order to compute a bicoloring cover for $G$ with $x$
bicolorings, 
where $d(G) \leq \frac{2^{x(k-1)}}{e}-1$, the algorithm $MTC$, repeatedly recolors 
vertices of monochromatic hyperedges, one at a time.
It picks up a monochromatic hyperedge and generates 
$x$ random bits 0/1
for each vertex of the monochromatic hyperedge, one bit for each of the $x$ 
bicolorings. If there are several monochromatic hyperedges 
then $MTC$ picks up any such hyperedge for recoloring 
all its vertices with 
colors 0/1,
for each of the $x$ bicolorings. 
Each such step is called a resampling step,
where one hyperedge gets all its $k$ vertices recolored for each of the $x$ 
bicolorings. 
The correctness of $MTC$ follows from the correctness of the 
Moser-Tardos constructive version of the local lemma; 
the algorithm terminates after generating a bicoloring 
cover with $x$ bicolorings.

Since $MTC$ is a randomized algorithm, it consumes random bits in 
each resampling step. Let $T$ be the total number of resampling steps 
performed. The algorithm $MTC$ 
uses 
$nx+Tkx$ random bits
for computing a bicoloring cover of size $x$,
given $d \leq \frac{2^{x(k-1)}}{e}-1$. 
Here,
$nx$ random bits are
for the initial assignment (one bit per vertex per bicoloring),
$kx$ bits each for each of the $T$ resampling 
steps 
(one bit per vertex of the resampled hyperedge per bicoloring).
We know from Corollary \ref{cor:lll} that the expected number 
of resampling steps is $T=\frac{m}{d}$.
So, the expected number of random bits 
used by the algorithm is  $nx+kx\frac{m}{d}$.
Since $d \leq  \frac{2^{x(k-1)}}{e}-1$, we have 
$x \geq \frac{1}{k-1} \log (e(d+1))$.
Therefore, the expected number of random bits used 
by the algorithm is at least 
$\frac{1}{k-1} \log (e(d+1))(n+k\frac{m}{d})$. 
We summarize these results as the following theorem.

\begin{theorem}\label{thm:genlocallemma}
Let $G(V,E)$ be a $k$-uniform hypergraph, $n=|V|$, $m=|E|$.
Let the dependency of the hypergraph $d(G)$ 
be upper bounded by $\frac{2^{x(k-1)}}{e}-1$, for some $x \in \mathcal{N}$.
Then, there exists a bicoloring cover of size $x$, 
which can be computed by a randomized algorithm 
using $\frac{1}{d}$ expected resamplings per hyperedge and 
$\frac{m}{d}$ resamplings in total, 
using expected $nx+kx\frac{m}{d}$ random bits.
\end{theorem}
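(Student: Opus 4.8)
The plan is to assemble Theorem \ref{thm:genlocallemma} directly from Corollary \ref{cor:lll} by instantiating the abstract local-lemma framework with the concrete probability space of the $MTC$ algorithm. The random variables in $P$ will be the $nx$ color bits (one bit for each of the $n$ vertices in each of the $x$ bicolorings), and the event set $\cal A$ will consist of one event $A_i$ per hyperedge $E_i\in E$, where $A_i$ is the event that $E_i$ is rendered monochromatic in \emph{every} one of the $x$ bicolorings. The first step is to bound $P[A_i]$: a single $k$-uniform hyperedge is monochromatic under one random bicoloring with probability $2^{-(k-1)}$, and since the $x$ bicolorings use independent bits, $P[A_i]\le (2^{-(k-1)})^x = 2^{-x(k-1)}$, so we may take $p = 2^{-x(k-1)}$.

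The second step is to verify the dependency hypothesis of Corollary \ref{cor:lll}. Event $A_i$ depends only on the color bits of the vertices in $E_i$, so $A_i$ and $A_j$ are independent unless $E_i$ and $E_j$ share a vertex; hence $|\Gamma(A_i)|\le d(G)=d$, matching the corollary's dependency parameter. Substituting $p=2^{-x(k-1)}$, the premise $ep(d+1)\le 1$ becomes $e(d+1)\le 2^{x(k-1)}$, i.e. $d\le \frac{2^{x(k-1)}}{e}-1$, which is exactly the hypothesis we are given. Therefore Corollary \ref{cor:lll} applies verbatim, yielding an assignment of the $nx$ color bits under which no $A_i$ occurs; by definition of $A_i$, every hyperedge is non-monochromatic in at least one of the $x$ bicolorings, so the resulting set $X$ is a bicoloring cover of size $x$.

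The third step transfers the resampling and random-bit bounds. Corollary \ref{cor:lll} guarantees the Moser--Tardos solver (here realized as algorithm $MTC$) terminates after an expected $\frac{1}{d}$ resamplings per event and $\frac{m}{d}$ in total. For the bit count I would simply tally consumption: the initialization loop draws one bit per vertex per bicoloring, i.e. $nx$ bits, and each resampling step recolors the $k$ vertices of a single violated hyperedge across all $x$ bicolorings, consuming $kx$ bits. With an expected $\frac{m}{d}$ resamplings, linearity of expectation gives an expected total of $nx + kx\cdot\frac{m}{d}$ random bits, completing the theorem.

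I do not expect a genuine obstacle here, since the theorem is essentially a packaging of Corollary \ref{cor:lll} into the hypergraph-bicoloring setting; the only point demanding a little care is the \emph{independence structure} justifying $p=2^{-x(k-1)}$, namely that the $x$ bicolorings use \emph{disjoint} sets of random bits so that the per-bicoloring monochromaticity events for a fixed hyperedge are mutually independent. One should also confirm that $\Gamma(A_i)$ as defined via shared variables coincides with the hypergraph notion of neighboring hyperedges (sharing a vertex), which is immediate because $S(A_i)$ is precisely the bit-variables of $E_i$'s vertices. Once these two identifications are made explicit, the rest is a direct substitution into the already-established corollary.
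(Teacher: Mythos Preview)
Your proposal is correct and follows essentially the same route as the paper: instantiate Corollary \ref{cor:lll} with the $nx$ color bits as the variable set, define $A_i$ as ``$E_i$ monochromatic in all $x$ bicolorings'' so that $p=2^{-x(k-1)}$, identify $\Gamma(A_i)$ with the hyperedge-dependency neighborhood, check that $ep(d+1)\le 1$ reduces to the stated bound on $d$, and then read off the resampling and bit counts exactly as you do. If anything, your write-up is slightly more explicit than the paper's about the independence structure and the identification $S(A_i)=$ bits of vertices in $E_i$.
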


We note that Algorithm $MTC$ can be derandomized in the same manner 
as done for the case of bicolorings in \cite{chand2013}. 
As the dependency of the hypergraph grows, 
the bicoloring cover size guaranteed by the local lemma also 
increases. However, from Theorem \ref{thm:complete}, 
we know that for any $k$-uniform graph $G(V,E)$, $|V|=n$, 
$\chi_c(G)\leq \left\lceil \log({\frac{n}{k-1}})\right\rceil$. 
So, the application of this algorithm is practical for 
the case where it guarantees a cover of size 
of at most $\lceil \log({\frac{n}{k-1}})\rceil$. 
We can find the maximum dependency for which this algorithm is 
useful by simply replacing $x$ in the dependency bound as 
$d\leq \frac{2^{x(k-1)}}{e}-1 \leq \frac{1}{e} {2^{ \log({\frac{n}{k-1}})}}^{(k-1)}-1$,
that is
$d \leq  \frac{1}{e}{(\frac{n}{k-1})}^{(k-1)}-1.$

\section{Concluding remarks}
\label{sec:conclusion}

Bounds for bicoloring cover numbers established in
this paper are supported by algorithms that generate the 
bicoloring covers of the corresponding sizes. 
The algorithms and bounds can be generalized for 
multicolorings, where more than two colors are used.  
In such natural extensions to multicolorings, the 
constraint imposed on every hyperedge 
can be relaxed so that
at least $p \geq 2$ vertices of the hyperedge are 
distinctly colored in at least one of the multicolorings.

Throughout the paper, we have used independent bicolorings 
in our probabilistic analysis. 
Whether the use of mutually dependent bicolorings 
would lead to discovery of better bounds for  bicoloring cover numbers, 
remains an open question.
Computing the exact value or approximating the cover independence number $\gamma(G)$ 
remains an open problem.

{\small
\bibliography{REFERENCES}}
\bibliographystyle{plain}
\small

\appendixtitleon
\appendixtitletocon
\begin{appendix}

\section{Estimation of \texorpdfstring{$n$}{n} for Inequality \ref{ineq:n} in Section \ref{subsec:clique}}
\label{app:clique}

From Section \ref{subsec:clique}, we have, $p=n^{-\frac{k}{k+1}}$.
Using Inequality \ref{ineq:n}, we need to choose a value of $n$ that satisfies the inequality $2^n e^{-p \big(\frac{n}{(k+1)2^t}\big)^k} <  \frac{1}{2}$. We proceed as follows.
\begin{align*}
& 2^n e^{-p \big(\frac{n}{(k+1)2^t}\big)^k} <  \frac{1}{2} \\
\Leftrightarrow & 2^{n+1} < e^{p \big(\frac{n}{(k+1)2^t}\big)^k} \\
\Leftrightarrow &(n+1) \log_e 2 < n^{-\frac{k}{k+1}}\big(\frac{n}{(k+1)2^t}\big)^k \\
\Leftrightarrow &(k+1)^k 2^{tk}  \log_e 2 < \frac{n^{\frac{k^2}{k+1}}}{n+1} 
\end{align*}
This inequality can always be satisfied for a sufficiently large value of $n$
$n > ((k + 1)^k *2^{tk+1} log_e 2)^{(k+1)/(k^2)}$.

\end{appendix}

\end{document}